\crefname{figure}{Figure}{Figure}
\theoremstyle{theorem}
\newtheorem{remark}{Remark}
\DeclareMathAlphabet{\mathpzc}{OT1}{pzc}{m}{it}
\mathchardef\mhyphen="2D 
\newcommand{\Nats}{\mathbb{N}}
\newcommand{\Natsplus}{\Nats^{+}}
\newcommand{\Index}{\mathsf{in}}
\newcommand{\EnumExplore}{\mathsf{DC\mhyphen DPOR}}
\newcommand{\EnumExploreCyclic}{\mathsf{DC\mhyphen DPOR \mhyphen Cyclic}}
\newcommand{\Annotation}{\mathsf{A}}
\newcommand{\Annotationpos}{\Annotation^{+}}
\newcommand{\Annotationneg}{\Annotation^{-}}
\newcommand{\Confl}{\mathsf{Confl}}
\newcommand{\ConflRW}{\mathsf{ConflRW}}
\newcommand{\Obs}{\mathsf{O}}
\newcommand{\Trace}{t}
\newcommand{\TraceInit}{\Trace^{\mathcal{I}}}
\newcommand{\SysAcquires}{\mathcal{L}^A}
\newcommand{\SysReleases}{\mathcal{L}^R}
\newcommand{\SysReads}{\mathcal{R}}
\newcommand{\SysWrites}{\mathcal{W}}
\newcommand{\Read}{r}
\newcommand{\Write}{w}
\newcommand{\Event}{e}
\newcommand{\Init}{\mathsf{Init}}
\newcommand{\HB}[3]{#1\mathsf{\to}_{#2}#3}
\newcommand{\Past}{\mathsf{Past}}
\newcommand{\Future}{\mathsf{Future}}
\newcommand{\APast}{\mathsf{Apast}}
\newcommand{\PS}{\mathsf{PS}}
\newcommand{\Value}{\mathsf{val}}
\newcommand{\Enabled}{\mathsf{enabled}}
\newcommand{\Conv}{\ast}
\newcommand{\Events}[1]{\SysEvents(#1)}
\newcommand{\Reads}[1]{\SysReads(#1)}
\newcommand{\Writes}[1]{\SysWrites(#1)}
\newcommand\pto{\mathrel{\ooalign{\hfil$\mapstochar$\hfil\cr$\to$\cr}}}
\newcommand{\Project}{|}
\newcommand{\Domain}{\mathsf{dom}}
\newcommand{\ValueDomain}{\mathcal{D}}
\newcommand{\Image}{\mathsf{img}}
\newcommand{\SeqTrace}{\tau}
\newcommand{\Realize}{\mathsf{Realize}}
\newcommand{\True}{\mathsf{True}}
\newcommand{\False}{\mathsf{False}}
\newcommand{\System}{\mathcal{P}}
\newcommand{\Process}{p}
\newcommand{\Proc}{\mathsf{proc}}
\newcommand{\Root}{r}
\newcommand{\StateSpace}{\mathcal{S}_{\System}}
\newcommand{\Locals}{\mathcal{V}}
\newcommand{\Globals}{\mathcal{G}}
\newcommand{\Locks}{\mathcal{L}}
\newcommand{\ObservedLocks}{\mathcal{L^O}}
\newcommand{\Class}[2]{[#1]_{#2}}
\newcommand{\Location}{\mathsf{loc}}
\newcommand{\CFG}{\mathsf{CFG}}
\newcommand{\SysEvents}{\mathcal{E}}
\newcommand{\SysEventsInit}{\SysWrites^{\mathcal{I}}}
\newcommand{\State}{s}
\newcommand{\Transition}{\Delta}
\newcommand{\TransSystem}{\mathcal{A}_{\System}}
\newcommand{\Path}{\rightsquigarrow}
\newcommand{\TraceSpace}{\mathcal{T}_{\System}}
\newcommand{\TraceSpaceMax}{\mathcal{T^{\max}}_{\System}}
\newcommand{\Withdraw }{\mathsf{withdraw}}
\newcommand{\NP}{NP}
\newcommand{\NPC}{NP-complete}
\newcommand{\NPH}{NP-hard}
\newcommand{\SATMONOTONE}{MONOTONE ONE-IN-THREE SAT}
\newcommand{\AEA}{ACYCLIC EDGE ADDITION}
\newcommand{\UAEA}{UNIQUE ACYCLIC EDGE ADDITION}
\newcommand{\Weight}{\lambda}
\newcommand{\removelatexerror}{\let\@latex@error\@gobble}
\begin{document}

\title{Data-Centric Dynamic Partial Order Reduction}

\author[M. Chalupa]{Marek Chalupa}
\affiliation{
  \institution{Masaryk University}
  \streetaddress{}
  \city{Brno}
  \postcode{}
  \country{Czech Republic}
}
\email{mchalupa@mail.muni.cz} 

\author[K. Chatterjee]{Krishnendu Chatterjee}
\affiliation{
  \institution{Institute of Science and Technology, Austria}
  \streetaddress{Am Campus 1}
  \city{Klosterneuburg}
  \postcode{3400}
  \country{Austria}
}
\email{krishnendu.chatterjee@ist.ac.at}

\author[A. Pavlogiannis]{Andreas Pavlogiannis}
\affiliation{
  \institution{Institute of Science and Technology, Austria}
  \streetaddress{Am Campus 1}
  \city{Klosterneuburg}
  \postcode{3400}
  \country{Austria}
}
\email{pavlogiannis@ist.ac.at}

\author[N. Sinha]{Nishant Sinha}
\affiliation{
  \institution{Kena Labs}
  \streetaddress{}
  \city{}
  \postcode{}
  \country{India}
}
\email{nishantsinha@acm.org}

\author[K. Vaidya]{Kapil Vaidya}
\affiliation{
  \institution{Indian Institute of Technology, Bombay}
  \streetaddress{IIT Area, Powai}
  \city{Mumbai}
  \postcode{400076}
  \country{India}
}

\begin{abstract}
We present a new dynamic partial-order reduction method for stateless model checking of concurrent programs.
A common approach for exploring program behaviors relies on enumerating 
the traces of the program, without storing the visited states (aka \emph{stateless} exploration).
As the number of distinct traces grows exponentially, dynamic 
partial-order reduction (DPOR) techniques have been successfully used to 
partition the space of traces into equivalence classes (\emph{Mazurkiewicz} 
partitioning), with the goal of exploring only few representative traces from each class.

We introduce a new equivalence on traces under sequential consistency semantics, 
which we call the \emph{observation} equivalence.
Two traces are observationally equivalent if every read event observes 
the same write event in both traces.
While the traditional Mazurkiewicz equivalence is control-centric, our new definition
is data-centric.
We show that our observation equivalence is coarser than the Mazurkiewicz equivalence, 
and in many cases even exponentially coarser.
We devise a DPOR exploration of the trace space,  called \emph{data-centric} DPOR, 
based on the observation equivalence.
\begin{compactenum}
\item For acyclic architectures, our algorithm is guaranteed to explore 
{\em exactly} one representative trace from each observation class, 
while spending polynomial time per class.
Hence, our algorithm is \emph{optimal} wrt the observation equivalence, 
and in several cases explores exponentially fewer traces than \emph{any} 
enumerative method based on the Mazurkiewicz equivalence. 
\item For cyclic architectures, we consider an equivalence between traces 
which is finer than the observation equivalence; but coarser 
than the Mazurkiewicz equivalence, and in some cases is exponentially coarser.
Our data-centric DPOR algorithm remains optimal under this trace equivalence.
\end{compactenum}
Finally, we perform a basic experimental comparison between the 
existing Mazurkiewicz-based DPOR and our data-centric DPOR
on a set of academic benchmarks.
Our results show a significant reduction in both running time and the number of explored equivalence classes.
\end{abstract}

\begin{CCSXML}
<ccs2012>
<concept>
<concept_id>10003752.10003790.10011192</concept_id>
<concept_desc>Theory of computation~Verification by model checking</concept_desc>
<concept_significance>500</concept_significance>
</concept>
<concept>
<concept_id>10011007.10011074.10011099</concept_id>
<concept_desc>Software and its engineering~Software verification and validation</concept_desc>
<concept_significance>500</concept_significance>
</concept>
</ccs2012>
\end{CCSXML}

\ccsdesc[500]{Theory of computation~Verification by model checking}
\ccsdesc[500]{Software and its engineering~Software verification and validation}

\keywords{Partial-order Reduction, Concurrency, Stateless model-checking}

\parskip=0.0\baselineskip \advance\parskip by 0pt plus 0pt
\maketitle
\parskip=0.5\baselineskip \advance\parskip by 0pt plus 2pt

\section{Introduction}\label{sec:intro}

\noindent{\em Stateless model-checking of concurrent programs.}
The verification of concurrent programs is one of the major 
challenges in formal methods. 
Due to the combinatorial explosion on the number of interleavings, 
errors found by testing are hard to reproduce (often called 
{\em Heisenbugs}~\cite{Musuvathi08}), and the problem needs to be addressed by a 
systematic exploration of the state space.
{\em Model checking}~\cite{Clarke00} addresses this issue, however, since model checkers
store a large number of global states, it cannot be applied to realistic
programs.
One solution that is adopted is {\em stateless model checking}~\cite{G96},
which avoids the above problem by exploring the state space without 
explicitly storing the global states.
This is typically achieved by a scheduler, which drives the program execution based on the current interaction between the processes.
Well-known tools such as VeriSoft~\cite{Godefroid97,Godefroid05} and {\sc CHESS}~\cite{Musuvathi07b}
have successfully employed stateless model checking.

\noindent{\em Partial-Order Reduction (POR).}
Even though stateless model-checking addresses the global state space issue,
it still suffers from the combinatorial explosion of the number of interleavings,
which grows exponentially.
While there are many approaches to reduce the number of explored interleavings,
such as, depth-bounding and context bounding~\cite{Lal09,Musuvathi07}, 
the most well-known method is {\em partial order reduction (POR)}~\cite{Clarke99,G96,Peled93}.
The principle of POR is that two interleavings can be regarded as equivalent 
if one can be obtained from the other by swapping adjacent, non-conflicting 
(independent) execution steps. 
The theoretical foundation of POR is the equivalence class of traces induced 
by the {\em Mazurkiewicz trace equivalence}~\cite{Mazurkiewicz87}, and 
POR explores at least one trace from each equivalence class.
POR provides a full coverage of all behaviors that can occur
in any interleaving, even though it explores only a subset of traces.
Moreover, POR is sufficient for checking most of the interesting verification
properties such as safety properties, race freedom, absence of global 
deadlocks, and absence of assertion violations~\cite{G96}.

\noindent{\em  Dynamic Partial-order Reduction (DPOR).}
Dynamic partial-order reduction (DPOR)~\cite{Flanagan05} improves the precision of POR 
by recording actually occurring conflicts during the exploration and using this
information on-the-fly. 
DPOR guarantees the exploration of at least one trace in each
Mazurkiewicz equivalence class when the explored state space is acyclic and
finite, which holds for stateless model checking, as usually the length of executions is bounded~\cite{Flanagan05,Godefroid05,Musuvathi08}.
Recently, an optimal method for DPOR was developed~\cite{Abdulla14}.
We refer to Section~\ref{sec:related} for more detailed references to related work.

\noindent{\em A fundamental limitation.} 
All existing approaches for DPOR are based on the Mazurkiewicz equivalence, i.e.,
they explore at least one (and possibly more) trace from each equivalence
class. 
A basic and fundamental question is whether coarser equivalence classes than
the Mazurkiewicz equivalence can be applied to stateless model checking 
and whether some DPOR-like approach can be developed based on such coarser equivalences. 
We start with a motivating example. 

\subsection{A Minimal Motivating Example}\label{subsec:example}

Consider a concurrent system that consists of two processes and a single global variable $x$ shown in Figure~\ref{fig:m_comparison_intro}.
\begin{figure}[!t]
\centering
\small
\begin{subfigure}[t]{0.15\textwidth}
\begin{align*}
\text{Process}&~\Process_1:\\
\hline\\[-1em]
1.&~\mathsf{write}~x;\\
2.&~\mathsf{read}~x;
\end{align*}
\end{subfigure}
\quad
\begin{subfigure}[t]{0.15\textwidth}
\begin{align*}
\text{Process}&~\Process_2:\\
\hline\\[-1em]
1.&~\mathsf{write}~x;\\
2.&~\mathsf{read}~x;
\end{align*}
\end{subfigure}
\caption{A system of two processes with two events each.}
\label{fig:m_comparison_intro}
\end{figure}

Denote by $\Write_i$ and $\Read_i$ the write and read events to $x$ by process $\Process_i$, respectively.
The system consists of four events which are all pairwise dependent, except for the pair $\Read_1,\Read_2$.
Two traces $\Trace$ and $\Trace'$ are called Mazurkiewicz equivalent, denoted $\Trace \sim_M \Trace'$, if they agree on the order of dependent events.
The traditional DPOR based on the Mazurkiewicz equivalence $\sim_M$ will explore at least one representative trace from every class induced on the trace space by the Mazurkiewicz equivalence.
There exist $\frac{2^3}{2}=4$ possible orderings of dependent events, as there are $2^3$ possible interleavings, but half of those reorder the independent events $\Read_1,\Read_2$, and thus will not be considered.
The traditional DPOR will explore the following four traces.
\begin{align*}
\Trace_1&:~\Write_1, \Read_1, \Write_2, \Read_2\hspace*{1cm}&\Trace_2:~\Write_1,\Write_2, \Read_1, \Read_2\\
\Trace_3&:~\Write_2, \Write_1, \Read_1, \Read_2&\Trace_4:~\Write_2, \Read_2, \Write_1, \Read_1
\end{align*}
Note however that $\Trace_1$ and $\Trace_4$ are state-equivalent,
in the sense that the local states visited by $\Process_1$ and $\Process_2$ are identical in the two traces.
This is because each read event \emph{observes} the same write event in $\Trace_1$ and $\Trace_4$.
In contrast, in every pair of traces among $\Trace_1,\Trace_2,\Trace_3$, there is at least one read event that observes a different write event
in that pair.
This observation makes it natural to consider two traces equivalent if they contain the same read events, and every read event observes the same write event in both traces.
This example illustrates that it is possible to have coarser equivalence than the traditional  Mazurkiewicz equivalence.

\subsection{Our Contributions}
In this work our contributions are as follows.

\noindent{\bf Observation equivalence.}
We introduce a new notion of {\em observation equivalence} 
(Section~\ref{subsec:comparison}), which is intuitively as follows:
An observation function of a trace maps every read event to the write event it
observes under sequentially consistent semantics. 
In contrast to every possible ordering of dependent control locations of 
Mazurkiewicz equivalence, in observation equivalence two traces are 
equivalent if they have the same observation function. 
The observation equivalence has the following properties.
\begin{compactenum}
\item \emph{Soundness.} 
The observation equivalence is sufficient for exploring all local states of each process,
and is thus sufficient for model checking wrt to local properties (similar to Mazurkiewicz equivalence).

\item \emph{Coarser.}
Second, we show that observation equivalence is coarser than Mazurkiewicz equivalence, i.e., 
if two traces are  Mazurkiewicz equivalent, then they are also observation equivalent 
(Section~\ref{subsec:comparison}).

\item \emph{Exponentially coarser.}
Third, we show that observation equivalence can be exponentially more succinct than
Mazurkiewicz equivalence, i.e., we present examples where the ratio of the 
number of equivalence classes between observation and  Mazurkiewicz 
equivalence is exponentially small (Section~\ref{subsec:exponential_succinct}).

\end{compactenum}
In summary, observation equivalence is a sound method which is always coarser,
and in cases, strictly coarser than the fundamental  Mazurkiewicz equivalence.

\noindent{\em Principal difference.}
The principal difference between the Mazurkiewicz and our new observation equivalence
is that while the Mazurkiewicz equivalence is {\em control-centric}, observation equivalence is {\em data-centric}.
The data-centric approach takes into account read-write and memory consistency restrictions as opposed to the
event-dependency relation of the Mazurkiewicz equivalence.

\noindent{\bf Data-centric DPOR.}
We devise a DPOR exploration of the trace space,  called \emph{data-centric} DPOR, 
based on the observation equivalence.
Our DPOR algorithm is based on a notion of {\em annotations}, which are 
intended observation functions (see Section~\ref{sec:annotations}). 
The basic computational problem is, given an annotation, decide whether there exists 
a trace which realizes the annotation.
The complexity of the problem depends on the communication graph of the system, called the architecture.
Intuitively, the nodes of the architecture represent the processes of the concurrent system,
and there is an (undirected) edge between two nodes if the respective processes access a common shared variable.
We show that the computational problem is NP-complete in general, but for the 
important special case of {\em acyclic} architectures we present a 
polynomial-time (cubic-time) algorithm based on reduction to 2-SAT 
(details in Section~\ref{sec:annotations}).
Our algorithm has the following implications.

\begin{compactenum}
\item For acyclic architectures, our algorithm is guaranteed to explore {\em exactly 
one} representative trace from each observation equivalence class, while spending \emph{polynomial time} per class.
Hence, our algorithm is \emph{optimal} wrt the observation equivalence, 
and in several cases explores exponentially fewer traces than \emph{any} 
enumerative method based on the Mazurkiewicz equivalence 
(details in Section~\ref{sec:enumerative}).

\item For cyclic architectures, we consider an equivalence between traces 
which is finer than the observation equivalence; but coarser 
than the Mazurkiewicz equivalence, and in many cases is exponentially coarser.
For this equivalence on traces, we again present an algorithm for 
DPOR that explore  {\em exactly one} representative trace from each observation class, 
while spending \emph{polynomial time} per class.
Thus again our data-centric DPOR algorithm remains optimal under this trace 
equivalence for cyclic architectures (details in Section~\ref{sec:cyclic_architectures}).
\end{compactenum}

\noindent{\bf Experimental results.}
Finally, we perform a basic experimental comparison between the 
existing Mazurkiewicz-based DPOR and our data-centric DPOR
on a set of academic benchmarks.
Our results show a significant reduction in both running time and the 
number of explored traces.

Due to lack of space, full proofs can be found in full version of this paper~\cite{istreport}.

\section{Preliminaries}\label{sec:prelim}

In this section we introduce a simple model for concurrent programs that will be used for
stating rigorously the key ideas of our data-centric DPOR.
Similar (but syntactically richer) models have been used in~\cite{Flanagan05,Abdulla14}.
In Section~\ref{subsec:discussion} we discuss our various modeling choices and possible extensions.

\noindent{\bf Informal model.}
We consider a \emph{concurrent system} of $k$ processes under sequential consistency semantics.
For the ease of presentation, we do not allow dynamic thread creation, i.e., $k$ is fixed during any execution of the system.
Each process is defined over a set of \emph{local variables} specific to the process, and a set of \emph{global variables}, which is common for all processes.
Each process is represented as an acyclic \emph{control-flow graph}, which results from unrolling the body of the process.
A process consists of statements over the local and global variables, which we call \emph{events}.
The precise kind of such events is immaterial to our model, as we are only interested in the variables involved.
In particular, in any such event we identify the local and global variables it involves, and distinguish between the variables that the event \emph{reads} from and at most one variable that the event \emph{writes} to.
Such an event is \emph{visible} if it involves global variables, and \emph{invisible} otherwise.
We consider  that processes are \emph{deterministic}, meaning that at any given time there is at most one event that each process can execute.
Given the current state of the system, a scheduler chooses one process to execute a sequence of events that is invisibly maximal,
that is, the sequence does not end while an invisible event from that process can be taken.
The processes communicate by writing to and reading from the global variables.
The system can exhibit nondeterministic behavior which is solely attributed to the scheduler, 
by choosing nondeterministically the next process to take an invisibly maximal sequence of events from any given state.
We consider \emph{locks} as the only synchronization primitive, with the available operations being acquiring a lock
and releasing a lock.
Since richer synchronization primitives are typically built using locks, this consideration is not restrictive,
and helps with keeping the exposition of the key ideas simple.

\subsection{Concurrent Computation Model}\label{subsec:model}
Here we present our model formally. Relevant notation is summarized in Table~\ref{tab:not1}.

\noindent{\bf Relations and equivalence classes.}
A binary relation $\sim$ on a set $X$ is an equivalence relation iff $\sim$ is reflexive, symmetric and transitive. 
Given an equivalence $\sim_R$ and some $x\in X$, we denote by $\Class{x}{R}$ the equivalence class of $x$ under $\sim_R$, i.e.,
\[
\Class{x}{R}=\{y\in X:~x\sim_R y\}
\]
The \emph{quotient set} $X/\sim_{R} := \{\Class{x}{R}\ |\ x\in X\}$ of $X$ under $\sim_R$ is the set of all equivalence classes of $X$ under $\sim_R$.

\noindent{\bf Notation on functions.}
We write $f:X\pto Y$ to denote that $f$ is a partial function from $X$ to $Y$.
Given a (partial) function $f$, we denote by $\Domain(f)$ and $\Image(f)$ the domain and image set of $f$, respectively.
For technical convenience, we think of a (partial) function $f$ as a set of pairs $\{(x_i,y_i)\}_i$, meaning that $f(x_i)=y_i$ for all $i$,
and use the shorthand notation $(x,y)\in f$ to indicate that $x\in\Domain(f)$ and $f(x)=y$.
Given (partial) functions $f$ and $g$, we write $f\subseteq g$ if $\Domain(f)\subseteq \Domain(g)$ and for all $x\in \Domain(f)$ we have $f(x)=g(x)$, and $f=g$ if $f\subseteq g$ and $g\subseteq f$.
Finally, we write $f\subset g$ if $f\subseteq g$ and $f\neq g$.

\noindent{\bf Model syntax.}
We consider a \emph{concurrent architecture} $\System$ that consists of a fixed number of \emph{processes} $\Process_1,\dots, \Process_k$, i.e., there is no dynamic thread creation.
Each process $\Process_i$ is defined over a set of $n_i$ \emph{local variables} $\Locals_i$, and a set of \emph{global variables} $\Globals$, which is common for all processes.
We distinguish a set of \emph{lock variables} $\Locks\subseteq \Globals$ which are used for process synchronization.
All variables are assumed to range over a finite domain $\ValueDomain$.
Every process $\Process_i$ is represented as an acyclic control-flow graph $\CFG_i$ which results from unrolling all loops in the body of $\Process_i$. Every edge of $\CFG_i$ is labeled, and called an \emph{event}.
In particular, the architecture $\System$ is associated with a set of \emph{events} $\SysEvents$, a set of \emph{read events} (or \emph{reads}) $\SysReads\subseteq \SysEvents$, a set of \emph{write events} (or \emph{writes}) $\SysWrites\subseteq \SysEvents$.
Furthermore, locks are manipulated by a set of \emph{lock-acquire} events $\SysAcquires\subseteq \SysReads$ and a set of \emph{lock-release} events $\SysReleases\subseteq \SysWrites$, which are considered read events and write events respectively.
The control-flow graph $\CFG_i$ of process $\Process_i$ consists of events of the following types
(where $\Locals_i=\{v_1,\dots,v_{n_i}\}$, $g\in\Globals$, $l\in\Locks$,
$f_i:\ValueDomain^{n_i}\to \ValueDomain$ is a function on $n_i$ arguments,
and $b:\Locals_i^{n_i}\to\{\True,\False\}$ is a boolean function on $n_i$ arguments).
\begin{compactenum}
\item\label{item:event_type1} $\Event: v\gets \mathsf{read}~g$, in which case $\Event\in \SysReads$,
\item\label{item:event_type2} $\Event: g\gets \mathsf{write}~f(v_1,\dots,v_{n_i})$, in which case $\Event\in \SysWrites$,
\item\label{item:event_type3} $\Event: \mathsf{acquire~} l$, in which case $\Event\in \SysReads$,
\item\label{item:event_type4} $\Event: \mathsf{release~} l$, in which case $\Event\in \SysWrites$,
\item\label{item:event_type5} $\Event_1: b(v_1,\dots,v_{n_i})$.
\end{compactenum}
Each $\CFG_i$ is a directed acyclic graph with a distinguished \emph{root} node $\Root_i$,
such that there is a path $\Root_i\Path x$ to every other node $x$ of $\CFG_i$.
Each node $x$ of $\CFG_i$ has either
\begin{compactenum}
\item zero outgoing edges, or
\item one outgoing edge $(x,y)$ labeled with an event of a type listed in Item~\ref{item:event_type1}-\ref{item:event_type4}, or
\item $m\geq 2$ outgoing edges $(x,y_1), \dots, (x,y_m)$ labeled with events $\Event_j:b_j(v_1,\dots,v_{n_i})$ of Item~\ref{item:event_type5},
and such that for all values of $v_1,\dots, v_{n_i}$, we have 
$b_j(v_1,\dots,v_n)\implies \neg b_l(v_1,\dots,v_{n_i})$ for all $j\neq l$.
In this case, we call $x$ a \emph{branching} node.
\end{compactenum}

\begin{figure}
\newcommand{\distone}{1cm}
\footnotesize
\centering
\begin{tikzpicture}[->,>=stealth',shorten >=1pt,auto,node distance=\distone,
                    semithick,scale=1 ]
      
\tikzstyle{every state}=[fill=white,draw=black,text=black,font=\small, thick, inner sep=0.05cm, minimum size=0.4cm]
\tikzstyle{invis}=[fill=white,draw=white,text=white,font=\small , inner sep=-0.05cm]
\def\ystep{-0.8}
\def\xbias{5.5}

\node[] at (1,0.9) {$\Locals_i=\{v_1,\dots,v_{n_i}\}$};
\node[] at (3,0.9) {$g\in\Globals$};
\node[] at (4.2,0.9) {$l\in\Locks$};

\node[state] (v1) at (0,0*\ystep) {$x$};
\node[state] (v2) at (3,0*\ystep)  {$y$};
\draw[thick] (v1) to node[above]{$\Event: v\gets \mathsf{read}~g$} (v2);
\node[] at (\xbias, 0*\ystep) {\normalsize $\Event \in \SysReads$};

\node[state] (v3) at (0,1*\ystep) {$x$};
\node[state] (v4) at (4.3,1*\ystep)  {$y$};
\draw[thick] (v3) to node[above]{$\Event: g\gets \mathsf{write}~f(v_1,\dots,v_{n_i})$} (v4);
\node[] at (\xbias, 1*\ystep) {\normalsize $\Event \in \SysWrites$};

\node[state] (v5) at (0, 2*\ystep) {$x$};
\node[state] (v6) at (2.5, 2*\ystep) {$y$};
\draw[thick] (v5) to node[above]{$\Event: \mathsf{acquire~} l$} (v6);
\node[] at (\xbias, 2*\ystep) {\normalsize $\Event \in \SysAcquires$};

\node[state] (v7) at (0, 3*\ystep) {$x$};
\node[state] (v8) at (2.5, 3*\ystep) {$y$};
\draw[thick] (v7) to node[above]{$\Event: \mathsf{release~} l$} (v8);
\node[] at (\xbias, 3*\ystep) {\normalsize $\Event \in \SysReleases$};

\node[state] (v9) at (0, 4.5*\ystep) {$x$};
\node[state] (v10) at (3.5, 4*\ystep) {$y$};
\node[state] (v11) at (3.5, 5*\ystep) {$z$};
\draw[thick] (v9) to node[above=0.2]{$\Event_1: b_1(v_1,\dots,v_{n_i})$} (v10);
\draw[thick] (v9) to node[below=0.2]{$\Event_m: b_m(v_1,\dots, v_{n_i})$} (v11);
\node[] (d) at (3.5, 4.4*\ystep) {$\vdots$};


\end{tikzpicture}
\caption{The control-flow graph $\CFG_i$ is a sequential composition of 
these five atomic graphs.}
\label{fig:cfg_universe}
\end{figure}
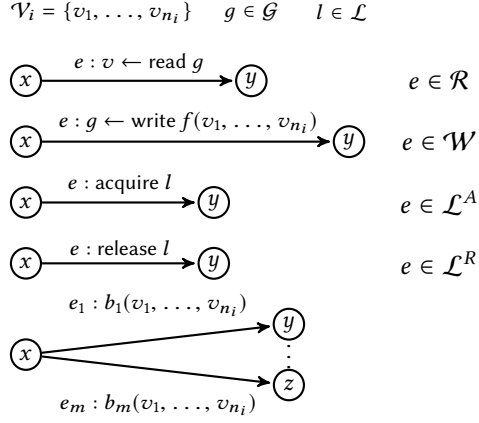
For simplicity, we require that if $x$ is a branching node, then for each edge $(x,y)$ in $\CFG_i$, the node $y$ is not branching.
Indeed, such edges can be easily contracted in a preprocessing phase.
Figure~\ref{fig:cfg_universe} provides a summary of the model syntax.
We let $\SysEvents_i\subseteq \SysEvents$ be the set of events that appear in $\CFG_i$ of process $\Process_i$,
and similarly $\SysReads_i\subseteq\SysReads$ and $\SysWrites_i\subseteq \SysWrites$ the sets of read and write events of $\Process_i$.
Additionally, we require that $\SysEvents_i\cap \SysEvents_j=\emptyset$ for all $i\neq j$ i.e., all $\SysEvents_i$ are pairwise disjoint,
and denote by $\Proc(\Event)$ the process of event $\Event$.
The \emph{location} of an event $\Location(\Event)$ is the unique global variable it involves.
Given two events $\Event, \Event'\in \SysEvents_i$ for some $\Process_i$, we write $\PS(\Event, \Event')$ if there is a path $\Event\Path \Event'$ in $\CFG_i$
(i.e., we write $\PS(\Event, \Event')$ to denote that $\Event$ is ordered before $\Event'$ in the \emph{program structure}).

We distinguish a set of \emph{initialization events} $\SysEventsInit\subseteq \SysWrites$ with $|\SysEventsInit|=|\Globals|$ which are attributed to process $\Process_1$, and are used to initialize all the global variables to some fixed values.
For every initialization write event $\Write^{\mathcal{I}}$
and for any event $\Event\in \SysEvents_i$ of process $\Process_i$,
we define that $\PS(\Write^{\mathcal{I}}, \Event)$ (i.e., the initialization events occur before any event of each process).
Figure~\ref{fig:model_example} illustrates the above definitions on the typical bank account example.
\begin{figure*}[h]
\removelatexerror
\centering
\footnotesize
\newcommand{\distone}{1cm}
\centering
\begin{tikzpicture}[->,>=stealth',shorten >=1pt,auto,node distance=\distone,
                    semithick,scale=1 ]
      
\tikzstyle{every state}=[fill=white,draw=black,text=black,font=\small, thick, inner sep=0.05cm, minimum size=0.4cm]
\tikzstyle{invis}=[fill=white,draw=white,text=white,font=\small , inner sep=-0.05cm]

\node[text width=5.5cm] at (-2,-0.35){%
\begin{algorithm}[H]
\SetKwInOut{Globals}{Globals}
\SetKwInOut{Locals}{Locals}
\setcounter{algocf}{-1}
\renewcommand{\thealgocf}{}
\SetAlgorithmName{Method}{method}
\SetAlgoNoLine
\DontPrintSemicolon
\caption{$\mathit{bool}~\Withdraw(\mathit{int}~\mathsf{amount})$}\label{algo:withdraw}
\Globals{$\mathit{int}~\mathsf{balance},~\mathit{lock}~l$}
\Locals{$\mathit{bool}~\mathsf{success},~\mathit{int}~v$}
\tcp{1. Try withdraw}
$\mathsf{success}\gets \False$\\
$\mathsf{acquire}(l)$\\
$v\gets \mathsf{balance}$\\
\uIf{$v-\mathsf{amount}\geq 0$}{
$\mathsf{balance}\gets v - \mathsf{amount}$\\
$\mathsf{success}\gets \True$
}
$\mathsf{release}(l)$\\
$\mathsf{print}(\mathsf{success})$\\
\tcp{2. Print balance}
$v\gets \mathsf{balance}$\\
$\mathsf{print}(v)$
\end{algorithm}
};

\def\xdisp{2}
\def\ystep{0.9}
\node[state] (v1) at (0+\xdisp,2-0*\ystep) {$x_1$};
\node[state] (v2) at (0+\xdisp,2-1*\ystep) {$x_2$};
\node[state] (v3) at (0+\xdisp,2-2*\ystep) {$x_3$};
\node[state] (v4) at (3.5+\xdisp, 2-2.5*\ystep) {$x_4$};
\node[state] (v5) at (0+\xdisp, 2-3*\ystep) {$x_5$};
\node[state] (v6) at (0+\xdisp, 2-4*\ystep) {$x_6$};
\node[state] (v7) at (0+\xdisp, 2-5*\ystep) {$x_7$};
\draw[thick] (v1) to node[right=0.2, pos=0.4]{$\Event_1: \mathsf{acquire}~l$} (v2);
\draw[thick] (v2) to node[right=0.2, pos=0.4]{$\Event_2: v\gets \mathsf{read}~\mathsf{balance}$} (v3);
\draw[thick, bend left=10] (v3) to node[above=0.2,pos=0.9]{$\Event_3: b(v, \mathsf{amount})$} (v4);
\draw[thick, bend left=10] (v4) to node[right=0.2,pos=0.5]{$\Event_4: \mathsf{balance}\gets \mathsf{write}~f(v,\mathsf{amount},\mathsf{success})$} (v5);
\draw[thick] (v5) to node[right=0.2, pos=0.4]{$\Event_5: \mathsf{release}~l$} (v6);
\draw[thick] (v3) to node[right=0.2]{$\Event_7: \neg b(v, \mathsf{amount})$} (v5);
\draw[thick] (v6) to node[right=0.2, pos=0.4]{$\Event_6: v\gets \mathsf{read}~\mathsf{balance}$} (v7);

\node[text width=2.8cm] at (6+\xdisp, 1.6) {\normalsize $\SysEvents_i=\{\Event_1,\dots,\Event_7\}$\\ $\SysReads_i=\{\Event_1, \Event_2, \Event_6\}$\\ $\SysWrites_i=\{\Event_4, \Event_5\}$\\ $\SysAcquires_i=\{e_1\}$\\ $\SysReleases_i=\{e_5\}$};

\end{tikzpicture}
\caption{\textit{(Left)}:~A method $\Withdraw$ executed whenever some $\mathsf{amount}$ is to be extracted from the $\mathsf{balance}$ of a bank account. \\
\textit{(Right)}:~Representation of $\Withdraw$ in our concurrent model.
The root node is $x_1$. The program structure orders $\PS(\Event_2, \Event_4)$.
We have $\Location(\Event_1)=\Location(\Event_5)$ and $\Location(\Event_2)=\Location(\Event_4)=\Location(\Event_6)$.
}
\label{fig:model_example}
\end{figure*}

\noindent{\bf Model semantics.}
A \emph{local state} of a process $\Process_i$ is a pair $\State_i=(x_i,\Value_i)$ where $x_i$ is a node of $\CFG_i$ (i.e., the program counter) and $\Value_i$ is a valuation on the local variables $\Locals_i$.
A \emph{global state} of $\System$ is a tuple $\State=(\Value, \State_1,\dots,\State_k)$, where $\Value$ is a valuation on the global variables $\Globals$ and $\State_i$ is a local state of process $\Process_i$.
An event $\Event$ along an edge $(x,y)$ of a process $\Process_i$ is \emph{enabled} in $\State$ if $\State_i=(x,\Value_i)$ (i.e., the program counter is on node $x$) and additionally,
\begin{compactenum}
\item if $\Event: \mathsf{acquire~} l$, then $\Value(l)=\False$, and
\item if $\Event: b_j(v_1,\dots,v_{n_i})$, then $b_j(\Value_i(v_1),\dots,\Value_i(v_{n_i}))=\True$.
\end{compactenum}
In words, if $\Event$ acquires a lock $l$, then $\Event$ is enabled iff $l$ is free in $\State$,
and if $x$ is a branching node, then $\Event$ is enabled iff it respects the condition of the branch in $\State$.
Given a state $\State$, we denote by $\Enabled(\State)\subseteq \SysEvents$ the set of enabled events in $\State$,
and observe that there is at most one enabled event in each state $\State$ from each process.
The execution of an enabled event $\Event$ along an edge $(x,y)$ of $\Process_i$ in state $\State=(\Value, \State_1,\dots,\State_k)$ results in a state $\State'=(\Value', \State_1,\dots, \State'_i, \dots, \State_k)$, where $\State'_i=(y, \Value'_i)$.
That is, the program counter of $\Process_i$ has progressed to $y$, and the valuation functions $\Value'$ and $\Value'_i$ have been modified according to standard semantics, as follows:
\begin{compactenum}
\item\label{item:event_sem_type1} $\Event: v\gets \mathsf{read}~g$ then $\Value'_i(v)=\Value(g)$,
\item\label{item:event_sem_type2} $\Event: g\gets \mathsf{write}~f(v_1,\dots,v_{n_i})$ then $\Value'(g)= f(\Value_i(v_1),\dots,\Value_i(v_{n_i}))$,
\item\label{item:event_sem_type3} $\Event: \mathsf{acquire~} l$ then $\Value'(l)=\True$,
\item\label{item:event_sem_type4} $\Event: \mathsf{release~} l$ then $\Value'(l)=\False$.
\end{compactenum}
Moreover, $\Value$ agrees with $\Value'$ and $\Value_i$ agrees with $\Value'_i$ on all other variables.
We write $\State\xrightarrow{\Event}\State'$ to denote that the execution of event $\Event$ in $\State$ results in state $\State'$.
Let $\StateSpace$ be the finite set (since variables range over a finite domain) of states of $\System$.
The semantics of $\System$ are defined in terms of a transition system $\TransSystem=(\StateSpace, \Transition, \State^0)$,
where $\State^0$ is the initial state, and $\Transition\subseteq  \StateSpace \times \StateSpace$ is the transition relation such that 
\[
(\State, \State') \in \TransSystem \text{ iff } \exists \Event\in\Enabled(\State):~\State\xrightarrow{\Event}\State'
\]
and either $\Event$ is an initialization event, or the program counter of $\Process_1$ has passed all initialization edges of $\Process_1$.
We write
$
\State \xrightarrow{\Event_1,\dots \Event_n} \State'
$
if there exists a sequence of states $\{\State^i\}_{1\leq i <n}$ such that
\[
\State\xrightarrow{\Event_1}\State^1\xrightarrow{\Event_2}\ldots \State^{n-1} \xrightarrow{\Event_n}\State'
\]
The initial state $\State^0=(\Value, \State^0_1,\dots, \State^0_k)$ is such that the value $\Value(g)$ of each global variable $g$ comes from the unique initialization write event $\Write$ with $\Location(\Write)=g$, and for each $\State^0_i=(x_i,\Value_i)$ we have that   $x_i=\Root_i$ (i.e., the program counter of process $\Process_i$ points to the root node of $\CFG_i$).
For simplicity we restrict $\StateSpace$ to states $\State$ that are reachable from the initial state $\State^0$ by a sequence of events
$
\State^0\xrightarrow{\Event_1,\dots,\Event_n} \State
$.
We focus our attention on state spaces $\StateSpace$ that are acyclic.

\noindent{\bf Architecture topologies.}
The architecture $\System$ induces a labeled undirected communication graph $G_{\System}=(V_{\System}, E_{\System}, \Weight_{\System})$ where $V_{\System}=\{\Process_i\}_i$.
There is an edge $(\Process_i, \Process_j)$ if processes $\Process_i,\Process_j$ access a common global variable or a common lock.
The label $\Weight(\Process_i, \Process_j)$ is the set of all such global variables and locks.
We call $\System$ \emph{acyclic} if $G_{\System}$ does not contain cycles.
The class of acyclic architectures includes, among others, all architectures with two processes, star architectures, pipelines, tree-like and hierarchical architectures.

\begin{table}[h]
\renewcommand{\arraystretch}{1.3}
\setlength\tabcolsep{2pt}
\small
\caption{Notation on the concurrent architecture.}\label{tab:not1}
\begin{center}
\begin{tabular}{c|c}
\hline
{\bf Notation} & {\bf Interpretation}\\
\hline
\hline
$\System=(\Process_i)_{i=1}^k$ & the concurrent architecture of $k$ processes\\
\hline
$\Globals,\Locals, \Locks$ & the global, local and lock variables\\
\hline
\pbox{3cm}{$\SysEvents, \SysWrites, \SysReads, \SysAcquires,$\\ $\SysReleases, \SysEventsInit$} &\pbox{10cm}{the set of events, write, read, lock-acquire\\ lock-release and initialization events} \\
\hline
$\Value_i,\Value$ & valuations of local, global variables \\
\hline
$\Enabled(\State)\subseteq \SysEvents$ & the set of enabled events in $\State$ \\
\hline
$\State \xrightarrow{\Event_1,\dots, \Event_n} \State'$ & sequence of events from $\State$ to $\State'$ \\
\hline
$\Proc(\Event)$, $\Location(\Event)$ & the process, the global variable of event $\Event$\\
\hline
$\CFG_i$, $\PS\subseteq \SysEvents\times \SysEvents$ & \pbox{10cm}{the control-flow graph of process $\Process_i$,\\ and the program structure relation}\\
\hline
$G_{\System}=(V_{\System}, E_{\System}, \Weight_{\System})$ & the communication graph of $\System$\\
\hline
\end{tabular}
\end{center}
\end{table}

\subsection{Traces}\label{subsec:traces}
In this section we develop various helpful definitions on traces. 
Relevant notation is summarized in \cref{tab:not2}.

\smallskip\noindent{\bf Notation on traces.}
A (concrete, concurrent) \emph{trace} is a sequence of events $\Trace=\Event_1,\dots,\Event_j$
such that for all $1\leq i<j$, we have $\State^{i-1}\xrightarrow{\Event_i}\State^i$, where $\State^i\in\StateSpace$ and $\State^0$ is the initial state of $\System$.
In such a case, we write succinctly $\State^0\xrightarrow{\Trace} \State^j$.
We fix the first $|\Globals|$ events $\Event_1,\dots,\Event_{|\Globals|}$ of each trace $\Trace$ to be initialization events that write the initial values to the global variables.
That is, for all $1\leq i\leq |\Globals|$ we have $\Event_i\in\SysWrites$, and hence every trace $\Trace$ starts with an \emph{initialization trace} $\TraceInit$ as a prefix.
Given a trace $\Trace$, we denote by $\Events{\Trace}$ the set of events that appear in $\Trace$,
with $\Reads{\Trace}=\Events{\Trace}\cap \SysReads$ the read events in $\Trace$, 
and with $\Writes{\Trace}=\Events{\Trace}\cap \SysWrites$ the write events in $\Trace$,
and let $|\Trace|=|\Events{\Trace}|$ be the \emph{length} of $\Trace$.
For an event $\Event\in\Events{\Trace}$, we write $\Index_{\Trace}(\Event)\in \Natsplus$ to denote the index of $\Event$ in $\Trace$.
Given some $\ell\in \Nats$, we denote by $\Trace[\ell]$ the prefix of $\Trace$ up to position $\ell$, and we say that $\Trace$ is an \emph{extension} of $\Trace[\ell]$.
We let $\Enabled(\Trace)$ denote the set of enabled events in the state at the end of $\Trace$, and call $\Trace$ \emph{maximal} if $\Enabled(\Trace)=\emptyset$.
We write $\TraceSpace$ (resp., $\TraceSpaceMax$) for the set of all traces (resp., maximal traces) of $\System$.
We denote by $\State(\Trace)$ the unique state of $\System$ such that $\State^0\xrightarrow{\Trace}\State(\Trace)$,
and given an event $\Event\in \Reads{\Trace}\cup\Writes{\Trace}$, denote by $\Value_{\Trace}(\Event)\in \ValueDomain$ the \emph{value} that the unique global variable of $\Event$ has in $\State(\Trace[\Index_{\Trace}(\Event)])$.
We call a maximal trace $\Trace$ \emph{lock-free} if the value of every lock variable in $\State(\Trace)$ is $\False$
(i.e., all locks have been released at the end of $\Trace$).
An event $\Event$ is \emph{inevitable} in a trace $\Trace$ if every every lock-free maximal extension of $\Trace$ contains $\Event$.
Given a set of events $A$, we denote by $\Trace \Project A$ the \emph{projection} of $\Trace$ on $A$, which is the unique subsequence of $\Trace$ that contains all events of $A\cap \Events{\Trace}$, and only those.
A sequence of events $\Trace'$ is called the \emph{global projection} of another sequence $\Trace$ if $\Trace'=\Trace\Project (\SysReads\cup \SysWrites)$.

\smallskip\noindent{\bf Sequential traces.}
Given a process $\Process_i$, a \emph{sequential trace} $\SeqTrace_i$  is a sequence of events that correspond to a path in $\CFG_i$, starting from the root node $\Root_i$.
Note that a sequential trace is only wrt $\CFG_i$, and is not necessarily a trace of the system.
The notation on traces is extended naturally to sequential traces (e.g., $\Events{\SeqTrace_i}$ and $\Reads{\SeqTrace_i}$ denote the events and read events of the sequential trace $\SeqTrace_i$, respectively).
Given $k$ sequential traces $\SeqTrace_1, \SeqTrace_2,\dots, \SeqTrace_k$, so that each $\SeqTrace_i$ is wrt $\Process_i$, we denote by
$
\SeqTrace_1\Conv \SeqTrace_2 \Conv \dots \Conv \SeqTrace_k
$
the (possibly empty) set of all traces $\Trace$ such that $\Events{\Trace}=\bigcup_{1\leq i\leq k} \Events{\SeqTrace_i}$.

\smallskip\noindent{\bf Conflicting events, dependent events and happens-before relations.}
Two events $\Event_1,\Event_2\in \SysReads\cup\SysWrites$ are said to \emph{conflict}, written $\Confl(\Event_1,\Event_2)$ if $\Location(\Event_1)=\Location(\Event_2)$ and at least one is a write event.
The events are said to be in \emph{read-write conflict} if $\Event_1\in \SysReads$, $\Event_2\in\SysWrites$ and $\Confl(\Event_1,\Event_2)$.
Two events $\Event_1, \Event_2$ are said to be \emph{independent}~\cite{G96,Flanagan05} if
$\Process(\Event_1)\neq \Process(\Event_2)$ and
\begin{compactenum}
\item for each $i\in \{1,2\}$ and pair of states $\State_1,\State_2$ such that $\State_1\xrightarrow{\Event_i}\State_2$, 
we have that $\Event_{3-i}\in\Enabled(\State_1)$ iff $\Event_{3-i}\in\Enabled(\State_2)$, and
\item for any pair of states $\State_1,\State_2$ such that $\Event_1,\Event_2\in \Enabled(\State_1)$,
we have that $\State_1\xrightarrow{\Event_1,\Event_2}\State_2$ iff $\State_1\xrightarrow{\Event_2,\Event_1} \State_2$,
\end{compactenum}
and \emph{dependent} otherwise.
Following the standard approach in the literature, we will consider two conflicting events to be always dependent~\cite[Chapter~3]{Godefroid97}
(e.g., two conflicting write events are dependent, even if they write the same value).
A sequence of events $\Trace$ induces a \emph{happens-before} relation $\HB{}{\Trace}{}\subseteq \Events{\Trace}\times \Events{\Trace}$,
which is the smallest transitive relation on $\Events{\Trace}$ such that
\[
\HB{\Event_1}{\Trace}{\Event_2}\quad \text{ if }\quad  \Index_{\Trace}(\Event_1)\leq \Index_{\Trace}(\Event_2) \text{ and } \Event_1 \text{ and } \Event_2 \text{ are dependent.}
\]
Observe that $\HB{}{\Trace}{}$ orders all pairwise conflicting events, as well as all the events of any process.

\begin{table}[h]
\renewcommand{\arraystretch}{1.3}
\setlength\tabcolsep{5pt}
\small
\caption{Notation on traces.}\label{tab:not2}
\begin{center}
\begin{tabular}{c|c}
\hline
{\bf Notation} & {\bf Interpretation}\\
\hline
\hline
$\Trace$, $\SeqTrace_i$ & a trace and a sequential trace\\
\hline
$\Confl(\Event_1,\Event_2)$ & conflicting events \\
\hline
$\Trace[\ell]$, $|\Trace|$ & the prefix up to index $\ell$, and length of $\Trace$\\
\hline
$\Events{\Trace}, \Writes{\Trace}, \Reads{\Trace}$  & the events, write and read events of trace $\Trace$\\
\hline
$\Index_{\Trace}(\Event)$, $\Value_{\Trace}(\Event)$ & the index and value of event $\Event$ in trace $\Trace$\\
\hline
$t\Project X$ & projection of trace $\Trace$ on event set $X$\\
\hline
$\Enabled(\Trace)$ & the enabled events in the state reached by $\Trace$\\
\hline
$\HB{}{\Trace}{}$ & the happens-before relation on $\Trace$\\
\hline
$\Obs_{\Trace}$ & the observation function of $\Trace$\\
\hline
\end{tabular}
\end{center}
\end{table}

\subsection{Discussion and Remarks}\label{subsec:discussion}

The concurrent model we consider here is minimalistic, to allow for a clear exposition of the ideas used in our data-centric DPOR.
Here we discuss some of the simplifications we have adopted to keep the presentation simple.

\noindent{\bf Global variables and arrays.}
First, note that the location $\Location(\Event)$ of every event $\Event\in \SysReads\cup \SysWrites$ is taken to be fixed in each $\CFG_i$.
The dynamic access of a static, global data structure $g$ based on the value of a local variable $v$ (e.g., accessing the element $g[v]$ of a global array $g$) can be modeled by using a different global variable $g_i$ to encode the $i$-th location of $g$,
and a sequence of branching nodes that determine which $g_i$ should be accessed based on the value of $v$.
Our framework can be strengthened to allow use of global arrays directly, and our algorithms apply straightforwardly to this richer framework. However, this would complicate the presentation, and is thus omitted in the theoretical 
exposition of the paper.
Arrays are handled naturally in our implementation,
and we refer to the Experiments Section~\ref{subsec:implementation}
for a description.

\noindent{\bf Invisible computations.}
Each process $\Process_i$ is deterministic, and the only source of nondeterminism in the executions of the system comes from a nondeterministic scheduler that chooses an enabled event to be executed from a given state.
The model uses the functions $f$ and $b$ on events $\Event: g\gets \mathsf{write}~f(v_1,\dots,v_j)$ and $\Event: b(v_1,\dots, b_n)$ respectively to collapse deterministic invisible computations of each process, and only consider the value that $f$ writes on a global variable (in addition to the side-effects that $f$ has on local the variables of process $\Process_i$).
This is a standard approach in modeling concurrent systems, as interleaving invisible events does not 
change the set of reachable local states of the processes.

\noindent{\bf Locks and synchronization mechanisms.}
We treat lock-acquire events as reads and lock-release events as writes.
In a trace $\Trace$, a lock-acquire event $\Event$ is considered to read the value of the last lock-release event $\Event'$ on the same lock $l$ (or some initialization event $\Init$ if $\Event$ is the first lock event on $l$ in $\Trace$).
Our approach can be extended to richer communication (e.g., message passing) and synchronization primitives (e.g. semaphores, wait-notify), which are often implemented using some low-level locking mechanism.

\noindent{\bf Maximal lock-free traces.}
We also assume that in every maximal trace of the system, every lock-acquire is followed by a corresponding lock-release.
Traces without this property are typically considered erroneous, and some modern programming languages even 
force this restriction syntactically.

\section{Observation Trace Equivalence}\label{sec:mazurkiewicz}

In this section, 
we introduce the observation equivalence $\sim_{\Obs}$ on traces,
upon which in the later sections we develop our data-centric DPOR.
We explore the relationship between the control-centric Mazurkiewicz equivalence $\sim_{M}$ and 
the observation equivalence.
In particular, we show that $\sim_{\Obs}$ refines $\sim_{M}$, that is, every two traces that are equivalent under reordering of independent events
are also equivalent under observations.
We conclude by showing that $\sim_{\Obs}$ can be exponentially more succinct,
both in the number of processes, and the size of each process.


\subsection{Mazurkiewicz and Observation Equivalence}\label{subsec:comparison}
In this section we introduce our notion of observation equivalence.
We start with the classical definition of Mazurkiewicz equivalence and
then the notion of observation functions.

\noindent{\bf Mazurkiewicz trace equivalence.}
Two traces $\Trace_1,\Trace_2\in\TraceSpace$ are called \emph{Mazurkiewicz equivalent} if one can be obtained from the other by swapping adjacent, independent events.
Formally, we write $\sim_{M}$ for the Mazurkiewicz equivalence on $\TraceSpace$, and we have $\Trace_1\sim_{M} \Trace_2$ iff
\begin{compactenum}
\item $\Events{\Trace_1}=\Events{\Trace_2}$, and
\item for every pair of events $\Event_1,\Event_2\in \Events{\Trace_1}$ we have that
$\HB{\Event_1}{\Trace_1}{\Event_2}$ iff $\HB{\Event_1}{\Trace_2}{\Event_2}$.
\end{compactenum}

\noindent{\bf Observation functions.}
The concurrent model introduced in Section~\ref{subsec:model} follows \emph{sequential consistency}~\cite{Lamport79}, 
i.e., all processes observe the same order of events,
and a read event of some variable will observe the value written by the last write event to that variable in this order.
Throughout the paper, an \emph{observation function} is going to be a partial function $\Obs:\SysReads \pto\SysWrites$.
A trace $\Trace$ induces a total observation function $\Obs_{\Trace}:\Reads{\Trace}\to \Writes{\Trace}$ following the sequential consistency axioms. That is, $\Obs_{\Trace}(\Read)=\Write$ iff
\begin{compactenum}
\item $\Index_{\Trace}(\Write)<\Index_{\Trace}(\Read)$, and
\item for all $\Write'\in \Writes{\Trace}$ such that $\Confl(\Read, \Write')$ we have that 
$\Index_{\Trace}(\Write')<\Index_{\Trace}(\Write) \text{ or } \Index_{\Trace}(\Write')>\Index_{\Trace}(\Read)$.
\end{compactenum}
We say that $\Trace$ is \emph{compatible} with an observation function $\Obs$ if $\Obs\subseteq \Obs_{\Trace}$,
and that $\Trace$ \emph{realizes} $\Obs$ if $\Obs = \Obs_{\Trace}$.

\noindent{\bf Observation equivalence.}
We define the \emph{observation equivalence} $\sim_{\Obs}$ on the trace space $\TraceSpace$ as follows.
For $\Trace_1,\Trace_2\in \TraceSpace$  we have $\Trace_1\sim_{\Obs}\Trace_2$ iff $\Events{\Trace_1}=\Events{\Trace_2}$ and $\Obs_{\Trace_1}=\Obs_{\Trace_2}$, 
i.e., the two observation functions coincide.

We start with the following crucial lemma.
In words, it states that if two traces agree on their observation functions, 
then they also agree on the values seen by their common read events.

\begin{restatable}{lemma}{obstoval}\label{lem:obs_to_val}
Consider two traces $\Trace_1, \Trace_2$ such that $\Obs_{\Trace_1}\subseteq \Obs_{\Trace_2}$.
Then 
\begin{itemize}
\item for all read events $\Read\in \Reads{\Trace_1}$ we have that $\Value_{\Trace_1}(\Read)=\Value_{\Trace_2}(\Read)$, and
\item for all write events $\Write\in \Writes{\Trace_1}\cap \Writes{\Trace_2}$ we have that $\Value_{\Trace_1}(\Write)=\Value_{\Trace_2}(\Write)$.
\end{itemize}
\end{restatable}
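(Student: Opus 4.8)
The plan is to prove the two statements simultaneously by strong induction on the $\Trace_1$-index $\Index_{\Trace_1}(\Event)$ of the event $\Event$, exploiting two facts: that $\Obs_{\Trace_1}\subseteq\Obs_{\Trace_2}$ forces every read of $\Trace_1$ to observe the very same write in both traces, and that (by the first observation axiom) the write a read observes always has strictly smaller $\Trace_1$-index than the read itself. This makes the $\Trace_1$-index order a well-founded order along which values propagate from writes to the reads that see them, and from a process's earlier reads to the writes it subsequently performs.

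First I would record the elementary identity that a read sees exactly the value deposited by the write it observes: for any trace $\Trace$ and $\Read\in\Reads{\Trace}$, the event $\Read$ does not modify $\Location(\Read)$, and $\Obs_{\Trace}(\Read)$ is by definition the last write conflicting with $\Read$ that precedes it, i.e.\ the last write to $\Location(\Read)$ before $\Read$, so the value of $\Location(\Read)$ in $\State(\Trace[\Index_{\Trace}(\Read)])$ is the one written there; that is, $\Value_{\Trace}(\Read)=\Value_{\Trace}(\Obs_{\Trace}(\Read))$ (lock-acquires and lock-releases are handled identically). This settles the read case at once: for $\Read\in\Reads{\Trace_1}$, write $\Write:=\Obs_{\Trace_1}(\Read)$; since $\Obs_{\Trace_1}\subseteq\Obs_{\Trace_2}$ we have $\Obs_{\Trace_2}(\Read)=\Write$ and $\Write\in\Writes{\Trace_1}\cap\Writes{\Trace_2}$, and since $\Index_{\Trace_1}(\Write)<\Index_{\Trace_1}(\Read)$ the induction hypothesis on $\Write$ gives $\Value_{\Trace_1}(\Read)=\Value_{\Trace_1}(\Write)=\Value_{\Trace_2}(\Write)=\Value_{\Trace_2}(\Read)$.

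For the write case, consider $\Write\in\Writes{\Trace_1}\cap\Writes{\Trace_2}$ of process $\Process_i$, say $\Write\colon g\gets\mathsf{write}~f(v_1,\dots,v_{n_i})$. In either trace $\Value_{\Trace}(\Write)$ is obtained by evaluating $f$ on the local valuation of $\Process_i$ immediately before $\Write$, so it suffices to show that this local valuation coincides in $\Trace_1$ and $\Trace_2$. Here I would run a secondary induction along $\Process_i$'s execution prefix to show that $\Process_i$ follows the same path of $\CFG_i$ up to $\Write$ and carries the same local valuation throughout: the initial local valuation is fixed; non-branching steps are deterministic; at a branching node the successor is uniquely determined by the current local valuation because the branch guards are mutually exclusive; and every read $\Read'$ of $\Process_i$ lying before $\Write$ satisfies $\Index_{\Trace_1}(\Read')<\Index_{\Trace_1}(\Write)$ (same process, so program order equals trace order), whence the main induction hypothesis makes $\Read'$ return equal values and preserves the agreement of local valuations across the read step. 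The base case is given by the initialization writes, which deposit fixed values; this yields $\Value_{\Trace_1}(\Write)=\Value_{\Trace_2}(\Write)$.

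The hard part will be precisely this control-flow agreement: the value of a write depends on the local state of its process, the local state depends both on the path taken and on the values read along it, and the path in turn depends---through the branch guards---on the local state, so path agreement and value agreement must be bootstrapped together rather than proved in sequence. What breaks the apparent circularity is process determinism (at most one enabled event per state, with mutually exclusive guards), and what keeps the outer induction well-founded is that $\Obs$ respects the $\Trace_1$-index order; the remaining ingredients---the read-equals-observed-write identity and the reduction of the read case to the write case---are routine.
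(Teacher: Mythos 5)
Your proposal is correct and follows essentially the same route as the paper's proof: a single induction along the order of events in $\Trace_1$, where a read's value is reduced to that of its (common) observed write and a write's value is reduced, via process determinism, to the values of the same process's earlier reads. The only difference is one of detail --- you spell out the control-flow/local-valuation bootstrapping that the paper compresses into ``since $\Process_i$ is deterministic, it easily follows,'' which is a faithful elaboration rather than a new idea (your side remark that lock events are ``handled identically'' via the read-equals-observed-write identity is slightly off, since an acquire's value is $\True$ rather than the released value, but the conclusion for lock events holds trivially).
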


The following is an easy consequence of Lemma~\ref{lem:obs_to_val}.

\begin{lemma}\label{lem:obs_to_valmax}
Consider two traces $\Trace_1, \Trace_2$ such that $\Obs_{\Trace_1}\subseteq \Obs_{\Trace_2}$ and $\Trace_2$ is maximal.
Then (i)~$\Events{\Trace_1}\subseteq \Events{\Trace_2}$, and
(ii)~for all events $\Event\in \Reads{\Trace_1}\cup \Writes{\Trace_1}$ we have that $\Value_{\Trace_1}(\Event)=\Value_{\Trace_2}(\Event)$.
\end{lemma}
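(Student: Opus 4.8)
The plan is to prove~(i) by a per-process argument and then obtain~(ii) almost for free from \cref{lem:obs_to_val}. Since the event sets $\SysEvents_i$ are pairwise disjoint, $\Events{\Trace_1}=\bigcup_i \big(\Events{\Trace_1}\cap\SysEvents_i\big)$, so it suffices to show $\Events{\Trace_1}\cap\SysEvents_i \subseteq \Events{\Trace_2}\cap\SysEvents_i$ for every process $\Process_i$. Fix $\Process_i$ and consider the two projections $\SeqTrace^1 = \Trace_1\Project \SysEvents_i$ and $\SeqTrace^2 = \Trace_2\Project \SysEvents_i$; each is a path in the acyclic $\CFG_i$ starting at the root $\Root_i$. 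I would prove the stronger claim that $\SeqTrace^1$ is a \emph{prefix} of $\SeqTrace^2$, which immediately gives the required inclusion.

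First I would argue that the two paths cannot diverge. Walking along $\SeqTrace^1$ from $\Root_i$, the local state of $\Process_i$ at each visited node is, by determinism of $\Process_i$, completely determined by the values its read events have observed so far. By \cref{lem:obs_to_val} (applicable since $\Obs_{\Trace_1}\subseteq \Obs_{\Trace_2}$), every $\Read\in\Reads{\Trace_1}$ satisfies $\Value_{\Trace_1}(\Read)=\Value_{\Trace_2}(\Read)$, so these local states coincide in the two traces. At a non-branching node there is a unique outgoing edge, and at a branching node the unique enabled branch is decided by the (agreeing) local valuation; hence $\SeqTrace^1$ and $\SeqTrace^2$ label the same edges along their common length, and one is a prefix of the other.

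It remains to rule out that $\SeqTrace^2$ is a \emph{strict} prefix of $\SeqTrace^1$, and this is the main obstacle, where maximality of $\Trace_2$ is used. Suppose $\SeqTrace^2$ ends at a node $x$ while $\SeqTrace^1$ continues along an edge $(x,y)$ labelled $\Event$. Then the program counter of $\Process_i$ in $\State(\Trace_2)$ is at $x$, and I would show $\Event\in\Enabled(\Trace_2)$, contradicting $\Enabled(\Trace_2)=\emptyset$. If $\Event$ is a read, write, or lock-release, it carries no guard and is enabled. If $\Event$ is a lock-acquire of $l$, then since every maximal trace is lock-free, $l$ is free in $\State(\Trace_2)$ and $\Event$ is enabled. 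Finally, if $\Event$ is a branch $b_j$, then $b_j$ holds under the local valuation of $\Process_i$ at $x$ in $\Trace_1$, and by the local-state agreement above the same valuation holds in $\Trace_2$, so $b_j$ is enabled. Each case contradicts maximality, so $\SeqTrace^1$ is a prefix of $\SeqTrace^2$, establishing~(i). For~(ii), \cref{lem:obs_to_val} directly gives $\Value_{\Trace_1}(\Read)=\Value_{\Trace_2}(\Read)$ for every $\Read\in\Reads{\Trace_1}$, while for $\Write\in\Writes{\Trace_1}$ part~(i) yields $\Write\in\Writes{\Trace_2}$, so $\Write\in\Writes{\Trace_1}\cap\Writes{\Trace_2}$ and \cref{lem:obs_to_val} gives $\Value_{\Trace_1}(\Write)=\Value_{\Trace_2}(\Write)$.
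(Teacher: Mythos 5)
Your proof is correct, and the paper gives no argument of its own here---it simply asserts that the lemma is ``an easy consequence of Lemma~\ref{lem:obs_to_val}.'' Your per-process prefix argument (determinism plus value agreement from Lemma~\ref{lem:obs_to_val} to rule out divergence, then maximality and lock-freeness of $\Trace_2$ to rule out $\Trace_2$'s projection being a strict prefix) supplies exactly the missing details, and it uses the same case analysis on outgoing edges that the paper itself carries out in the proof of the closely related Lemma~\ref{lem:obs_to_enable_max}.
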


\noindent{\bf Soundness.}
Lemma~\ref{lem:obs_to_valmax} implies that two maximal traces which agree on their observation function have the same observable behavior, i.e., each global event has the same value in the two traces.
Since all local states of each process can be explored by exploring maximal traces,
it suffices to explore all the (maximal) observation functions  of $\System$.

The Mazurkiewicz trace equivalence is \emph{control-centric}, i.e., equivalent traces share the same order between the dependent control locations of the program.
In contrast, the observation trace equivalence is \emph{data-centric}, as it is based on which write events are observed by the read events of each trace.
Note that two conflicting events are dependent, and thus must be ordered in the same way by two Mazurkiewicz-equivalent traces. 
The formal relationship between the two equivalences is established in the following theorem.

\begin{theorem}\label{them:comparison}
For any two traces $\Trace_1,\Trace_2\in \TraceSpace$, if $\Trace_1\sim_{M} \Trace_2$ then $\Trace_1\sim_{\Obs} \Trace_2$.
\end{theorem}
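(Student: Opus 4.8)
The plan is to verify the two defining requirements of $\sim_{\Obs}$ in turn. The event-set condition $\Events{\Trace_1}=\Events{\Trace_2}$ comes for free, as it is already the first clause of $\sim_M$; in particular $\Reads{\Trace_1}=\Reads{\Trace_2}$ and $\Writes{\Trace_1}=\Writes{\Trace_2}$, so $\Obs_{\Trace_1}$ and $\Obs_{\Trace_2}$ share the same domain. It then remains to show they agree pointwise, i.e.\ that every read observes the same write in both traces.

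The key lever is that the observation function is determined purely by the order of \emph{conflicting} events, and conflicting events are dependent, so their relative order is exactly what $\sim_M$ preserves. First I would record the elementary fact that $\HB{\Event_1}{\Trace}{\Event_2}$ implies $\Index_{\Trace}(\Event_1)\leq \Index_{\Trace}(\Event_2)$, by unfolding the transitive closure into a chain of index-nondecreasing dependent steps. Hence for two distinct dependent events the happens-before order coincides with the index order. Combining this with clause~2 of $\sim_M$ gives the working principle: for any two dependent events $\Event_1,\Event_2\in\Events{\Trace_1}$, event $\Event_1$ precedes $\Event_2$ in $\Trace_1$ iff it precedes $\Event_2$ in $\Trace_2$.

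Now fix a read $\Read$ and let $\Write=\Obs_{\Trace_1}(\Read)$. Since $\Read$ observes $\Write$ they conflict, hence are dependent; as $\Write$ precedes $\Read$ in $\Trace_1$, the principle above yields $\Index_{\Trace_2}(\Write)<\Index_{\Trace_2}(\Read)$, which is clause~1 of the observation-function definition for $\Trace_2$. For clause~2, suppose toward a contradiction that some $\Write'\in\Writes{\Trace_2}$ with $\Confl(\Read,\Write')$ satisfies $\Index_{\Trace_2}(\Write)<\Index_{\Trace_2}(\Write')<\Index_{\Trace_2}(\Read)$. Then $\Location(\Write')=\Location(\Read)=\Location(\Write)$, so $\Write$ and $\Write'$ conflict as well. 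Transferring both dependent orderings, $\Write$ before $\Write'$ and $\Write'$ before $\Read$, from $\Trace_2$ back to $\Trace_1$ gives $\Index_{\Trace_1}(\Write)<\Index_{\Trace_1}(\Write')<\Index_{\Trace_1}(\Read)$, contradicting $\Write=\Obs_{\Trace_1}(\Read)$. Hence $\Write$ also satisfies clause~2 in $\Trace_2$, so $\Obs_{\Trace_2}(\Read)=\Write$. As $\Read$ was arbitrary, $\Obs_{\Trace_1}=\Obs_{\Trace_2}$ and therefore $\Trace_1\sim_{\Obs}\Trace_2$.

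I expect the only genuine subtlety to lie in the third paragraph: one must be careful that happens-before, being a transitive closure, still collapses to plain index order on \emph{directly} dependent pairs, and that the intervening write $\Write'$ is in conflict with \emph{both} $\Read$ and $\Write$, so that both of its orderings can be carried across by $\sim_M$. Everything else is routine bookkeeping over the two clauses of the observation-function definition.
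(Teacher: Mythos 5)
Your proof is correct and follows essentially the same route as the paper's: both arguments reduce to the observation that conflicting events are dependent, so $\sim_{M}$ fixes their relative order, and the observation function is determined entirely by that order. The paper phrases it as a contradiction between $\Obs_{\Trace_1}(\Read)$ and $\Obs_{\Trace_2}(\Read)$ with a case split on where the second write can fall, whereas you verify the two defining clauses of $\Obs_{\Trace_2}(\Read)$ directly, but the substance is identical.
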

\begin{proof}
Consider any read event $\Read\in \Reads{\Trace_1}$ and assume towards contradiction that $\Obs_{\Trace_1}(\Read)\neq \Obs_{\Trace_2}(\Read)$. 
Let $\Write_1=\Obs_{\Trace_1}(\Read)$ and $\Write_2=\Obs_{\Trace_2}(\Read)$.
Since $\Trace_1\sim_{M} \Trace_2$, we have that $\Write_1\in\Events{\Trace_2}$ and $\Write_2\in \Events{\Trace_1}$.
Then $\HB{\Write_1}{\Trace_1}{\Read}$ and $\HB{\Write_2}{\Trace_2}{\Read}$, and one of the following holds.
\begin{compactenum}
\item $\HB{\Read}{\Trace_1}{\Write_2}$, and since $\HB{\Write_2}{\Trace_2}{\Read}$ then $\Trace_1\neq \sim_{M} \Trace_2$, a contradiction.
\item $\HB{\Write_2}{\Trace_1}{\Write_1}$, and since $\Trace_1\sim_{M}\Trace_2$ we have that $\HB{\Write_2}{\Trace_2}{\Write_1}$, and thus $\HB{\Read}{\Trace_2}{\Write_1}$. Since $\HB{\Write_1}{\Trace_1}{\Read}$, we have $\Trace_1\neq \sim_{M} \Trace_2$, a contradiction.
\end{compactenum}
The desired result follows.
\end{proof}

\begin{example}[Mazurkiewizc-based-based vs observation exploration.]
\begin{figure*}[h]
\removelatexerror
\footnotesize
\centering
\newcommand{\distone}{1cm}
\centering
\begin{tikzpicture}[->,>=stealth',shorten >=1pt,auto,node distance=\distone,
                    semithick,scale=1 ]
      
\tikzstyle{every state}=[fill=white,draw=black,text=black,font=\small, thick, rectangle, inner sep=0.05cm, minimum size=0.4cm]
\tikzstyle{invis}=[fill=white,draw=white,text=white,font=\small , inner sep=-0.05cm]
\tikzstyle{dashed}=                  [dash pattern=on 6pt off 2pt]

\def\xdisp{-3}
\def\ystep{0.9}
\node[state, circle] (v1) at (0+\xdisp,3.6-0*\ystep) {$x_1$};
\node[state, circle] (v2) at (0+\xdisp,3.6-1*\ystep) {$x_2$};
\node[state, circle] (v3) at (0+\xdisp,3.6-2*\ystep) {$x_3$};
\node[state, circle] (v4) at (0+\xdisp,3.6-3*\ystep) {$x_4$};
\node[state, circle] (v5) at (0+\xdisp,3.6-5*\ystep) {$x_6$};
\node[state, circle] (v6) at (0+\xdisp,3.6-4*\ystep) {$x_5$};
\node[state, circle] (v7) at (0+\xdisp,3.6-6*\ystep) {$x_7$};
\draw[thick] (v1) to node[right=0.2, pos=0.4]{$\Event_1: \mathsf{acquire}~l$} (v2);
\draw[thick] (v2) to node[right=0.2, pos=0.4]{$\Event_2: v\gets \mathsf{read}~\mathsf{balance}$} (v3);
\draw[thick] (v3) to node[right=0.2,pos=0.4]{$\Event_3: b(v, \mathsf{amount})$} (v4);
\draw[thick] (v4) to node[right=0.2,pos=0.4]{$\Event_4: \mathsf{balance}\gets \mathsf{write}~f(v,\mathsf{amount},\mathsf{success})$} (v6);
\draw[thick] (v6) to node[right=0.2, pos=0.4]{$\Event_5: \mathsf{release}~l$} (v5);
\draw[thick] (v5) to node[right=0.2, pos=0.4]{$\Event_6: v\gets \mathsf{read}~\mathsf{balance}$} (v7);

\def\xdisp{3}
\def\ystep{0.8}
\def\bend{20}

\node[] at (0+\xdisp,2+0.75*\ystep) {\large $\Process_1$};
\node[state] (v1) at (0+\xdisp,2-0*\ystep) {$\Event_1$};
\node[state] (v2) at (0+\xdisp,2-1*\ystep) {$\Event_2$};
\node[state] (v3) at (0+\xdisp,2-2*\ystep) {$\Event_3$};
\node[state] (v4) at (0+\xdisp, 2-3*\ystep) {$\Event_4$};
\node[state] (v5) at (0+\xdisp,2-4*\ystep) {$\Event_5$};
\node[state] (v6) at (0+\xdisp,2-5*\ystep) {$\Event_6$};
\draw[thick, ] (v1) to (v2);
\draw[thick, ] (v2) to (v3);
\draw[thick, ] (v3) to (v4);
\draw[thick, ] (v4) to (v5);
\draw[thick, ] (v5) to (v6);

\def\xdisp{6}
\node[] at (0+\xdisp,2+0.75*\ystep) {\normalsize $\Process_2$};
\node[state] (u1) at (0+\xdisp,2-0*\ystep) {$\Event_1$};
\node[state] (u2) at (0+\xdisp,2-1*\ystep) {$\Event_2$};
\node[state] (u3) at (0+\xdisp,2-2*\ystep) {$\Event_3$};
\node[state] (u4) at (0+\xdisp, 2-3*\ystep) {$\Event_4$};
\node[state] (u5) at (0+\xdisp,2-4*\ystep) {$\Event_5$};
\node[state] (u6) at (0+\xdisp,2-5*\ystep) {$\Event_6$};
\draw[thick, ] (u1) to (u2);
\draw[thick, ] (u2) to (u3);
\draw[thick, ] (u3) to (u4);
\draw[thick, ] (u4) to (u5);
\draw[thick, ] (u5) to (u6);

\def\xdisp{4.5}

\node[] at (0+\xdisp,2+2.7*\ystep) {\normalsize Mazurkiewizc-based};
\node[state, minimum width=2.5cm] (x1) at (0+\xdisp,2+1*\ystep) {$\Event'_2: \mathsf{balance}\gets 4$};
\node[state, minimum width=2.5cm] (x2) at (0+\xdisp,2+2*\ystep) {$\Event'_1: \mathsf{release}~l$};

\begin{scope}[on background layer]
\draw[thick, ] (x2) to (x1);
\draw[thick,] (x1) to  (v1);
\draw[thick,] (x1) to  (u1);
\end{scope}

\draw[thick, dashed, bend right=\bend ] (v4) to (u4);
\draw[thick, dashed, bend right=\bend] (u4) to (v4);


\draw[thick, dashed, bend right=\bend ] (v4) to (u2);
\draw[thick, dashed, bend left=\bend] (u4) to (v2);
\draw[thick, dashed, bend left=\bend ] (v2) to (u4);
\draw[thick, dashed, bend right=\bend] (u2) to (v4);

\draw[thick, dashed, bend right=\bend ] (v4) to (u6);
\draw[thick, dashed, bend left=\bend] (u4) to (v6);
\draw[thick, dashed, bend left=\bend ] (v6) to (u4);
\draw[thick, dashed, bend right=\bend] (u6) to (v4);


\def\xdisp{7}

\def\bend{20}

\node[] at (0+\xdisp,2+0.75*\ystep) {\large $\Process_1$};
\node[state] (a1) at (0+\xdisp,2-0*\ystep) {$\Event_1$};
\node[state] (a2) at (0+\xdisp,2-1*\ystep) {$\Event_2$};
\node[state] (a3) at (0+\xdisp,2-2*\ystep) {$\Event_3$};
\node[state] (a4) at (0+\xdisp, 2-3*\ystep) {$\Event_4$};
\node[state] (a5) at (0+\xdisp,2-4*\ystep) {$\Event_5$};
\node[state] (a6) at (0+\xdisp,2-5*\ystep) {$\Event_6$};
\draw[thick, ] (a1) to (a2);
\draw[thick, ] (a2) to (a3);
\draw[thick, ] (a3) to (a4);
\draw[thick, ] (a4) to (a5);
\draw[thick, ] (a5) to (a6);

\def\xdisp{10}

\node[] at (0+\xdisp,2+0.75*\ystep) {\large $\Process_2$};
\node[state] (b1) at (0+\xdisp,2-0*\ystep) {$\Event_1$};
\node[state] (b2) at (0+\xdisp,2-1*\ystep) {$\Event_2$};
\node[state] (b3) at (0+\xdisp,2-2*\ystep) {$\Event_3$};
\node[state] (b4) at (0+\xdisp, 2-3*\ystep) {$\Event_4$};
\node[state] (b5) at (0+\xdisp,2-4*\ystep) {$\Event_5$};
\node[state] (b6) at (0+\xdisp,2-5*\ystep) {$\Event_6$};
\draw[thick, ] (b1) to (b2);
\draw[thick, ] (b2) to (b3);
\draw[thick, ] (b3) to (b4);
\draw[thick, ] (b4) to (b5);
\draw[thick, ] (b5) to (b6);

\def\xdisp{8.5}

\node[] at (0+\xdisp,2+2.7*\ystep) {\normalsize Observation-based};
\node[state, minimum width=2.5cm] (y1) at (0+\xdisp,2+1*\ystep) {$\Event'_2: \mathsf{balance}\gets 4$};
\node[state, minimum width=2.5cm] (y2) at (0+\xdisp,2+2*\ystep) {$\Event'_1: \mathsf{release}~l$};

\def\bend{40}
\begin{scope}[on background layer]
\draw[thick, ] (y2) to (y1);
\draw[thick,] (y1) to  (a1);
\draw[thick,] (y1) to  (b1);
\draw[thick, <-, dashed, bend left=\bend] (y1) to (a2);
\draw[thick, <-,dashed, bend right=\bend] (y1) to (b2);
\draw[thick, <-,dashed, ] (a4) to (b2);
\draw[thick, <-,dashed, ] (b4) to (a2);
\draw[thick, <-,dashed, ] (a4) to (b6);
\draw[thick, <-,dashed, ] (b4) to (a6);
\draw[thick, <-,dashed, bend left=\bend] (a4) to (a6);
\draw[thick, <-,dashed, bend right=\bend] (b4) to (b6);
\end{scope}

\end{tikzpicture}
\caption{
Trace exploration on the system of \cref{fig:model_example} with two processes, where initially $\mathsf{balance}\gets 4$ and both withdrawals succeed.
}
\label{fig:m_vs_obs}
\end{figure*}
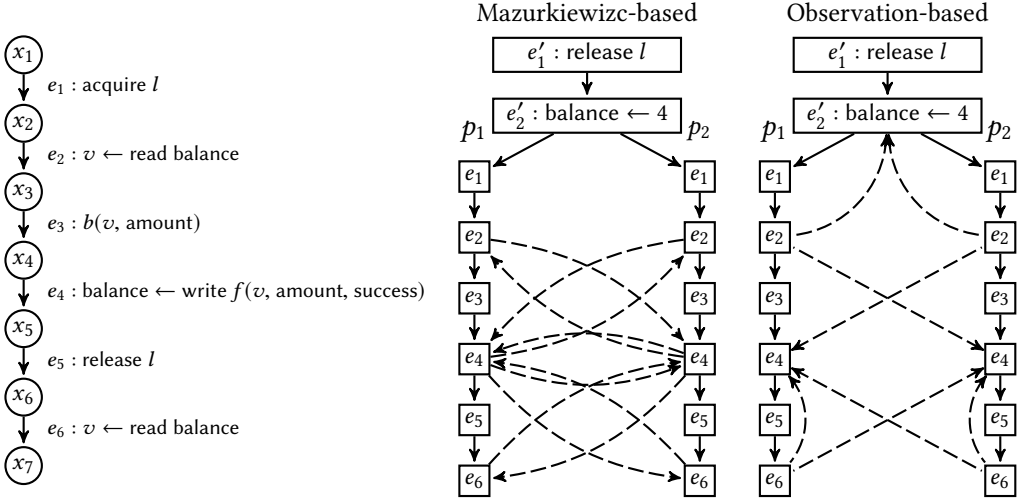

Figure~\ref{fig:m_vs_obs} illustrates the difference between the Mazurkiewicz and observation trace equivalence on the example of Figure~\ref{fig:model_example}.
Every execution of the system starts with an initialization trace $\TraceInit$ that initializes the lock $l$ to $\False$,
and the initial value $\mathsf{desposit}=4$.
Consider that $\Process_1$ is executed with parameter $\mathsf{amount}=1$ and
$\Process_2$ is executed with parameter $\mathsf{amount}=2$, 
(hence both withdrawals succeed).
The primed events $\Event'_1,\Event'_2$ represent the system initialization.
\begin{compactitem}
\item \textit{(Left)}:~The sequential trace of $\Process_1,\Process_2$.
\item \textit{(Center)}:~Trace exploration using the Mazurkiewicz equivalence $\sim_{M}$.
Solid lines represent the happens-before relation enforced by the program structure.
Dashed lines represent potential happens-before relations between dependent events.
A control-centric DPOR based on $\sim_{M}$ will resolve scheduling choices by exploring all possible realizable sets of the happens-before edges.
\item \textit{(Right)}:~Trace exploration using the observation equivalence $\sim_{\Obs}$.
Solid lines represent the happens-before relation enforced by the program structure.
This time, dashed lines represent potential observation functions.
Our data-centric DPOR based on $\sim_{\Obs}$ will resolve scheduling choices by exploring all possible realizable sets of the observation edges.
\end{compactitem}
Both methods are guaranteed to visit all local states of each process.
However, the data-centric DPOR achieves this by exploring potentially fewer scheduling choices.
\end{example}


\subsection{Exponential Succinctness}\label{subsec:exponential_succinct}
As we have already seen in the example of Figure~\ref{fig:m_comparison_intro}, Theorem~\ref{them:comparison} does not hold in the other direction,
i.e., $\sim_{\Obs}$ can be strictly coarser than $\sim_{M}$.
Here we provide two simple examples in which $\sim_{\Obs}$ is exponentially more succinct than $\sim_{M}$.
Traditional enumerative model checking methods of concurrent systems are based on exploring 
{\em at least} one trace from every partition of the Mazurkiewicz equivalence using POR techniques that prune away equivalent traces (e.g. sleep sets~\cite{G96}, persistent sets~\cite{Flanagan05}, source sets and wakeup trees~\cite{Abdulla14}).
Such a search is \emph{optimal} if it explores at most one trace from each class.
Any optimal enumerative exploration based on the observation equivalence is guaranteed by Theorem~\ref{them:comparison} to examine no more traces than any enumerative exploration based on the Mazurkiewicz equivalence.
The two examples show $\sim_{\Obs}$ can offer exponential improvements wrt two parameters: (i)~the number of processes, and (ii)~the size of each process.

\begin{example}[Two processes of large size]
Consider the system $\System$ of $k=2$ processes of Figure~\ref{fig:m_comparison2},
and for $i\in\{1,\dots,n\}, j\in\{1,2\}$, denote by $\Write_i^j$ (resp. $\Read^j$) the $i$-th write event (resp. the read event) of $\Process_j$.
In any maximal trace, there are two ways to order the read events $\Read^1,\Read^2$, i.e., $\Read^j$ occurs before $\Read^{3-j}$ for the two choices of $j\in\{1,2\}$. In any such ordering, $\Read^{3-j}$ can only observe either $\Write^{3-j}_{n-1}$ or $\Write^{j}_{n-1}$, whereas there are at most $n+1$ possible write events for $\Read^j$ to observe (either $\Write^j_{n}$ or one of the $\Write^{3-j}_i$).
Hence $\TraceSpaceMax/\sim_{\Obs}$ has size $O(n)$.
In contrast, $\TraceSpaceMax/\sim_{M}$ has size $\Omega(\binom{2 \cdot n}{n})=\Omega(2^n)$,
as there are $(2\cdot n)!$ ways to order the $2\cdot n$ write events of the two processes,
but $n!\cdot n!$ orderings are invalid as they violate the program structure.
Hence, even for only two processes, the observation equivalence reduces the number of partitions from exponential to linear.
\begin{figure}[]
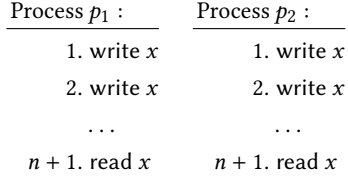

\small
\centering
\begin{subfigure}[t]{0.15\textwidth}
\begin{align*}
\text{Process}&~\Process_1:\\
\hline\\[-1em]
1.&~\mathsf{write}~x\\
2.&~\mathsf{write}~x\\
& \dots\\
n+1.&~\mathsf{read}~x
\end{align*}
\end{subfigure}
\quad
\begin{subfigure}[t]{0.15\textwidth}
\begin{align*}
\text{Process}&~\Process_2:\\
\hline\\[-1em]
1.&~\mathsf{write}~x\\
2.&~\mathsf{write}~x\\
& \dots\\
n+1.&~\mathsf{read}~x
\end{align*}
\end{subfigure}
\caption{An architecture of two processes with $n+1$ events each.}
\label{fig:m_comparison2}
\vspace{-1em}
\end{figure}

\end{example}

\begin{example}[Many processes of small size]
We now turn our attention to a system $\System$ of $k$ identical processes $\Process_1,\dots,\Process_k$ with two events each, in Figure~\ref{fig:m_comparison1}.
There is only one global variable $x$, and each process performs a read and then a write to $x$.
There are $O(k^k)$ realizable observation functions, by choosing for each one among $k$ read events, one among $k$ write events it can observe. Hence $\TraceSpaceMax/\sim_{\Obs}$ has size $O(k^k)$.
In contrast, the size of $\TraceSpaceMax/\sim_{M}$ is $\Omega((k!)^2)$.
This holds as there are $k!$ ways to order the $k$ write events, and for each such permutation there are $k!$ ways to assign each of the $k$ read events to the write event that it observes.
To see this second part, let $\Write_1,\dots,\Write_k$ be any permutation of the write events, and let $\Read_i$ be the read event in the same process as $\Write_i$. Then $\Read_i$ can be placed right after any $\Write_j$ with $i\leq j$.
Observe that $\TraceSpaceMax/\sim_{\Obs}$ is exponentially more succinct than $\TraceSpaceMax/\sim_{M}$, as 
\[
\frac{\Omega((k!)^2)}{O(k^k)} = \Omega\left(\frac{\prod_{i=1}^k i\cdot\lceil\frac{k}{i}\rceil }{k^k}\cdot \prod_{i=\lceil\frac{k}{2}\rceil+1}^{k-1}i\right)= \Omega(2^k).
\]
\begin{figure}[]
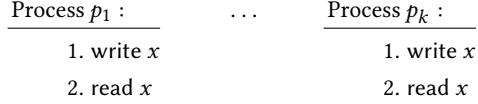

\small
\centering
\begin{subfigure}[t]{0.15\textwidth}
\begin{align*}
\text{Process}&~\Process_1:\\
\hline\\[-1em]
1.&~\mathsf{write}~x\\
2.&~\mathsf{read}~x
\end{align*}
\end{subfigure}
\qquad
\begin{subfigure}[t]{0.05\textwidth}
\begin{align*}
\dots
\end{align*}
\end{subfigure}
\qquad
\begin{subfigure}[t]{0.15\textwidth}
\begin{align*}
\text{Process}&~\Process_k:\\
\hline\\[-1em]
1.&~\mathsf{write}~x\\
2.&~\mathsf{read}~x
\end{align*}
\end{subfigure}
\caption{An architecture of $k$ processes with two events each.}
\label{fig:m_comparison1}
\vspace{-1em}
\end{figure}

\end{example}

\subsection{Solution Overview}
Traditional DPOR algorithms exploit the Mazurkiewicz equivalence, and use various techniques
such as persistent sets and sleep sets to explore each Mazurkiewicz class by few representative traces.
Our goal is to develop an analogous DPOR that utilizes the observation equivalence,
which by Theorem~\ref{them:comparison} is more succinct.
In high level, our approach consists of the following steps.
\begin{compactenum}
\item In Section~\ref{sec:annotations} we introduce the concept of annotations.
An annotation is a function from read to write events, and serves as an intended observation function.
Given an annotation, the goal is to obtain a trace whose observation function coincides with the annotation.
We restrict our attention to a certain class of \emph{well-formed} annotations, and show that although the problem is \NPC~in general,
it admits a polynomial time (in fact, cubic in the size of the trace) solution in acyclic architectures.
\item In Section~\ref{sec:enumerative} we present our data-centric DPOR. 
Section~\ref{sec:cones} introduces the notion of causal past cones in a trace.
The concept is similar to Lamport's \emph{happens-before} relation~\cite{Lamport78},
and is used to identify past events that may causally affect a current event in a trace.
We note that this concept is different from the happens-before relation used in the Mazurkiewicz equivalence.
We use the notions of annotations and causal cones to develop our algorithm, and prove its correctness and optimality 
(in Section~\ref{subsec:dcdpor}).
\item In Section~\ref{sec:cyclic_architectures} we extend our algorithm to cyclic architectures.
\end{compactenum}

Table~\ref{tab:not1} and Table~\ref{tab:not2} summarize relevant notation in the proofs.

\section{Annotations}\label{sec:annotations}

In this section we introduce the notion of \emph{annotations}, which are 
intended constraints on the observation functions that traces discovered by 
our data-centric DPOR ($\EnumExplore$) are required to meet.

\noindent{\bf Annotations.}
An \emph{annotation pair} $\Annotation=(\Annotationpos, \Annotationneg)$ is a pair of
\begin{compactenum}
\item a \emph{positive annotation} $\Annotationpos: \SysReads \pto \SysWrites$, and
\item a \emph{negative annotation} $\Annotationneg: \SysReads \pto 2^{\SysWrites}$
\end{compactenum}
such that for all read events $\Read$, if $\Annotationpos(\Read)=\Write$, then we have $\Confl(\Read, \Write)$ and it is not the case that $\PS(\Read, \Write)$.
We will use annotations to guide the recursive calls of $\EnumExplore$ towards traces that belong to different equivalence classes
than the ones explored already, or will be explored by other branches of the algorithm.
A positive annotation $\Annotationpos$ forces $\EnumExplore$ to explore traces that are compatible with $\Annotationpos$
(or abort the search if no such trace can be generated).
Since a positive annotation is an ``intended'' observation function, we say that a trace $\Trace$ \emph{realizes} $\Annotationpos$ if $\Obs_{\Trace}=\Annotationpos$, in which case $\Annotationpos$ is called \emph{realizable}.
A negative annotation $\Annotationneg$ prevents $\EnumExplore$ from exploring traces $\Trace$ in which a read event observes a write event that belongs to its negative annotation set (i.e., $\Obs_{\Trace}(\Read)\in \Annotationneg(\Read)$).
In the remaining section we focus on positive annotations, and the problem of deciding whether a positive annotation is realizable.

\noindent{\bf The value function $\Value_{\Annotationpos}$.}
Given a positive annotation $\Annotationpos$, we define the relation $<_{\Annotationpos}\subseteq \Image(\Annotationpos)\times \Domain(\Annotationpos)$ such that $\Write <_{\Annotationpos} \Read$ iff $(\Read, \Write)\in \Annotationpos$.
The positive annotation $\Annotationpos$ is \emph{acyclic} if the relation $\PS \cup <_{\Annotationpos}$ is a strict partial order (i.e., it contains no cycles).
The \emph{value function} $\Value_{\Annotationpos}:\Domain(\Annotationpos)\cup \Image(\Annotationpos)\to \ValueDomain$ of an acyclic positive annotation $\Annotationpos$ is the unique function defined inductively, as follows.
\begin{compactenum}
\item For each $\Write\in\Image(\Annotationpos)$ of the form $\Write:g\gets \mathsf{write}~f(v_1,\dots,v_{n_i})$, we have $\Value_{\Annotationpos}(\Write)=f(\alpha_1,\dots, \alpha_{n_i})$, where for each $\alpha_j$ we have
\begin{compactenum}
\item $\alpha_j=\Value_{\Annotationpos}(\Read)$ if there exists a read event $\Read\in \Domain(\Annotationpos)$ 
such that (i)~$\Read$ is of the form $\Read: v_j\gets \mathsf{read}~g'$ and (ii)~$\PS(\Read, \Write)$ and (iii)~there exists no other $\Read'\in \Domain(\Annotationpos)$ with $\PS(\Read, \Read')$ and which satisfies conditions (i)~and (ii).
\item $\alpha_j$ equals the initial value of $v_i$ otherwise.
\end{compactenum}
\item For each $\Read\in \Domain(\Annotationpos)$ we have $\Value_{\Annotationpos}(\Read)= \Value_{\Annotationpos}(\Annotationpos(\Read))$.
\end{compactenum}
Note that $\Value_{\Annotationpos}$ is well-defined, as for any read event $\Read$ that is used to define the value of a write event $\Write$ we have $\PS(\Read, \Write)$, and thus by the acyclicity of $\Annotationpos$, $\Value_{\Annotationpos}(\Read)$ does not depend on $\Value_{\Annotationpos}(\Write)$.

\begin{remark}\label{rem:annotationpos_value}
If $\Annotationpos$ is realizable then it is acyclic,
and for any trace $\Trace$ that realizes $\Annotationpos$ we have that $\Value_{\Trace}=\Value_{\Annotationpos}$.
\end{remark}

\noindent{\bf Well-formed annotations and basis of annotations.}
A positive annotation $\Annotationpos$ is called \emph{well-formed} if the following conditions hold:
\begin{compactenum}
\item\label{item:acyclicity} $\Annotationpos$ is acyclic.
\item\label{item:lock_obs} For every lock-release event $\Event_{a}\in \Image(\Annotationpos)\cap \SysAcquires$
there is at most one lock-acquire event $\Event_{r}\cap \SysReleases$ such that $\Annotationpos(\Event_{a})=\Event_{r}$.
\item\label{item:basis} There exist sequential traces $(\SeqTrace_i)_i$, one for each process $\Process_i$, such that each $\SeqTrace_i$ ends in a global event, and the following conditions hold. 
\begin{compactenum}
\item for every pair of lock-acquire events $\Event_a^1, \Event_a^2\in \Events{\SeqTrace_i}\cap \SysAcquires$ such that $\Index_{\SeqTrace_i}(\Event_a^1)<\Index_{\SeqTrace_i}(\Event_a^2)$ and $\Location(\Event_a^1)=\Location(\Event_a^2)$
there exists a lock release event $\Event_r\in \Events{\SeqTrace_i}\cap \SysReleases$ such that 
$\Index_{\SeqTrace_i}(\Event_a^1)<\Index{\SeqTrace_i}(\Event_r)< \Index{\SeqTrace_i}(\Event_a^2)$ and $\Location(\Event_r)=\Location(\Event_a^1)=\Location(\Event_a^2)$.
\item $\bigcup_i \Reads{\SeqTrace_i} = \Domain(\Annotationpos)$ and $\Image(\Annotationpos) \subseteq \bigcup_i \Writes{\SeqTrace_i}$,
i.e., $(\SeqTrace_i)_i$ contains precisely the read events of $\Annotationpos$ and a superset of the write events.
\item Each $\SeqTrace_i$ corresponds to a deterministic computation of process $\Process_i$, where the value of every global event $\Event$ during the computation is taken to be $\Value_{\Annotationpos}(\Event)$.
\end{compactenum}
\end{compactenum}

The sequential traces $(\SeqTrace_i)_i$ are called a \emph{basis} of $\Annotationpos$ if every $\SeqTrace_i$ is minimal in length.
The following lemma establishes properties of well-formedness and basis.

\begin{restatable}{lemma}{annotationbasis}
\label{lem:annotation_basis}
Let $X=\Domain(\Annotationpos)\cup \Image(\Annotationpos)$ be the set of events that appear in a positive annotation $\Annotationpos$, and $X_i=X\cap \SysEvents_i$ the subset of events of $X$ from process $\Process_i$.
The following assertions hold:
\begin{compactenum}
\item If $\Annotationpos$ is well-formed, then it has a unique basis $(\SeqTrace_i)_i$.

\item Computing the basis of $\Annotationpos$ (or concluding that $\Annotationpos$ is not well-formed)
can be done in $O(n)$ time, where $n=\sum_i(|\SeqTrace_i|)$
if $\Annotationpos$ is well-formed, otherwise $n= \sum_i \ell_i$,
where $\ell_i$ is the length of the longest path from the root $\Root_i$ of $\CFG_i$ to an event $\Event\in X_i$.

\item For every trace $\Trace$ that realizes $\Annotationpos$ we have that 
$\Annotationpos$ is well-formed and $\Trace\in \SeqTrace_1 \Conv \dots \Conv \SeqTrace_k$.
\end{compactenum}
\end{restatable}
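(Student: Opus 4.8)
The plan is to prove the three assertions essentially in order, leveraging the fact that each process $\Process_i$ is deterministic and its control-flow graph $\CFG_i$ is acyclic with a unique root $\Root_i$.

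\textbf{Uniqueness of the basis (assertion 1).} The key observation is that, given a well-formed $\Annotationpos$, the value function $\Value_{\Annotationpos}$ is already fixed on $X=\Domain(\Annotationpos)\cup\Image(\Annotationpos)$ by acyclicity (cf.\ the definition of $\Value_{\Annotationpos}$). First I would argue that, for each process $\Process_i$, the sequential trace $\SeqTrace_i$ is uniquely determined. Starting from $\Root_i$, the computation of $\Process_i$ is deterministic at every non-branching node, so the only source of ambiguity is at branching nodes. At a branching node the outgoing edge is selected by evaluating the guards $b_j$ on the current local valuation, and that valuation is pinned down by the values $\Value_{\Annotationpos}(\Read)$ assigned to the read events preceding the branch in program order (the mutual exclusivity condition $b_j\implies\neg b_l$ from the model guarantees at most one guard is satisfiable). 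Hence the path through $\CFG_i$ is forced. Minimality of length then fixes the endpoint: $\SeqTrace_i$ must stop at the first global event past which no event of $X_i$ remains (this is where condition~\ref{item:basis}(b) requiring $\bigcup_i\Reads{\SeqTrace_i}=\Domain(\Annotationpos)$ and the ``ends in a global event'' stipulation are used). Any two bases would therefore have to coincide edge-by-edge, giving uniqueness.

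\textbf{Computation and the $O(n)$ bound (assertion 2).} I would describe a single forward pass over each $\CFG_i$: walk from $\Root_i$, at each step determining the next edge using $\Value_{\Annotationpos}$ on the already-seen reads, and halt once all events of $X_i$ have been covered and the current event is global. Simultaneously one checks the acyclicity of $\PS\cup<_{\Annotationpos}$ (condition~\ref{item:acyclicity}), the lock-matching condition~\ref{item:lock_obs}, and the intervening-release condition~\ref{item:basis}(a) on each lock location. The subtle point for the running time is the stated dichotomy: if $\Annotationpos$ is well-formed the traversal succeeds and its cost is $\sum_i|\SeqTrace_i|=n$; if not, the traversal may abort early, and the cost is bounded by the longest root-to-$X_i$ path $\ell_i$, since we never need to walk past the deepest event of $X_i$ to detect the failure. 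I would make explicit that evaluating each branch guard and each value $\Value_{\Annotationpos}(\Event)$ is $O(1)$ under the standard assumption that $f,b$ are unit-cost, so the total is linear.

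\textbf{Realizability forces well-formedness and membership (assertion 3).} Suppose $\Trace$ realizes $\Annotationpos$, i.e.\ $\Obs_{\Trace}=\Annotationpos$. Acyclicity is immediate from Remark~\ref{rem:annotationpos_value}, and that remark also gives $\Value_{\Trace}=\Value_{\Annotationpos}$, which is exactly the value alignment needed for condition~\ref{item:basis}(c). I would obtain the required sequential traces by projecting: set $\SeqTrace_i := \Trace\Project\SysEvents_i$ restricted up to its last global event in $X_i$. Since $\Trace$ is a genuine trace of the system, each projection is a valid path in $\CFG_i$ from $\Root_i$, it is a deterministic computation of $\Process_i$ whose global events carry values $\Value_{\Trace}=\Value_{\Annotationpos}$, and it contains all the reads of $\Domain(\Annotationpos)$ together with (a superset of) the writes of $\Image(\Annotationpos)$, establishing conditions~\ref{item:basis}(b),(c). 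The lock conditions~\ref{item:lock_obs} and~\ref{item:basis}(a) follow from the assumption that $\Trace$ is a legitimate, properly lock-matched trace. Having verified well-formedness, uniqueness from assertion~1 identifies these projections (after trimming to minimal length) with the basis, and since $\Events{\Trace}=\bigcup_i\Events{\SeqTrace_i}$ by construction, $\Trace\in\SeqTrace_1\Conv\dots\Conv\SeqTrace_k$ by the definition of $\Conv$.

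\textbf{Main obstacle.} I expect the delicate part to be the determinacy-of-branching argument in assertion~1: one must argue that the \emph{partial} information in $\Annotationpos$ (only reads in $\Domain(\Annotationpos)$ and their observed writes) already determines every local valuation needed to resolve branches, rather than merely the global reads. This requires showing that any read influencing a guard lies in $\Domain(\Annotationpos)$ — which is where the interplay between well-formedness condition~\ref{item:basis}(b) and the inductive definition of $\Value_{\Annotationpos}$ must be invoked carefully, since a branch could in principle depend on a local variable set by an invisible computation whose inputs trace back through $\PS$ to annotated reads.
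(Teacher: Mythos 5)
Your proposal is correct and follows essentially the same route as the paper's proof: uniqueness via determinism of each process under the fixed value function $\Value_{\Annotationpos}$ plus minimality (the paper phrases this as a prefix-relationship contradiction rather than your direct forced-path argument, but the substance is identical), a single guided forward traversal of each $\CFG_i$ for the $O(n)$ bound, and projection of a realizing trace onto each process for the third assertion. The "main obstacle" you flag — that branch guards are resolved because condition 3(b) forces every read in the basis to lie in $\Domain(\Annotationpos)$ — is exactly the point the paper relies on (tersely) in its appeal to deterministic computation.
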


%
%
%

\subsection{The Hardness of Realizing Positive Annotations}\label{subsec:get_trace}

A core step in our data-centric DPOR algorithm is constructing a trace that realizes a positive annotation.
That is, given a positive annotation $\Annotationpos$, the goal is to obtain a trace $\Trace$ (if one exists) such that
$\Obs_{\Trace}=\Annotationpos$, i.e., $\Trace$ contains precisely the read events of $\Annotationpos$,
and every read event in $\Trace$ observes the write event specified by $\Annotationpos$.
Here, we show that the problem is \NPC~in the general case.
Membership in \NP~is trivial, since, given a trace $\Trace$, it is straightforward to verify that $\Obs_{\Trace}=\Annotationpos$ in $O(|\Trace|)$ time.
Hence our focus will be on establishing \NP-hardness.
For doing so, we introduce a new graph problem, namely \AEA, which is closely related to the problem of realizing a positive annotation under sequential consistency semantics.
We first show that \AEA~is \NPH, and afterwards that the problem is polynomial-time reducible to realizing a positive annotation.

\noindent{\bf The problem \AEA.}
The input to the problem is a pair $(G,H)$ where $G=(V,E)$ is a directed acyclic graph, and $H=\{(x_i,y_i, z_i)\}_i$ is a set of triplets of pairwise distinct nodes such that
\begin{compactenum}
\item $x_i,y_i,z_i\in V$, $(x_i,y_i)\in E$, and
\item each node $x_i$ and $y_i$ appears in a unique triplet of $H$.
\end{compactenum}
An \emph{edge addition set} $X=\{e_i\}_{i=1}^{|H|}$ for $(G,H)$ is a set of edges $e_i\in E$ such that for each $e_i$ we have either $e_i=(z_i,x_i)$ or $e_i=(y_i,z_i)$.
The problem \AEA~asks whether there exists an edge addition set $X$ for $(G,H)$ such that the graph $G_X=(V, E\cup X)$ remains acyclic.
The problem \UAEA~is similar to \AEA, with the restriction that every node $z_i$ appears in a unique triplet.

\begin{restatable}{lemma}{aea}
\label{lem:aea}
\AEA~is \NPH.
\end{restatable}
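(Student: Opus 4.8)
The plan is to reduce from $\SATMONOTONE$, which is known to be $\NPC$, thereby establishing that $\AEA$ is $\NPH$. The guiding intuition is that each triplet $(x_i, y_i, z_i)$ of $H$ behaves as a Boolean variable: since an edge addition set must pick exactly one of $(x_i, y_i)$ and $(z_i, x_i)$, the two choices encode the two truth values. Moreover, the two choices have dual effects on reachability. Because $(y_i, z_i) \in E$ already, adding $(x_i, y_i)$ makes $x_i$ reach $z_i$ (forward), whereas adding $(z_i, x_i)$ makes $z_i$ reach $x_i$ (backward). Cycles in $G_X$ can therefore arise only through the interaction of these added edges with the fixed DAG $G$, so the requirement that $G_X$ stay acyclic becomes a system of constraints forbidding certain combinations of choices. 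The reduction will arrange $G$ so that these forbidden combinations are exactly the truth assignments that violate the one-in-three condition.

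First I would fix a $\SATMONOTONE$ instance with variables $v_1, \dots, v_n$ and clauses $C_1, \dots, C_m$, each clause a set of three positive variables, and build $(G, H)$ as follows. For each occurrence of a variable inside a clause I introduce a choice gadget, i.e.\ a fresh triplet added to $H$; this respects the structural requirement that each $x_i$ and each $y_i$ appear only once in $H$, at the cost of creating several triplets per variable. For each clause $\{a, b, c\}$ I then build a clause gadget, namely a collection of directed paths in $G$ that close into a cycle precisely when the associated choices violate ``exactly one of $a, b, c$ is true''. Concretely, I would encode the \emph{at-most-one} part by three pairwise-conflict paths (a cycle forms whenever two of the three choices are simultaneously set to true) and the \emph{at-least-one} part by a single all-false path (a cycle forms when all three are set to false); since forbidding every true pair already rules out the all-true case, the surviving choices are exactly the one-in-three satisfying ones. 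Finally, because a variable spawns several triplets, I add consistency (equality) gadgets between the copies, designed so that a cycle forms exactly when two copies disagree, forcing all occurrences of a variable to share a single truth value.

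With the construction in place, the correctness argument splits into the two standard directions: a one-in-three satisfying assignment yields an edge addition set whose added edges, by design, never complete any cycle-inducing path, so $G_X$ is acyclic; conversely, any acyclic edge addition set reads off, via the forward/backward choice at each consistency-linked group, a consistent assignment satisfying every clause exactly once. The main obstacle I anticipate is the gadget engineering rather than the bookkeeping: I must verify that $G$ is genuinely acyclic before any additions, that the only cycles creatable in $G_X$ are the intended constraint-violation cycles (no spurious cycles threading through several gadgets at once), and that the triplets meet the side conditions $(x_i, y_i), (z_i, x_i) \notin E$ together with the once-per-node appearance requirement. Isolating gadgets by routing their internal paths through private vertices, and fixing a global topological numbering of the vertices that the chosen edges respect on satisfying assignments, should make the ``no spurious cycles'' property checkable, and this is where the bulk of the proof effort will lie.
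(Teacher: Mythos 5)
Your plan follows the same route as the paper's proof: a reduction from \SATMONOTONE{} in which each occurrence of a variable in a clause contributes one triplet of $H$, the two admissible edges of a triplet encode the two truth values, the at-most-one constraint is enforced by pairwise-conflict cycles, the at-least-one constraint by an all-false cycle, and the occurrences of a variable are tied together by a consistency mechanism. The one structural difference is that the paper does not duplicate the variable per occurrence: a single node $\Write'_{\ell}$ serves as the $x$-component of every triplet arising from variable $x_{\ell}$, and consistency comes for free from cross-clause edges $(\Write^{C}_{\ell},\Read^{D}_{\ell})$ and $(\Write^{D}_{\ell},\Read^{C}_{\ell})$ --- if one occurrence is set true (an edge out of $\Write'_{\ell}$) and another false (an edge into $\Write'_{\ell}$), the cycle $\Write'_{\ell}\to\Write^{C}_{\ell}\to\Read^{D}_{\ell}\to\Write'_{\ell}$ appears. (Your per-occurrence copies with explicit equality gadgets would in fact comply more cleanly with the stated requirement that each $x_i$ appear once in $H$, which the paper's shared node stretches; but you then owe an extra gadget whose own acyclicity interactions must be checked.)

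The genuine gap is that, as written, this is a plan rather than a proof: no gadget is actually constructed, and you explicitly defer the two claims that carry all the weight --- that $G$ is acyclic before any additions, and that the \emph{only} cycles creatable in $G_X$ are the intended constraint-violation cycles. The second claim is not routine. In the paper's argument the $(\Leftarrow)$ direction requires a case analysis on an arbitrary hypothetical cycle: either it visits some $\Write^{C}_{C_j}$, in which case following it forward forces a chain of ``true'' edges whose eventual return to a $\Write'$ node demands a ``false'' edge on the same variable, a contradiction; or it visits no such node, in which case it must be the all-false ring inside a single clause gadget, contradicting satisfaction. Without exhibiting concrete paths and carrying out an analysis of this kind (including cycles threading through your consistency gadgets), the reduction is not established, so the proposal should be regarded as a correct strategy with the decisive verification still missing.
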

\noindent{\em Sketch.}
The proof is by reduction from \SATMONOTONE~\cite[LO4]{Garey79}.
In \SATMONOTONE, the input is a propositional 3CNF formula $\phi$ in which every literal is positive,
and the goal is to decide whether there exists a satisfying assignment for $\phi$ that assigns exactly one literal per clause to $\True$.
The reduction proceeds as follows.
In the following, we let $C$ and $D$ range over the clauses and $x_i$ over the variables of $\phi$.
We assume w.l.o.g. that no variable repeats in the same clause.
For every variable $x_i$, we introduce a node $\Write'_i\in V$.
For every clause $C=(x_{C_1}\lor x_{C_2} \lor x_{C_3})$, we introduce a pair of nodes $\Write^C_{C_j}, \Read^C_{C_j}\in V$
and an edge $(\Write^C_{C_j}, \Read^C_{C_j})\in E$, where $j\in \{1,2,3\}$.
Additionally, we introduce an edge $(\Write^C_{C_j}, \Write'_{C_l})\in E$ for every pair $j,l\in\{1,2,3\}$ such that $j\neq l$,
and an edge $(\Write'_{C_j}, \Read^C_{C_l})$ for each $j\in \{1,2,3\}$, where $l=( j+1) \mod 3 + 1$.
Finally, for every pair of clauses $C, D$ and $l_1,l_2\in \{1,2,3\}$ such that $C_{l_1}=D_{l_2}=\ell$
(i.e., $C$ and $D$ share the same variable $x_{\ell}$ in positions $l_1$ and $l_2$),
we add edges $(\Write^{C}_{\ell}, \Read^{D}_{\ell}), (\Write^{D}_{\ell}, \Read^{C}_{\ell})\in E$.
The set $H$ consists of triplets of nodes $(\Write^C_{C_j},\Read^C_{C_j},\Write'_{C_j})$ for every clause $C$ and $j\in\{1,2,3\}$.
Figure~\ref{fig:reduction} illustrates the construction.

\begin{figure}[!t]
\removelatexerror
\centering
\newcommand{\distone}{1cm}
\footnotesize
\centering
\begin{tikzpicture}[->,>=stealth',shorten >=-2pt,shorten <=-2pt,auto,node distance=\distone,
                    semithick,scale=1 ]
      
\tikzstyle{every state}=[fill=white,draw=white,text=black,font=\small, thick, rectangle, inner sep=0.05cm, minimum size=0.4cm]
\tikzstyle{invis}=[fill=white,draw=white,text=white,font=\small , inner sep=-0.05cm]
\tikzstyle{dashed}=                  [dash pattern=on 6pt off 2pt]

\def\xstep{3}
\def\ydispl{-1}
\def\ydispg{-2.3}
\def\bendone{20}
\def\bendtwo{30}

\node[state, circle] (wC1) at (0*\xstep,0*\ydispg + 0*\ydispl) {$\Write^C_1$};
\node[state, circle] (rC1) at (0*\xstep,0*\ydispg + 1*\ydispl) {$\Read^C_1$};

\node[state, circle] (wC2) at (1*\xstep,0*\ydispg + 0*\ydispl) {$\Write^C_2$};
\node[state, circle] (rC2) at (1*\xstep,0*\ydispg + 1*\ydispl) {$\Read^C_2$};

\node[state, circle] (wC3) at (2*\xstep,0*\ydispg + 0*\ydispl) {$\Write^C_3$};
\node[state, circle] (rC3) at (2*\xstep,0*\ydispg + 1*\ydispl) {$\Read^C_3$};

\node[state, circle] (wD1) at (0*\xstep,1*\ydispg + 0*\ydispl) {$\Write^D_1$};
\node[state, circle] (rD1) at (0*\xstep,1*\ydispg + 1*\ydispl) {$\Read^D_1$};

\node[state, circle] (wD4) at (1*\xstep,1*\ydispg + 0*\ydispl) {$\Write^D_4$};
\node[state, circle] (rD4) at (1*\xstep,1*\ydispg + 1*\ydispl) {$\Read^D_4$};

\node[state, circle] (wD5) at (2*\xstep,1*\ydispg + 0*\ydispl) {$\Write^D_5$};
\node[state, circle] (rD5) at (2*\xstep,1*\ydispg + 1*\ydispl) {$\Read^D_5$};

\draw[thick] (wC1) to  (rC1);
\draw[thick] (wC2) to  (rC2);
\draw[thick] (wC3) to  (rC3);

\draw[thick] (wD1) to  (rD1);
\draw[thick] (wD4) to  (rD4);
\draw[thick] (wD5) to  (rD5);

\draw[thick, bend right=\bendone] (wC1) to  (rD1);
\draw[thick, bend right=0] (wD1) to  (rC1);

\node[state, circle] (w1) at (0.5*\xstep,0*\ydispg + 0.4*\ydispl) {$\Write'_1$};
\node[state, circle] (w2) at (1.5*\xstep,0*\ydispg + 0.4*\ydispl) {$\Write'_2$};
\node[state, circle] (w3) at (2.5*\xstep,0*\ydispg + 0.4*\ydispl) {$\Write'_3$};

\node[state, circle] (w4) at (1.5*\xstep,1*\ydispg + 0.4*\ydispl) {$\Write'_4$};
\node[state, circle] (w5) at (2.5*\xstep,1*\ydispg + 0.4*\ydispl) {$\Write'_5$};

\draw[thick, bend left=\bendtwo] (wC1) to  (w2);
\draw[thick, bend left=\bendtwo] (wC1) to  (w3);
\draw[thick, bend left=\bendtwo] (wC2) to  (w3);
\draw[thick, bend left=0] (wC2) to  (w1);
\draw[thick, bend left=0] (wC3) to  (w2);
\draw[thick, bend right=\bendtwo] (wC3) to  (w1);

\draw[thick, bend left=\bendtwo] (wD1) to  (w4);
\draw[thick, bend left=\bendtwo] (wD1) to  (w5);
\draw[thick, bend left=\bendtwo] (wD4) to  (w5);
\draw[thick, bend left=0] (wD4) to  (w1);
\draw[thick, bend left=0] (wD5) to  (w4);
\draw[thick, bend left=\bendone] (wD5) to  (w1);

\draw[thick, bend left=0] (w1) to  (rC2);
\draw[thick, bend left=0] (w1) to  (rD4);
\draw[thick, bend left=0] (w2) to  (rC3);
\draw[thick, bend left=0] (w4) to  (rD5);

\draw[thick, rounded corners] (w3) to (2.5*\xstep,0*\ydispg + -0.3*\ydispl) to [bend right=20] (0*\xstep,0*\ydispg + -0.3*\ydispl) to [bend right=40] (rC1);

\draw[thick, rounded corners] (w5) to (2.5*\xstep,1*\ydispg + 1*\ydispl)  to [bend left=15] (rD1);

\draw[thick, dashed, color=black] (w1) to  (wC1);
\draw[thick, dashed, color=black]  (rC1) to (w1);
\draw[thick, dashed, color=black] (w2) to  (wC2);
\draw[thick, dashed, color=black]  (rC2) to (w2);
\draw[thick, dashed, color=black] (w3) to  (wC3);
\draw[thick, dashed, color=black]  (rC3) to (w3);

\draw[thick, dashed, color=black] (w1) to  (wD1);
\draw[thick, dashed, color=black]  (rD1) to (w1);
\draw[thick, dashed, color=black] (w4) to  (wD4);
\draw[thick, dashed, color=black]  (rD4) to (w4);
\draw[thick, dashed, color=black] (w5) to  (wD5);
\draw[thick, dashed, color=black]  (rD5) to (w5);

\node[] at (3.5, 1.6)	{$\phi=\underbrace{(x_1\lor x_2 \lor x_3)}_{C} \land \underbrace{(x_1\lor x_4 \lor x_5)}_{D}$};

\end{tikzpicture}
\caption{The reduction of 3SAT over $\phi$ to \AEA~over $(G,H)$. The nodes and solid edges represent the graph $G$. The dashed edges represent the triplets in $H$.}
\label{fig:reduction}
\end{figure}
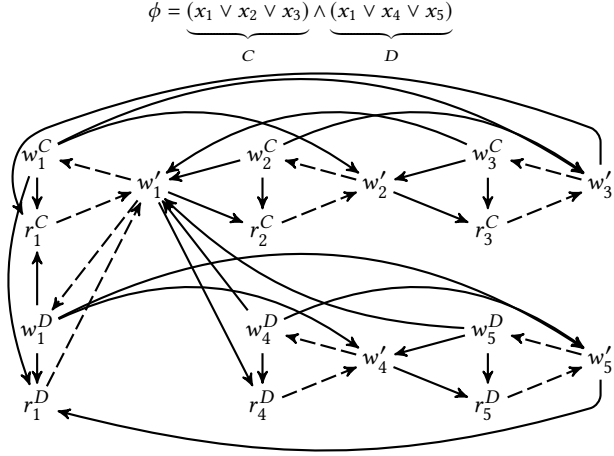

\noindent{\bf From \AEA~to \UAEA.}
Here we show that \UAEA~is \NPH, by a reduction from \AEA.
Let $(G=(V,E),H)$ be an instance of \AEA, fix a total order on the triplets of $H$, and denote by $T_i=(x_i, y_i, z_i)$ the $i$-th triplet of $H$.
We call two triplets $(x_i, y_i, z_i)$ and $(x_{i'}, y_{i'}, z_{i'})$ related if $z_i=z_{i'}$.
For simplicity, we make the following assumptions.
\begin{compactenum}
\item $G$ is transitively closed, as a graph has a cycle iff its transitive closure has a cycle. 
\item The set of nodes $V$ is precisely the set of nodes that appear in the triplets of $H$.
Indeed, any other node can be removed while maintaining the connectivity between the nodes in the triplets of $H$,
and any edge addition set solves the problem in the original graph iff it does so in the reduced graph.
\item If the $i$-th and $i'$-th triplets of $H$ are related, then $(x_i, y_{i'}), (x_{i'}, y_i)\in E$.
This is sound as the instances we construct in our reduction from \SATMONOTONE~ to \AEA~have this property.
\end{compactenum}

We now proceed with the reduction.
We construct at instance $(G'=(V',E'),H')$ of \UAEA, as follows.
\begin{compactenum}
\item For every triplet $T_i=(x_i, y_i, z_i)$ of $H$, we have $x_i, y_i, z_i^j\in V'$ and $(x_i, y_i)\in E'$, where $j$ equals one plus the number of triplets $T_{i'}=(x_{i'}, y_{i'}, z_{i'})$ that are related to $T_i$ and such that $i'< i$.
We add a triplet $(x_i, y_i, z_i^j)\in H'$.
If $j>1$, we also add $(z_{i'}^1, z_{i}^j)\in E'$, for the appropriate choice of $i'$.
\item For every triplet $T_{i}=(x_{i}, y_{i}, z_{i}^j)$ with $j>1$, we introduce nodes $a_{i}, b_{i}, c_{i}, d_{i}, e_{i}, f_{i}\in V'$, and triplets
$(a_{i}, b_{i}, c_{i}), (d_{i}, e_{i}, f_{i})\in H'$.
Let ${i'}$ be such that $(x_{i'}, y_{i'}, z_{i'}^1)$ is a triplet already in $H'$.
We add the edges $(a_{i}, x_{i'}), (y_{i'}, f_{i}), (f_{i}, b_{i})\in E'$
and $(d_{i}, z_{i'}^{1}), (z_{i}, c_{i}), (c_{i}, e_{i})\in E'$.
\end{compactenum}
Note that $(G',H')$ is polynomial in the size of $(G,H)$.
We now proceed with the correctness of the reduction.
If a pair of triplets $(x_i, y_i, z_i)$ and $(x_{i'}, y_{i'}, z_{i'})$ are related in $H$, we say that the corresponding pair $(x_i, y_i, z_i^j)$ and $(x_{i'}, y_{i'}, z_{i'}^{j'})$ of triplets in $H'$, for appropriate $j, j'$, are related.
First, we make some key observations.
\begin{compactenum}
\item\label{item:obs5} If $X$ is an edge addition set for $(G, H)$ then for every pair of related triplets $(x_i, y_i, z_i)$ and $(x_{i'}, y_{i'}, z_{i'})$ we have $(z_i, x_i)\in X$ iff $(z_{i'}, x_{i'})\in X$.
\item\label{item:obs1} For every $j>1$ and related triplets $(x_i, y_i, z_i^j)$, $(x_{i'}, y_{i'}, z_{i'}^1)$, for appropriate $i$, $i'$ and $j'$,
adding an edge $(z_{i'}^1, x_{i'})$ in $G'$ leads to $d_{i}\Path f_{i}$.
\item\label{item:obs2} For every $j>1$ and triplet $(x_i, y_i, z_i^j)$, for appropriate $i$, adding an edge $(e_{i}, f_{i})$ in $G'$ leads to $c_{i}\Path b_{i}$.
\end{compactenum}

\noindent{\em \AEA $\implies$ \UAEA.}
Consider an edge addition set $X$ for $(G,H)$.
We construct an  edge addition set $X'$ for $(G', H')$ as follows.
Given a triplet $(x_i, y_i, z_i)$ of $H_i$, let $j$ be such that $z_i$ occurs for the $j$-th time in a triplet of $H$.
\begin{compactenum}
\item\label{item:aea_uaea_step1} If $(z_i, x_i)\in X$ then we introduce $(z_i^j, x_i)\in X'$, otherwise we  introduce $(y_i, z_i^j)\in X'$.
\item\label{item:aea_uaea_step2} If $j>1$, if $(z_i^j, x_i)\in X$, we introduce $(e_i, f_i), (c_i, a_i)\in X'$. Otherwise, we introduce $(b_i, c_i), (f_i, d_i)\in X'$.
\end{compactenum}
Due to observation~\ref{item:obs5} above, the edges introduced step~\ref{item:aea_uaea_step1} do not lead to a cycle in $G'$.
It is also easy to see that after having introduced the edges of step~\ref{item:aea_uaea_step1}, introducing each pair of edges in step~\ref{item:aea_uaea_step2}  does not lead to a cycle. Hence $X'$ is an edge addition set for $(G', H')$.

\noindent{\em \UAEA $\implies$ \AEA.}
Consider an edge addition set $X'$ for $(G',H')$.
We construct an edge addition set $X$ for $(G, H)$ as follows.
For every triplet $(x_i, y_i, z_i)\in H$, we have $(z_i, x_i)\in X$ iff $(x_i, z_i^j)\in X'$,
where $j$ equals one plus the number of triplets $T_{i'}=(x_{i'}, y_{i'}, z_{i'}^{j'})$ that are related to $T_i$ and such that $i'< i$.
We now argue that $X$ is an edge addition set for $(G, H)$.
This fact is a consequence of the following observation: if $T_{i'}=(x_{i'}, y_{i'}, z_{i'}^1)$ is a triplet of $H'$, then for every triplet $T_{i}=(x_{i}, y_{i}, z_{i}^{j})$ of $H'$ that is related to $T_i$, we have $(z_{i'}^1, x_{i'})\in X'$ iff $(z_{i}^j, x_{i})\in X'$.
Indeed:
\begin{compactenum}
\item If $(z_{i'}^1, x_{i'})\not \in X'$ then $(y_{i'}, z_{i'}^1)\in X'$. But then $x_{i}\Path z_{i}^j$ and thus $(z_{i}^j,x_{i})\not \in X'$.
\item If $(z_{i'}^1, x_{i'})\in X'$ then by our observation~\ref{item:obs1} above, we have $d_{i}\Path f_{i}$ and thus $(f_{i}, d_{i})\not \in X'$, hence $(e_{i}, f_{i})\in X'$.
Then, by our observation~\ref{item:obs2} above, we have $c_{i}\Path b_{i}$ and thus $(b_{i}, c_{i})\not \in X'$, hence $(c_{i}, a_{i})\in X'$.
Finally, observe that adding $(c_{i}, a_{i})$ in $G'$ leads to $z_{i}^j\Path y_{i}$ and thus $(y_{i}, z_{i}^j)\not \in X'$, hence $(z_{i}^j, x_{i})\in X'$.
\end{compactenum}

\noindent{\bf From \UAEA~ to annotations.}
Finally, we argue that \UAEA~is polynomial-time reducible to realizing a positive annotation.
Given an instance $(G,H)$ of \UAEA, with $G=(V,E)$, we construct an architecture $\System$ of $k=2\cdot |H|$ processes $(\Process_i)_i$, and a positive annotation $\Annotationpos$.
Similarly to the previous step, we assume that $G$ is transitively closed and $V=\bigcup_i \{ x_i, y_i, z_i \}$, where $T_i=(x_i, y_i, z_i)$ ranges over the triplets of $H$.
The construction consists of the following steps.

\begin{compactenum}
\item For every triplet $T_i=(x_i,y_i,z_i)$, we introduce a global variable $g_i$.
We create two events $\Write_{i}\in \SysWrites$, $\Read_{i}\in \SysReads$ in $\Process_i$, and make $\PS(\Write_i, \Read_i)$.
In addition, we create an write event $\Write'_i\in \SysWrites$ in $\Process_{|H|+i}$, and make $\Location(\Write'_{i})=\Location(\Write_{i})=\Location(\Read_{i})=g_i$.
Finally, we introduce $(\Read_i, \Write_i)\in \Annotationpos$.
We associate $x_i$ (resp., $y_i$, $z_i$) with $\Write_i$ (resp., $\Read_i$, $\Write'_i$).
Given a node $u$ introduced in this step, we let $\Event(u)$ denote the event associated with node $u$.

\item For every node $u$ of the above step we introduce a new global variable $g_u$ and a write event $\Write_u\in \SysWrites$ with $\Proc(\Write_u)=\Proc(\Event(u))$ and $\PS(\Event(u), \Write_u)$ and $\Location(\Write_u)=g_u$.
For every edge $(u,v)$ we introduce a read event $\Read_{u,v}\in \SysReads$ with $\Proc(\Read_{u,v})=\Proc(\Event(v))$ and $\PS(\Read_{u,v}, \Event(v))$ and $\Location(\Read_{u,v})=g_u$.
Finally, we introduce $(\Read_{u,v}, \Write_{u})\in \Annotationpos$.
\end{compactenum}

Observe that the above construction is linear in the size of $(G, H)$.
The following lemma states the correctness of the reduction.

\begin{restatable}{lemma}{aatoann}
\label{lem:aatoann}
The decision problem of \UAEA~on input $(G=(V,E),H)$ admits a positive answer iff the positive annotation $\Annotationpos$ is realizable in $\System$.
\end{restatable}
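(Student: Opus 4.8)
The plan is to exhibit a tight correspondence between traces that realize $\Annotationpos$ and edge addition sets $X$ for which $G_X$ is acyclic. First I would fix the identification of nodes with events implied by the construction: since $(y_i,z_i)\in E$ is the only edge internal to a triplet and it is realized by $\PS(\Write_i,\Read_i)$, I set $\Event(y_i)=\Write_i$, $\Event(z_i)=\Read_i$ and $\Event(x_i)=\Write'_i$. Two structural observations then drive everything. (i) For each edge $(u,v)\in E$ the gadget $(\Write_{u,v},\Read_{u,v})$ together with $\PS$ forces, in every realizing trace, $\Index_{\Trace}(\Event(u))<\Index_{\Trace}(\Write_{u,v})<\Index_{\Trace}(\Read_{u,v})<\Index_{\Trace}(\Event(v))$, because $\Read_{u,v}$ must observe $\Write_{u,v}$ on a fresh variable; hence $\Event(u)$ precedes $\Event(v)$. (ii) For each triplet, $\Read_i$ observes $\Write_i$ while $\Write'_i$ is the only other write conflicting on $g_i$; since $\PS(\Write_i,\Read_i)$ already orders $\Write_i$ before $\Read_i$, realizability is equivalent to $\Write'_i$ not lying strictly between them, i.e.\ to exactly one of ``$\Write'_i$ before $\Write_i$'' (the edge $(x_i,y_i)$) or ``$\Read_i$ before $\Write'_i$'' (the edge $(z_i,x_i)$).

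For the direction ``trace $\Rightarrow$ solution'', I would take a trace $\Trace$ with $\Obs_{\Trace}=\Annotationpos$ and define $X$ by selecting, for each triplet, the edge $(x_i,y_i)$ if $\Write'_i$ precedes $\Write_i$ in $\Trace$ and $(z_i,x_i)$ otherwise; observation (ii) guarantees this choice is well defined and yields a valid edge addition set with exactly one edge per triplet. Acyclicity of $G_X$ is then immediate: the map $u\mapsto \Index_{\Trace}(\Event(u))$ is injective, every edge of $E$ is forward by (i) and every edge of $X$ is forward by construction, so $E\cup X$ embeds into a strict total order and contains no cycle.

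For the converse ``solution $\Rightarrow$ trace'', given $X$ with $G_X$ acyclic I would build the partial order $Q$ on all events generated by $\PS$, by $\Write_{u,v}\prec\Read_{u,v}$ for every edge, and by the triplet constraint dictated by $X$ ($\Write'_i\prec\Write_i$ if $(x_i,y_i)\in X$, else $\Read_i\prec\Write'_i$), with initialization events placed first. The key step is to show $Q$ is acyclic. For this I use the projection $\phi$ sending $\Event(u)$ and every outgoing write $\Write_{u,\cdot}$ to $u$, and every incoming read $\Read_{\cdot,u}$ to $u$. Under $\phi$ every generating relation of $Q$ either stays at a single node (the intra-cluster $\PS$ relations, which are linear because all events of a cluster lie in one process) or maps to an edge of $E\cup X$; here I must choose, inside the process $\Process_i$ containing both $\Event(y_i)$ and $\Event(z_i)$, the CFG linearization that places all $y_i$-cluster events before all $z_i$-cluster events, so that the only cross-cluster $\PS$ relations run along $(y_i,z_i)\in E$. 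Consequently any cycle in $Q$ would project to a closed walk in $G_X$ that is either trivial (contradicting linearity within a cluster) or a genuine cycle in $G_X$ (contradicting its acyclicity). Any linear extension of $Q$ then respects $\PS$ and, since the constructed model contains no locks or branches, is a genuine trace $\Trace$; finally I verify $\Obs_{\Trace}=\Annotationpos$ read by read, using that every read in $\System$ is annotated, that gadget reads see their unique writes, and that the $X$-dictated ordering keeps each $\Write'_i$ outside the interval between $\Write_i$ and $\Read_i$.

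The main obstacle is precisely this acyclicity of $Q$ in the converse direction: the subtlety is that $\PS$ imposes orderings among the many gadget events sharing a process (and between the two clusters $y_i,z_i$ inside $\Process_i$) that are not themselves edges of $G_X$, so I must pin down the CFG linearization and argue through the projection $\phi$ that these internal orderings never close a cycle absent from $G_X$. Once this is secured, both directions reduce to routine bookkeeping on top of observations (i) and (ii).
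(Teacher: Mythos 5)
Your proof is correct and follows essentially the same route as the paper: a realizing trace yields the edge addition set by reading off whether $\Write'_i$ falls before $\Write_i$ or after $\Read_i$, and conversely a topological order of $G_X$ is lifted to a trace realizing $\Annotationpos$. Your converse direction is in fact more careful than the paper's one-line claim, since you explicitly order all gadget events via the partial order $Q$ and justify its acyclicity through the cluster projection $\phi$, but this is a refinement of, not a departure from, the paper's argument.
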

%

\subsection{Realizing Positive Annotations in Acyclic Architectures}

We now turn our attention to a tractable fragment of the positive annotation problem.
Here we show that if $\System$ is an acyclic architecture, then the problem admits a polynomial-time solution
(in fact, cubic in the size of the constructed trace).

\noindent{\bf Intuition.}
The hardness of realizing positive annotations in general architectures comes from transitivity constraints that 
ensure that the resulting happens-before relation is acyclic. 
That is, for every triplet of events $\Event_1, \Event_2, \Event_3$,  deciding that (i)~$\Event_1$ happens before $\Event_2$ and (ii)~$\Event_2$ happens before $\Event_3$ must lead in (iii)~$\Event_1$ happening before $\Event_3$.
In general architectures, such a triplet of events is, in general, \emph{unrelated a-priori}, and hence an algorithm that constructs a trace out of a positive annotation need to make the above decisions (i)-(iii) consistently.
In contrast, acyclic architectures have the property that in every such triplet of events, a pair of them always belongs to the same process and thus is \emph{ordered a-priori} by the program structure. This allows to express transitivity constraints by means of a 2SAT encoding.
For example assume that $\Event_2,\Event_3$ belong to the same process and $\Event_2$ is ordered before $\Event_3$ by the program structure.
Then the transitivity constraints can be simply encoded in a 2SAT clause 
$(x_{\Event_1, \Event_2}\Rightarrow x_{\Event_1, \Event_3})$, 
where $x_{\Event, \Event'}$ is interpreted as a boolean variable indicating that $\Event$ happens before $\Event'$.
Besides transitivity constraints, observe that positive annotation constraints can also be encoded in a 2SAT clause.
That is, every positive annotation constraint $\Annotationpos(\Read)=\Write$ can be encoded in a 2SAT clause 
$(x_{\Write', \Read} \Rightarrow x_{\Write', \Write})$, 
for every write event $\Write'\neq \Write$ that conflicts with $\Write$.

\noindent{\bf Procedure $\Realize$.}
Let $\System$ be an acyclic architecture, and $\Annotationpos$ a positive annotation over $\System$.
We describe a procedure $\Realize(\Annotationpos)$ which returns a trace $\Trace$ that realizes $\Annotationpos$, or $\bot$ if $\Annotationpos$ is not realizable.
The procedure works in two phases.
In the first phase, $\Realize(\Annotationpos)$ uses Lemma~\ref{lem:annotation_basis} to extract a basis $(\SeqTrace_i)_i$ of $\Annotationpos$.
In the second phase, $\Realize(\Annotationpos)$ determines whether the events of $\bigcup_i \Events{\SeqTrace_i}$ can be linearized in a trace $\Trace$ such that $\Obs_{\Trace}=\Annotationpos$.
Informally, the second phase consists of constructing a 2SAT instance over variables $x_{\Event_1,\Event_2}$, where $\Event_1,\Event_2\in \bigcup_i\Events{\SeqTrace_i}$.
Setting $x_{\Event_1,\Event_2}$ to $\True$ corresponds to making $\Event_1$ happen before $\Event_2$ in the witness trace $\Trace$.
The clauses of the 2SAT instance capture four properties that each such ordering needs to meet, namely that
\begin{compactenum}
\item the resulting assignment produces a total order (totality, antisymmetry and transitivity) between all of the events that appear in adjacent processes in the communication graph $G_{\System}$,
\item the produced total order respects the positive annotation, i.e., every write event $\Write'$ that conflicts with an annotated read/write pair $(\Read,\Write)\in \Annotationpos$ must either happen before $\Write$ or after $\Read$, and
\item the produced total order respects the partial order induced by the program structure $\PS$ and the positive annotation $\Annotationpos$.
\end{compactenum}

\newsavebox{\constructgraph}
\begin{algorithm}
\small
\SetAlgoNoLine
\DontPrintSemicolon
\caption{$\Realize(\Annotationpos)$}\label{algo:realize}
\KwIn{A positive annotation $\Annotationpos$ with basis $(\SeqTrace_i)_i$ }
\KwOut{A trace $\Trace$ that realizes $\Annotationpos$ or $\bot$ if $\Annotationpos$ is not realizable}
\BlankLine
Construct a directed graph $G=(V,E)$ where\label{line:graph_G}\\
\Indp
- $V=\bigcup_i \Events{\SeqTrace_i}$, and\\
- $E=\{(\Event_1, \Event_2):~  (\Event_2,\Event_1)\in \Annotationpos \text{ ~or~ } \PS(\Event_1,\Event_2) \}$\\
\Indm
$G^*=(V, E^*)\gets$ the transitive closure of $G$\label{line:closure}\\
\tcp{A set $\mathcal{C}$ of 2SAT clauses over variables $V_\mathcal{C}$}
$\mathcal{C}\gets \emptyset$\\
$V_{\mathcal{C}}\gets \{x_{\Event_1, \Event_2}:~\Event_1, \Event_2\in V \text{ and } \Event_1\neq \Event_2 \text{ and either } \Proc(\Event_1)=\Proc(\Event_2) \text{ or } (\Proc(\Event_1), \Proc(\Event_2))\in E_{\System} \}$\label{line:variables}\\
\tcp{1.~Antisymmetry clauses}
\ForEach{$x_{\Event_1, \Event_2} \in V_{\mathcal{C}}$}{\label{line:antisymmetry}
$\mathcal{C}\gets \mathcal{C} \cup \{ (x_{\Event_1,\Event_2}\Rightarrow \neg x_{\Event_2, \Event_1}), (\neg x_{\Event_2, \Event_1} \Rightarrow x_{\Event_1, \Event_2}) \}$
}
\tcp{2.~Transitivity clauses}
\ForEach{$x_{\Event_1, \Event_2} \in V_{\mathcal{C}}$}{\label{line:transitivity}
\ForEach{$(\Event_2,\Event_3)\in E^*$}{
$\mathcal{C}\gets \mathcal{C} \cup \{ (x_{\Event_1, \Event_2} \Rightarrow x_{\Event_1, \Event_3})\}$
}
\ForEach{$(\Event_3, \Event_1)\in E^*$}{
$\mathcal{C}\gets \mathcal{C} \cup \{ (x_{\Event_1, \Event_2} \Rightarrow x_{\Event_3, \Event_2})\}$
}
}
\tcp{3.~Annotation clauses}
\ForEach{$(\Read, \Write)\in \Annotationpos$ and $\Write'\in V\cap \SysWrites$ s.t. $\Confl(\Read, \Write')$}{\label{line:annotation}
$\mathcal{C}\gets \mathcal{C} \cup \{ (x_{\Write', \Read} \Rightarrow x_{\Write', \Write})\}$
}
\tcp{4.~Fact clauses}
\ForEach{$(\Event_1, \Event_2)\in E^*$ with $\Event_1\neq \Event_2$}{\label{line:fact}
$\mathcal{C}\gets \mathcal{C} \cup \{(x_{\Event_1, \Event_2})\}$
}
Compute a satisfying assignment $f:V_{\mathcal{C}}\to \{\False, \True\}^{|V_{\mathcal{C}}|}$ of 
the 2SAT over $\mathcal{C}$, or \Return $\bot$ if $\mathcal{C}$ is unsatisfiable\label{line:2SAT}\\
$E'\gets E \cup \{(\Event_1, \Event_2):~f(x_{\Event_1, \Event_2})=\True\}$\\
Let $G'=(V,E')$\label{line:topological_order}\\
\Return a trace $\Trace$ by topologically sorting the vertices of $G'$\label{line:return_trace}
\end{algorithm}

The formal description of the second phase is given in Algorithm~\ref{algo:realize}.
The following theorem summarizes the results of this section.

\begin{restatable}{theorem}{annotation}\label{them:annotation}
Consider any architecture $\System=(\Process)_i$ and let $\Annotationpos$ be any well-formed positive annotation over a basis $(\SeqTrace)_i$.
Deciding whether $\Annotationpos$ is realizable is \NPC.
If $\System$ is acyclic, the problem can be solved in $O(n^3)$ time, where $n=\sum_i|\SeqTrace_i|$.
\end{restatable}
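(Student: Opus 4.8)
The plan is to assemble the theorem from the three results already established in this section, treating it as a summary corollary. I would split the statement into its two claims — the general \NPC~classification and the $O(n^3)$ bound for acyclic architectures — and dispatch each by invoking the appropriate lemma, so the proof is mostly a matter of chaining the reductions and stating the running time cleanly.

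First I would settle membership in \NP. Given a candidate trace $\Trace$, one checks in $O(|\Trace|)$ time that $\Trace$ contains exactly the read events of $\Annotationpos$ and that every read observes the write prescribed by $\Annotationpos$, i.e. $\Obs_{\Trace}=\Annotationpos$; this is precisely the verification already noted at the start of Section~\ref{subsec:get_trace}. For hardness I would compose Lemma~\ref{lem:aea} with Lemma~\ref{lem:aatoann}: the former gives that \AEA~is \NPH~(via the reduction from \SATMONOTONE), and the latter supplies, from an instance $(G,H)$, an architecture $\System$ and a positive annotation $\Annotationpos$ of size linear in $(G,H)$ such that $(G,H)$ is a yes-instance of \AEA~iff $\Annotationpos$ is realizable in $\System$. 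As this construction runs in polynomial time, realizing positive annotations is \NPH, and combined with membership it is \NPC. I would only need to add the routine observation that the $\Annotationpos$ output by the reduction is well-formed, so that the hardness applies already to the well-formed annotations named in the statement.

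For the acyclic case I would invoke Lemma~\ref{lem:realize} directly: when $\System$ is acyclic, the procedure $\Realize(\Annotationpos)$ either outputs a trace realizing $\Annotationpos$ or correctly reports that none exists, in $O(n^3)$ time with $n=\sum_i|\SeqTrace_i|$. Since a yes/no answer to realizability is an immediate byproduct of running $\Realize$, this yields the claimed cubic-time decision procedure for acyclic architectures.

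There is no genuinely difficult step remaining at the level of the theorem itself — all the technical weight sits in the preceding lemmas (the \SATMONOTONE-to-\AEA~gadget of Lemma~\ref{lem:aea} and the 2SAT-acyclicity argument of Lemma~\ref{lem:realize}). The one point warranting care is the bookkeeping across the two halves: I would check that the architecture built in Lemma~\ref{lem:aatoann} is in general cyclic, so the hardness genuinely concerns arbitrary architectures and is consistent with the tractability of the acyclic case, and that the two size parameters — the input size of $(G,H)$ driving the polynomial reduction and $n=\sum_i|\SeqTrace_i|$ bounding the running time of $\Realize$ — are kept separate and used consistently.
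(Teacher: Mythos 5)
Your proposal is correct and matches the paper's treatment: the theorem is stated there explicitly as a summary of the section, with membership in NP argued by the $O(|\Trace|)$ verification at the start of Section~\ref{subsec:get_trace}, hardness obtained by composing Lemma~\ref{lem:aea} with Lemma~\ref{lem:aatoann}, and the acyclic case delegated to Lemma~\ref{lem:realize}. Your added bookkeeping (well-formedness of the annotation produced by the reduction, and the fact that the constructed architecture is cyclic) is a sensible refinement of points the paper leaves implicit, but the route is the same.
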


\section{Data-centric Dynamic Partial Order Reduction}\label{sec:enumerative}

In this section we develop our data-centric DPOR algorithm called 
$\EnumExplore$ and prove its correctness and compactness,
namely that the algorithm explores each observation equivalence class of $\TraceSpace$ \emph{once}.
We start with the notion of causal past cones, which will help in proving the properties of our algorithm.

\subsection{Causal Cones}\label{sec:cones}

Intuitively, the causal past cone of an event $\Event$ appearing in a trace $\Trace$
is the set of events that precede $\Event$ in $\Trace$ and may be responsible for enabling $\Event$ in $\Trace$.

\noindent{\bf Causal cones.}
Given a trace $\Trace$ and some event $\Event\in \Events{\Trace}$, the \emph{causal past cone} $\Past_{\Trace}(\Event)$ of $\Event$ in $\Trace$ is the smallest set that contains the following events:
\begin{compactenum}
\item if there is an event $\Event'\in\Events{\Trace}$ with $\PS(\Event',\Event)$, then $\Event'\in \Past_{\Trace}(\Event)$,
\item if $\Event_1\in \Past_{\Trace}(\Event)$, for every event $\Event_2\in \Events{\Trace}$ such that
$\PS(\Event_2, \Event_1)$, we have that $\Event_2\in \Past_{\Trace}(\Event)$, and
\item if there exists a read event $\Read\in \Past_{\Trace}(\Event)\cap \SysReads$,
we have that $\Obs_{\Trace}(\Read)\in \Past_{\Trace}(\Event)$.
\end{compactenum}
In words, the causal past cone of $\Event$ in $\Trace$ is the set of events $\Event'$ that precede $\Event$ in $\Trace$
and may causally affect the enabling of $\Event$ in $\Trace$.
Note that for every event $\Event'\in \Past_{\Trace}(\Event)$ we have that $\HB{\Event'}{\Trace}{\Event}$,
i.e., every event in the causal past cone of $\Event$ also happens before $\Event$ in $\Trace$.
However, the inverse is not true in general, as e.g. for some read $\Read$ we have
$\HB{\Obs_{\Trace}(\Read)}{\Trace}{\Read}$ but possibly $\Obs_{\Trace}(\Read) \not \in \Past_{\Trace}(\Read)$.

\begin{remark}\label{rem:causal_past}
If $\Event'\in \Past_{\Trace}(\Event)$, then $\HB{\Event'}{\Trace}{\Event}$ and $\Past_{\Trace}(\Event')\subseteq \Past_{\Trace}(\Event)$.
\end{remark}

\begin{remark}\label{rem:casual_projection}
For every trace $\Trace$ and event $\Event\in\Events{\Trace}$ we have that $\Trace\Project (\Past_{\Trace}(\Event)\cup{\Event})$ is a valid trace.
\end{remark}

The following lemma states the main property of causal past cones used throughout the paper.
Intuitively, if the causal past of an event $\Event$ in some trace $\Trace_1$ also appears in another trace $\Trace_2$,
and the read events in the causal past observe the same write events in both traces, then $\Event$ is inevitable in $\Trace_2$,
i.e., every maximal extension of $\Trace_2$ will contain $\Event$.

\begin{restatable}{lemma}{inevitable}\label{lem:obs_to_enable_max}
Consider two traces $\Trace_1, \Trace_2$ and an event $\Event\in \Events{\Trace_1}$ such that 
for every read $\Read\in \Past_{\Trace_1}(\Event)$ we have $\Read\in \Events{\Trace_2}$ and $\Obs_{\Trace_1}(\Read)=\Obs_{\Trace_2}(\Read)$.
Then $\Event$ is inevitable in $\Trace_2$.
\end{restatable}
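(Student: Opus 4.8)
The plan is to prove the statement by induction on the structure of the causal past cone $\Past_{\Trace_1}(\Event)$, showing that every event in $\Past_{\Trace_1}(\Event)\cup\{\Event\}$ is inevitable in $\Trace_2$. The key idea is that the causal cone is, by its very definition, a self-contained set of events whose enabling depends only on events already inside the cone (via the program structure) together with the observation choices of the reads it contains. Since the hypothesis guarantees exactly that these observation choices agree across $\Trace_1$ and $\Trace_2$, all the information needed to replay the cone inside $\Trace_2$ is available.

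Let me sketch the steps. First I would order the events of $\Past_{\Trace_1}(\Event)\cup\{\Event\}$ according to $\HB{}{\Trace_1}{}$; by Remark~\ref{rem:causal_past} this respects the cone-inclusion structure, so I can process events in an order where each event's $\PS$-predecessors and its observed write (if it is a read) come earlier. I would then show, by induction along this order, that each such event $\Event'$ is inevitable in $\Trace_2$. For the inductive step, consider a maximal lock-free extension $\Trace_2'$ of $\Trace_2$; by induction hypothesis $\Trace_2'$ contains all $\PS$-predecessors of $\Event'$, so the process of $\Event'$ has advanced in $\Trace_2'$ at least up to the program-counter node where $\Event'$ would be taken. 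The only reasons $\Event'$ could fail to be taken are (i)~a branching condition guarding $\Event'$ evaluated to $\False$, or (ii)~in the lock-acquire case, the relevant lock semantics. For (i), the branch is determined by the values of local variables, which by Lemma~\ref{lem:obs_to_val} (applied using the agreement of observation functions on the reads in $\Past_{\Trace_1}(\Event')\subseteq \Past_{\Trace_1}(\Event)$, together with determinism of the process) take the same values in $\Trace_2'$ as in $\Trace_1$; hence the same branch is enabled, forcing $\Event'$ to appear. For lock-acquires, the well-formedness of the surrounding critical sections and maximality/lock-freeness of $\Trace_2'$ ensure the lock is eventually free, so the acquire is taken.

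The main obstacle I anticipate is the application of Lemma~\ref{lem:obs_to_val}: that lemma is stated for whole traces with $\Obs_{\Trace_1}\subseteq \Obs_{\Trace_2}$, whereas here I only have agreement on the reads inside the cone, and $\Trace_2$ need not even contain $\Event$. The fix is to work with the restricted trace $\Trace_1\Project(\Past_{\Trace_1}(\Event)\cup\{\Event\})$, which is a valid trace by Remark~\ref{rem:casual_projection}, and to observe that on this restricted trace the observation function is a subfunction of $\Obs_{\Trace_2'}$ precisely because of the cone-closure property~(3) in the definition of $\Past_{\Trace_1}$ and the hypothesis. This lets me invoke the value-agreement argument of Lemma~\ref{lem:obs_to_val} locally, so that every read in the cone sees the same value and every branching node resolves identically in $\Trace_2'$ as in $\Trace_1$. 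Once each event of the cone is shown inevitable, applying the same branching/determinism argument one final time to $\Event$ itself yields that $\Event$ is inevitable in $\Trace_2$, completing the proof.
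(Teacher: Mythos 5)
Your proposal is correct and follows essentially the same route as the paper's proof: the paper also establishes that every event of $\Past_{\Trace_1}(\Event)\cup\{\Event\}$ is inevitable, arguing via a minimal missing event in a maximal lock-free extension (equivalent to your induction along the cone order), splitting into the non-branching, lock-acquire, and branching cases, and invoking Lemma~\ref{lem:obs_to_val} on a projected trace exactly as you anticipate to resolve the branching case. The only cosmetic difference is that the paper projects $\Trace_2'$ onto the cone rather than $\Trace_1$, which serves the same purpose.
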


\subsection{Data-centric Dynamic Partial Order Reduction}\label{subsec:dcdpor}

\noindent{\bf Algorithm $\EnumExplore$.}
We now present our data-centric DPOR algorithm.
The algorithm receives as input a maximal trace $\Trace$ and annotation pair $\Annotation=(\Annotationpos, \Annotationneg)$,
where $\Trace$ is compatible with $\Annotationpos$.
The algorithm scans $\Trace$ to detect conflicting read-write pairs of events that are not annotated, i.e, a read event $\Read\in \Reads{\Trace}$ and a write event $\Write\in\Writes{\Trace}$ such that $\Read\not\in\Domain(\Annotationpos)$ and $\ConflRW(\Read, \Write)$.
If $\Write\not \in \Annotationneg(\Read)$, then $\EnumExplore$ will try to \emph{mutate} $\Read$ to $\Write$, i.e., the algorithm will push $(\Read, \Write)$ in the positive annotation $\Annotationpos$ and call $\Realize$ to obtain a trace that realizes the new positive annotation.
If the recursive call succeeds, then the algorithm will push $\Write$ to the negative annotation of $\Read$,
i.e., will insert $\Write$ to $\Annotationneg(\Read)$.
This will prevent recursive calls from pushing $(\Read, \Write)$ into their positive annotation.
Algorithm~\ref{algo:enumerative} provides a formal description of $\EnumExplore$.
Initially $\EnumExplore$ is executed on input $(\Trace,\Annotation)$ where $\Trace$ is some arbitrary maximal trace,
and $\Annotation=(\emptyset,\emptyset)$ is a pair of empty annotations.

\begin{algorithm}
\small
\DontPrintSemicolon
\caption{$\EnumExplore(\Trace, \Annotation)$}\label{algo:enumerative}
\KwIn{A maximal trace $\Trace$, an annotation pair $\Annotation=(\Annotationpos, \Annotationneg)$}
\BlankLine
\tcp{Iterate over reads not yet mutated}
\ForEach{$\Read\in \Events{\Trace}\setminus \Domain(\Annotationpos)$ in increasing index $\Index_{\Trace}(\Read)$}{\label{line:foreachread}
\tcp{Find conflicting writes allowed by $\Annotationneg$}
\ForEach{$\Write\in \Events{\Trace}$ s.t. $\Confl(\Read, \Write)$ and $\Write\not \in \Annotationneg(\Read)$}{\label{line:foreachwrite}
$\Annotationpos_{\Read, \Write}\gets \Annotationpos\cup \{(\Read, \Write)\}$ \label{line:strengthenpos} \\
\tcp{Attempt mutation and update $\Annotationneg$}
Let $\Trace'\gets \Realize(\Annotationpos_{\Read, \Write})$\label{line:get_trace}\\
\uIf{$\Trace'\neq \bot $}{
$\Trace''\gets$ a maximal extension of $\Trace'$\\
$\Annotationneg(\Read)\gets \Annotationneg(\Read)\cup \{\Write\}$\label{line:strengthenneg}\\
$\Annotation_{\Read, \Write}\gets (\Annotationpos_{\Read, \Write}, \Annotationneg)$\\
Call $\EnumExplore(\Trace'', \Annotation_{\Read,\Write})$ 
}
}
}
\end{algorithm}

We say that $\EnumExplore$ \emph{explores} a class of $\TraceSpace/\sim_{\Obs}$ when it is called on some annotation input $\Annotation=(\Annotationpos, \Annotationneg)$, where $\Annotationpos$ is realized by some (and hence, every) trace in that class.
The representative trace is then the trace $t'$ returned by $\Realize$.
The following two lemmas show the optimality of $\EnumExplore$, namely that the algorithm explores every such class at most once (\emph{compactness}) and at least once (\emph{completeness}).
They both rely on the use of annotations, and the correctness of the procedure $\Realize$ (Theorem~\ref{them:annotation}).
We first state the compactness property, which follows by the use of negative annotations.

\begin{restatable}[Compactness]{lemma}{compactness}
\label{lem:compactness}
Consider any two executions of $\EnumExplore$ on inputs $(\Trace_1, \Annotation_1)$ and $(\Trace_2,\Annotation_2)$.
Then $\Annotationpos_1\neq \Annotationpos_2$.
\end{restatable}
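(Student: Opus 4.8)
The plan is to organize the two executions by the recursion tree induced by $\EnumExplore$, whose root is the initial call on $(\Trace,(\emptyset,\emptyset))$ and whose tree edges are the recursive calls. Before any case analysis I would record two monotonicity invariants that hold along every root-to-leaf path. First, the positive annotation grows \emph{monotonically and functionally}: each recursive call replaces $\Annotationpos$ by $\Annotationpos\cup\{(\Read,\Write)\}$ for a read $\Read\in\Events{\Trace}\setminus\Domain(\Annotationpos)$ (Line~\ref{line:strengthenpos}), so $\Domain(\Annotationpos)$ strictly increases and the value $\Annotationpos(\Read)$ of a read, once set, is never altered. Second, the negative annotation grows monotonically, since the only modification to $\Annotationneg$ is the insertion at Line~\ref{line:strengthenneg}, and it is exactly the current $\Annotationneg$ that is handed to each recursive call.

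Given two distinct executions $E_1=(\Trace_1,\Annotation_1)$ and $E_2=(\Trace_2,\Annotation_2)$, I would take their least common ancestor $E=(\Trace,\Annotation)$ in the recursion tree. If $E$ coincides with $E_1$ or $E_2$, then one execution is a proper ancestor of the other, so by the first invariant at least one extra mutation separates them and $\Domain(\Annotationpos_1)\neq\Domain(\Annotationpos_2)$, whence $\Annotationpos_1\neq\Annotationpos_2$. Otherwise $E_1$ and $E_2$ lie in distinct subtrees of $E$, so the two paths leave $E$ through different children, created in the loops of $E$'s frame by mutations $(\Read^{(1)},\Write^{(1)})\neq(\Read^{(2)},\Write^{(2)})$; note both $\Read^{(1)},\Read^{(2)}\in\Events{\Trace}\setminus\Domain(\Annotationpos)$ by the outer loop condition (Line~\ref{line:foreachread}).

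The core of the argument is the analysis of this branching. If $\Read^{(1)}=\Read^{(2)}$ but $\Write^{(1)}\neq\Write^{(2)}$, the functional invariant gives $\Annotationpos_1(\Read^{(1)})=\Write^{(1)}\neq\Write^{(2)}=\Annotationpos_2(\Read^{(1)})$, so the positive annotations differ. If $\Read^{(1)}\neq\Read^{(2)}$, I would assume w.l.o.g. $\Index_{\Trace}(\Read^{(1)})<\Index_{\Trace}(\Read^{(2)})$, so the outer loop of $E$ processes $\Read^{(1)}$ entirely before $\Read^{(2)}$; in particular the successful mutation $\Read^{(1)}\to\Write^{(1)}$ inserts $\Write^{(1)}$ into $\Annotationneg(\Read^{(1)})$ (Line~\ref{line:strengthenneg}) strictly before the child toward $E_2$ is spawned. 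By the second invariant this membership $\Write^{(1)}\in\Annotationneg(\Read^{(1)})$ is inherited all along the path to $E_2$, and since any mutation requires the target write to be absent from the relevant negative set (Line~\ref{line:foreachwrite}), $\Read^{(1)}$ is never mutated to $\Write^{(1)}$ anywhere in $E_2$'s subtree. Hence either $\Read^{(1)}\notin\Domain(\Annotationpos_2)$ or $\Annotationpos_2(\Read^{(1)})\neq\Write^{(1)}$, whereas $\Annotationpos_1(\Read^{(1)})=\Write^{(1)}$; in both situations $\Annotationpos_1\neq\Annotationpos_2$.

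The step I expect to be the main obstacle is the different-reads subcase, which is the only place where the negative annotations actually do the work: one must argue both that the increasing-index iteration order of the outer loop deposits $\Write^{(1)}$ into $\Annotationneg(\Read^{(1)})$ \emph{before} $E_2$'s branch is created, and that this fact survives being passed down the entire $E_2$-subtree. This is precisely where the monotonicity of $\Annotationneg$ and the fixed iteration order are jointly essential; the same-read subcase and the ancestor case then follow immediately from the positive-annotation invariant.
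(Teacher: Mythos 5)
Your proof is correct and follows essentially the same route as the paper's: the same recursion-tree/least-common-ancestor decomposition, the same three cases (ancestor, same read with different writes, different reads), and the same use of the increasing-index iteration order together with the monotonicity of $\Annotationneg$ to rule out the mutation $\Read^{(1)}\to\Write^{(1)}$ anywhere in the other subtree. Your explicit statement of the two monotonicity invariants up front, and the slightly more careful conclusion in the different-reads case (either $\Read^{(1)}\notin\Domain(\Annotationpos_2)$ or $\Annotationpos_2(\Read^{(1)})\neq\Write^{(1)}$), are minor presentational improvements over the paper's version of the same argument.
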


We now turn our attention to completeness, namely that every realizable observation function is realized by a trace explored by $\EnumExplore$.
The proof shows inductively that if $\Trace$ is a trace that realizes an observation function $\Obs$,
then $\EnumExplore$ will explore a trace $\Trace_i$ that agrees with $\Trace$ on the first few read events.
Then, Lemma~\ref{lem:obs_to_enable_max} guarantees that the first read event $\Read$ on which the two traces disagree appears in $\Trace_i$, 
and so does the write event $\Write$ that $\Read$ observes in $\Obs$.
Hence $\EnumExplore$ either will mutate $\Read\to \Write$ (if $\Write\not\in \Annotationneg(\Read)$),
or it has already done so in some earlier steps of the recursion (if $\Write \in \Annotationneg(\Read)$).

\begin{restatable}[Completeness]{lemma}{completeness}
\label{lem:completeness}
For every realizable observation function $\Obs$, $\EnumExplore$ generates a trace $\Trace$ that realizes $\Obs$.
\end{restatable}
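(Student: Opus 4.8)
The plan is to fix once and for all a trace $\Trace^{*}$ that realizes $\Obs$ (which exists since $\Obs$ is realizable), enumerate its reads $\Read_1,\dots,\Read_m$ in increasing order of $\Index_{\Trace^{*}}$, and write $\Obs_i$ for the restriction of $\Obs$ to $\{\Read_1,\dots,\Read_i\}$. I would prove by induction on $i$ that $\EnumExplore$ reaches a recursive call $(\Trace,\Annotation)$ with $\Annotation=(\Annotationpos,\Annotationneg)$ satisfying $\Obs_i\subseteq\Annotationpos\subseteq\Obs$. The case $i=m$ forces $\Domain(\Annotationpos)=\Domain(\Obs)$ and hence $\Annotationpos=\Obs$, so the trace returned by $\Realize$ realizes $\Obs$, as desired. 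The base case $i=0$ is the initial call with the empty annotation.

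For the inductive step, assume the algorithm reaches $(\Trace,\Annotation)$ with $\Obs_i\subseteq\Annotationpos\subseteq\Obs$. If $\Read_{i+1}\in\Domain(\Annotationpos)$ then $\Annotationpos\subseteq\Obs$ already forces $\Annotationpos(\Read_{i+1})=\Obs(\Read_{i+1})$ and we are done for $i+1$. Otherwise set $\Write_{i+1}=\Obs(\Read_{i+1})$; by the observation-function axioms $\Confl(\Read_{i+1},\Write_{i+1})$ and $\Index_{\Trace^{*}}(\Write_{i+1})<\Index_{\Trace^{*}}(\Read_{i+1})$. The engine of the argument is the inevitability lemma (Lemma~\ref{lem:obs_to_enable_max}): every read in $\Past_{\Trace^{*}}(\Read_{i+1})$, and likewise in $\Past_{\Trace^{*}}(\Write_{i+1})$, has strictly smaller $\Trace^{*}$-index than $\Read_{i+1}$, hence lies among $\Read_1,\dots,\Read_i$; it is therefore annotated by $\Annotationpos$, and since $\Trace$ is compatible with $\Annotationpos$ it observes the same write in $\Trace$ as in $\Trace^{*}$. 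Applying Lemma~\ref{lem:obs_to_enable_max} to $\Read_{i+1}$ and to $\Write_{i+1}$ yields that both are inevitable in the maximal trace $\Trace$, i.e.\ $\Read_{i+1},\Write_{i+1}\in\Events{\Trace}$. Thus, when the outer loop of $\EnumExplore$ reaches $\Read_{i+1}$, the write $\Write_{i+1}$ is a genuine candidate in the inner loop.

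I then split on the negative annotation. If $\Write_{i+1}\notin\Annotationneg(\Read_{i+1})$, the algorithm forms $\Annotationpos\cup\{(\Read_{i+1},\Write_{i+1})\}\subseteq\Obs$, whose read-domain is downward closed in the $\Trace^{*}$-causal order; projecting $\Trace^{*}$ onto the union of the causal cones of the annotated reads (Remark~\ref{rem:casual_projection}, with Lemma~\ref{lem:obs_to_val} to verify that the observations are preserved under the projection) exhibits a witness trace, so this annotation is realizable and, by correctness of $\Realize$ (Theorem~\ref{them:annotation}), the recursive call is actually made. This is exactly the call required at stage $i+1$. If instead $\Write_{i+1}\in\Annotationneg(\Read_{i+1})$, then by the only mechanism that populates $\Annotationneg$ (Line~\ref{line:strengthenneg}), the pair $(\Read_{i+1},\Write_{i+1})$ was already mutated at an ancestor $z$ on the current branch (Line~\ref{line:strengthenpos}), spawning a sibling subtree whose root $z'$ has $\Annotationpos_{z'}=\Annotationpos_{z}\cup\{(\Read_{i+1},\Write_{i+1})\}\subseteq\Obs$.

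The main obstacle is precisely this second case. The node $z'$ has $\Read_{i+1}$ correctly annotated, but $\Annotationpos_{z}$ may lack some of $\Read_1,\dots,\Read_i$ that were added only below $z$; hence $z'$ need not satisfy the inductive invariant, and the naive progress measure $|\Obs|-|\Annotationpos|$ does not decrease when we redirect the argument to it. To close this I would analyse how $\Annotationneg$ is inherited: since $z$ processed and mutated $\Read_{i+1}$ strictly before the read on which the surviving branch diverged, the negative-annotation entries that $z'$ inherits on the skipped reads are no larger than those present on the surviving branch, so for each skipped $j\le i$ the mutation $\Read_j\to\Obs(\Read_j)$ is not itself blocked inside the subtree of $z'$. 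Re-running the construction from $z'$ therefore re-annotates the skipped reads and continues toward $\Obs$, and I would package this as a well-founded recursion on the finite recursion tree, each redirection descending into a strictly earlier-explored sibling subtree while the non-redirecting steps strictly reduce the unannotated-read count. Making this well-foundedness precise — guaranteeing that the chain of redirections terminates and never re-blocks a read it has already fixed — is, I expect, the delicate heart of the proof; by contrast the two applications of Lemma~\ref{lem:obs_to_enable_max} and the realizability of the intermediate annotations are routine once that lemma is in hand.
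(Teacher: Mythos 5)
Your overall architecture matches the paper's: maintain an annotation contained in $\Obs$, use Lemma~\ref{lem:obs_to_enable_max} to show that the next read--write pair of $\Trace^{*}$ appears in the current witness trace, and split on whether the write is negatively annotated, redirecting to the sibling subtree that performed the mutation in the blocked case. The gap is exactly where you place it, and your proposed patch does not close it. Two specific problems. First, your inductive invariant $\Obs_i\subseteq\Annotationpos\subseteq\Obs$ is destroyed by a redirection: the sibling $z'$ carries $\Annotationpos_{z'}=\Annotationpos_{z}\cup\{(\Read_{i+1},\Write_{i+1})\}$, where $\Annotationpos_{z}$ is in general a strict subset of the current annotation whose read-domain is \emph{not} a prefix of $\Read_1,\dots,\Read_i$ in the $\Trace^{*}$-order, so induction on $i$ with that invariant cannot be restarted at $z'$. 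The paper drops the prefix requirement and keeps only $\Annotationpos_u\subseteq\Obs$ as the invariant, accepting that progress is non-monotone and handling the non-monotone steps by a separate finiteness argument. Second, your claim that for each skipped $j\le i$ the mutation $\Read_j\to\Obs(\Read_j)$ is ``not itself blocked inside the subtree of $z'$'' is not what is needed and is not in general true: a skipped read may well have $\Obs(\Read_j)$ in its negative annotation there (its mutation may have been performed at a yet earlier sibling), which simply forces another redirection. The right move is not to prevent further redirections but to prove that they terminate.

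The termination argument, which is the piece your proof is missing, is short but concrete in the paper: for a fixed interior node $z$ of the recursion tree $\RecTree$, if $x_1,x_2,\dots$ are the successive children of $z$ to which the traversal redirects and $\Read_i$ is the unique read in $\Domain(\Annotationpos_{x_i})\setminus\Domain(\Annotationpos_{z})$, then $\Index_{\Trace_z}(\Read_{i+1})<\Index_{\Trace_z}(\Read_i)$, because the loop of Line~\ref{line:foreachread} processes reads in increasing index, a negative-annotation entry created at a child of $z$ is inherited only by later-created children, and two distinct children of $z$ annotating the same read to different writes cannot both have annotations contained in $\Obs$. Hence no child of $z$ repeats as a redirection target, each node is passed through finitely often, and since $\RecTree$ is finite the traversal reaches a node with $\Annotationpos=\Obs$. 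Your ``strictly earlier-explored sibling'' intuition is the same observation, but until it is turned into this per-node decreasing-index measure the proof is incomplete at its crux.
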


We thus arrive to the following theorem.

\begin{restatable}{theorem}{enumexplore}
\label{them:enum_explore}
Consider a concurrent acyclic architecture $\System$ of processes on an acyclic state space,
and $n=\max_{\Trace\in\TraceSpace}|\Trace|$ the maximum length of a trace of $\System$.
The algorithm $\EnumExplore$ explores each class of $\TraceSpace/\sim_{\Obs}$ exactly once,
and requires $O\left(|\TraceSpace/\sim_{\Obs}|\cdot n^5 \right)$ time.
\end{restatable}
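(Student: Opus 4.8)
The plan is to treat the two assertions separately, deriving the exactness claim from the two preceding lemmas and then bounding the running time by analyzing the recursion tree $\RecTree$ of $\EnumExplore$. For exactness I would first identify each class $c\in \TraceSpace/\sim_{\Obs}$ with the unique total observation function $\Obs_c$ realized by every trace of $c$; this is legitimate because, by the definition of $\sim_{\Obs}$ together with Lemma~\ref{lem:obs_to_valmax}, two maximal traces are observation equivalent exactly when they induce the same observation function. By the definition of ``explores a class'', $\EnumExplore$ explores $c$ precisely when it is invoked on an input whose positive annotation equals $\Obs_c$. Completeness (Lemma~\ref{lem:completeness}) produces, for every realizable $\Obs_c$, a node of $\RecTree$ with $\Annotationpos=\Obs_c$, so each class is explored at least once; Compactness (Lemma~\ref{lem:compactness}) shows that distinct nodes carry distinct positive annotations, so at most one node satisfies $\Annotationpos=\Obs_c$. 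Together these give ``exactly once''.

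For the running time I would study $\RecTree$ more closely. Every non-root node is spawned by a successful mutation, i.e.\ by a realizable positive annotation obtained from its parent by appending a single pair $(\Read,\Write)$; hence $|\Domain(\Annotationpos)|$ increases by one per level and the depth of $\RecTree$ is at most the number of reads of a trace, which is $O(n)$. I would then argue that every leaf carries a \emph{total} positive annotation: at any node that still has an unmutated read, the first such read $\Read$ in the order of Line~\ref{line:foreachread} has its entire causal past already annotated, so mutating $\Read$ to the write it observes in the current maximal trace is realizable (this is where Lemma~\ref{lem:obs_to_enable_max} and the realizability of the witnessing subtrace enter), creating a child; thus nodes with unmutated reads are never leaves. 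Consequently the leaves are exactly the full-annotation nodes, which by the exactness argument are in bijection with $\TraceSpace/\sim_{\Obs}$, so $\RecTree$ has exactly $|\TraceSpace/\sim_{\Obs}|$ leaves and, by the depth bound, $O(|\TraceSpace/\sim_{\Obs}|\cdot n)$ nodes in total.

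It remains to account for the work. At each node the two loops of Algorithm~\ref{algo:enumerative} range over $O(n^2)$ conflicting $(\Read,\Write)$ pairs, each calling $\Realize$. By Lemma~\ref{lem:realize} a well-formed annotation is handled in $O(n^3)$, whereas by Lemma~\ref{lem:annotation_basis} an ill-formed annotation is rejected in $O(n)$, so only well-formed candidates cost the cubic amount. The main obstacle, as I see it, is the amortization needed to reach the claimed exponent: the naive product of $O(n^2)$ cubic calls per node with $O(|\TraceSpace/\sim_{\Obs}|\cdot n)$ nodes yields $O(|\TraceSpace/\sim_{\Obs}|\cdot n^6)$, one factor of $n$ too much. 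To shave it I would separate the candidates at each node, charging the $O(n^3)$ cost only to the well-formed ones (those whose mutated read already has its full causal past annotated) and the cheap $O(n)$ cost to the rest, and then bound the total number of expensive invocations across $\RecTree$ by $O(|\TraceSpace/\sim_{\Obs}|\cdot n^2)$. This global separation of well-formed from ill-formed $\Realize$ calls, rather than the exactness or the node-count bookkeeping, is the delicate step; everything else is routine.
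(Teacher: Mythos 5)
Your first half---identifying each class with its total observation function and deriving ``exactly once'' from Lemma~\ref{lem:compactness} and Lemma~\ref{lem:completeness}---is exactly the paper's argument. The divergence is in the running-time analysis, and it is where the proposal breaks down. The paper does not analyze the recursion tree at all: it charges to each explored class the work of attempting all mutations on that class's witness trace, namely at most $n^2$ calls to $\Realize$ at $O(n^3)$ each by Theorem~\ref{them:annotation}, giving $O(n^5)$ per class (and it explicitly concedes this is a crude analysis). You instead try to count nodes of $\RecTree$, and two of your intermediate claims are not established. First, ``nodes with unmutated reads are never leaves'' does not follow: the write $\Obs_{\Trace}(\Read)$ observed by the first unmutated read in the current witness trace may already lie in $\Annotationneg(\Read)$ (placed there by an earlier sibling at Line~\ref{line:strengthenneg}), in which case that mutation is never attempted, and every other candidate write for every unmutated read may fail to be realizable; so a leaf need not carry a total annotation, and the bijection between leaves and classes---hence your $O(|\TraceSpace/\sim_{\Obs}|\cdot n)$ node count---is unsupported.

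Second, and decisively, the amortization that would bring your $n^6$ estimate down to $n^5$ is only asserted: you claim the expensive $\Realize$ invocations can be bounded globally by $O(|\TraceSpace/\sim_{\Obs}|\cdot n^2)$ but give no argument, and you yourself flag this as the delicate step. Since that is precisely the step that would yield the stated bound, the proof is incomplete as written. The remedy is not a sharper amortization but the paper's coarser accounting: exactness already pairs each explored class with the single call at which its observation function is realized, and the $O(n^2)\cdot O(n^3)$ product is charged per explored class rather than per node of $\RecTree$.
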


We note that our main goal is to explore the exponentially large $\TraceSpace/\sim_{\Obs}$ by spending polynomial time in each class.
The $n^5$ factor in the bound comes from a crude complexity analysis.

\section{Beyond Acyclic Architectures}\label{sec:cyclic_architectures}

In the current section we turn our attention to cyclic architectures.
Recall that according to Theorem~\ref{them:annotation}, procedure $\Realize$ is guaranteed to find a trace that realizes a positive annotation $\Annotationpos$, provided that the underlying architecture is acyclic.
Here we show that the trace space of cyclic architectures can be partitioned wrt an equivalence that is finer than the observation equivalence, but remains (possibly exponentially) coarser than the Mazurkiewicz equivalence.
The current section makes a formal treatment of cyclic architectures with the aim to prove that exponentially coarser equivalences can be used to guide the search.
We refer to our implementation in Section~\ref{sec:cyclic_architectures} for a description of how cyclic architectures are handled in practice.

\noindent{\bf Intuition.}
Recall that the architecture of the concurrent system is an acyclic graph, where nodes represent the processes of the system, 
and two nodes are connected by an edge if the respective processes communicate over a common shared variable.
In high level, our approach for handling cyclic architectures consists of the following steps.
\begin{compactenum}
\item We choose a set of edges $X$ such that removing all edges in $X$ makes the architecture acyclic.
Such a choice can be made arbitrarily.
\item For every variable that is used by at least two processes which have an edge in $X$, we introduce a fresh lock in the system.
\item We transform the concurrent program so that every write event to every such variable is protected by its respective lock.
\end{compactenum}
Intuitively, the new locks have the effect that when calling the procedure $\Realize$ for realizing a positive annotation $\Annotationpos$,
all write events protected by these locks are totally ordered by $\Annotationpos$ (via the lock-acquire and lock-release events).
Hence, $\Realize$ needs only to resolve orderings between read/write events to variables that (by the choice of $X$) create no cycles in the communication graph.

\noindent{\bf Architecture acyclic reduction.}
Consider a cyclic architecture $\System$, and the corresponding communication graph $G_{\System}=(V_{\System}, E_{\System}, \Weight_{\System})$.
We call a set of edges $X\subseteq E_{\System}$ an \emph{all-but-two cycle set} of $G_{\System}$ if every cycle of $G_{\System}$ contains at most two edges outside of $X$.
Given an all-but-two cycle set, $X\subseteq  E_{\System}$
we construct a second architecture $\System^X$, called the \emph{acyclic reduction} of $\System$ over $X$, by means of the following process.
\begin{compactenum}
\item Let $Y=\bigcup_{(\Process_i,\Process_j)\in X}\Weight_{\System}(\Process_i,\Process_j)$ be the set of variables that appear in edges of the set $X$.
We introduce a set of new locks $\ObservedLocks$ in $\System^X$ such that we have exactly one new lock $l_g\in \ObservedLocks$ for each variable $g\in Y$.
\item For every process $\Process_i$, every write event $\Write\in \SysWrites_i$ with $\Location(\Write)\in Y$ is surrounded by an acquire/release pair on the new lock variable $l_{\Location(\Write)}$.
\end{compactenum}

\noindent{\bf Observation equivalence refined by an edge set.}
Consider a cyclic architecture $\System$ and $X$ an edge set of the underlying communication graph $G_{\System}$.
We define a new equivalence on the trace space $\TraceSpace$ as follows.
Two traces $\Trace_1,\Trace_2\in \TraceSpace$ are \emph{observationally equivalent refined by $X$}, denoted by $\sim_{\Obs}^X$, if the following hold:
\begin{compactenum}
\item $\Trace_1\sim_{\Obs} \Trace_2$, and
\item for every edge $(\Process_i,\Process_j) \in X$, for every pair of distinct write events $\Write_1,\Write_2\in \Writes{\Trace_1}\cap (\SysWrites_i \cup \SysWrites_j)$
with $\Location(\Write_1)=\Location(\Write_2)=g$ and $g\in \Weight_{\System}(\Process_i, \Process_j)$,
we have that $\Index_{\Trace_1}(\Write_1)< \Index_{\Trace_1}(\Write_2)$ iff $\Index_{\Trace_2}(\Write_1)< \Index_{\Trace_2}(\Write_2)$
\end{compactenum}

Clearly, $\sim_{\Obs}^X$ refines the observation equivalence $\sim_{\Obs}$.
The following lemma captures that the Mazurkiewicz equivalence refines the observation equivalence refined by an edge set $X$.

\begin{restatable}{lemma}{obsrefined}
\label{lem:obs_refined}
For any two traces $\Trace_1,\Trace_2\in \TraceSpace$, if $\Trace_1\sim_{M} \Trace_2$ then $\Trace_1\sim_{\Obs}^X \Trace_2$.
\end{restatable}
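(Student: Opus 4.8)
The statement unpacks into the two defining conditions of $\sim_{\Obs}^X$, so my plan is to verify each of them in turn under the hypothesis $\Trace_1 \sim_M \Trace_2$. The first condition, $\Trace_1 \sim_{\Obs} \Trace_2$, is precisely the conclusion of Theorem~\ref{them:comparison}, so it requires no additional work and I would simply invoke that theorem. All the substance of the lemma therefore lies in the second condition.

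The second condition concerns the relative order of any two distinct write events $\Write_1, \Write_2 \in \Writes{\Trace_1} \cap (\SysWrites_i \cup \SysWrites_j)$ to a common variable $g$ lying on an edge of $X$. The key observation is that such a pair always conflicts: both are writes to the same location $g$, so $\Confl(\Write_1, \Write_2)$ holds, and conflicting events are dependent by definition. I would then rely on the fact, already recorded after the definition of $\HB{}{\Trace}{}$, that the happens-before relation orders all pairwise conflicting events; moreover, since $\HB{}{\Trace}{}$ is the transitive closure of pairs respecting the index order, it refines the index total order on each trace. Because distinct events always have distinct indices, this gives, for any conflicting pair, the clean equivalence $\Index_{\Trace}(\Write_1) < \Index_{\Trace}(\Write_2)$ iff $\HB{\Write_1}{\Trace}{\Write_2}$.

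Combining these, I would chain the equivalences across the two traces: $\Index_{\Trace_1}(\Write_1) < \Index_{\Trace_1}(\Write_2)$ iff $\HB{\Write_1}{\Trace_1}{\Write_2}$; by the hypothesis $\Trace_1 \sim_M \Trace_2$ this happens-before fact transfers, giving $\HB{\Write_1}{\Trace_1}{\Write_2}$ iff $\HB{\Write_1}{\Trace_2}{\Write_2}$; and the latter holds iff $\Index_{\Trace_2}(\Write_1) < \Index_{\Trace_2}(\Write_2)$. Reading the chain end to end yields exactly the required biconditional on indices, establishing the second condition.

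I do not anticipate a genuine obstacle: the argument is essentially a direct consequence of Theorem~\ref{them:comparison} together with the definition of $\sim_M$ applied to the dependent pair $\Write_1, \Write_2$. The only point demanding care is stating explicitly that, for conflicting (hence dependent, hence necessarily happens-before ordered) events, the happens-before order coincides with the index order in each trace; once that fact is isolated, both directions of the biconditional follow symmetrically with no further computation.
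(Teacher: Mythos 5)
Your proposal is correct and follows essentially the same route as the paper's proof: invoke the Mazurkiewicz-implies-observation result for the first condition, then note that two distinct writes to the same location conflict, hence are dependent, hence their happens-before (and thus index) order is preserved by $\sim_M$. Your citation of Theorem~\ref{them:comparison} for the first condition is in fact the correct reference (the paper's proof mistakenly cites Theorem~\ref{them:annotation} there).
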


\noindent{\bf Exponential succinctness of $\sim_{\Obs}^X$ in cyclic architectures.}
Here we present a very simple cyclic architecture where the observation equivalence $\sim_{\Obs}^X$ refined by an all-but-two cycle set $X$ is exponentially more succinct than the Mazurkiewicz equivalence $\sim_{M}$.
\begin{figure}[!h]
\centering
\small
\begin{subfigure}[t]{0.10\textwidth}
\begin{align*}
\text{Process}&~\Process_1:\\
\hline\\[-1em]
1.&~\mathsf{write}~x\\
2.&~\mathsf{read}~x\\
\end{align*}
\end{subfigure}
\quad
\begin{subfigure}[t]{0.10\textwidth}
\begin{align*}
\text{Process}&~\Process_2:\\
\hline\\[-1em]
1.&~\mathsf{write}~x\\
2.&~\mathsf{write}~y\\
& \dots\\
n+2.&~\mathsf{write}~y\\
n+3.&~\mathsf{read}~y\\
n+4.&~\mathsf{read}~x
\end{align*}
\end{subfigure}
\quad
\begin{subfigure}[t]{0.10\textwidth}
\begin{align*}
\text{Process}&~\Process_3:\\
\hline\\[-1em]
1.&~\mathsf{write}~x\\
2.&~\mathsf{write}~y\\
& \dots\\
n+2.&~\mathsf{write}~y\\
n+3.&~\mathsf{read}~y\\
n+4.&~\mathsf{read}~x
\end{align*}
\end{subfigure}
\caption{A cyclic architecture of three processes.}
\label{fig:m_comparison_cyclic}
\end{figure}

Consider the architecture $\System$ in Figure~\ref{fig:m_comparison_cyclic}, which consists of three processes and two single global variables $x$ and $y$.
We choose an edge set as $X=\{(\Process_1, \Process_2)\}$, and $X$ is an all-but-two cycle set of $G_{\System}$.
We argue that $\sim_{\Obs}^X$ is exponentially more succinct than $\sim_{M}$ by showing exponentially many traces which are pairwise equivalence under $\sim_{\Obs}^X$ but not under $\sim_{M}$.
Indeed, consider the set $T$ which consists of all traces such that the following hold
\begin{compactenum}
\item All traces start with $\Process_1$ executing to completion, then $\Process_2$ executing its first statement, and $\Process_3$ executing its first statement.
\item All traces end with the last three events of $\Process_2$ followed by the last two events of $\Process_3$.
\end{compactenum}
Note that $|T|=\binom{2 \cdot n}{n}$ as there are $(2\cdot n)!$ ways to order the $2\cdot n$ $\mathsf{write}~y$ events of the two processes,
but $n!\cdot n!$ orderings are invalid as they violate the program structure.
All traces in $T$ have the same observation function, yet they are inequivalent under $\sim_{M}$ since every pair of them orders two $\mathsf{write}~y$ events differently. 
Finally, $\TraceSpace/\sim_{\Obs}^X$ is only exponentially large, and since 
\[
|(\TraceSpace/\sim_{M}) \setminus (\TraceSpace/\sim_{\Obs}^x) | \geq |T|-1
\]
we have that $\sim_{\Obs}^X$ is exponentially more succinct than $\sim_M$.

\noindent{\bf Data-centric DPOR on a cyclic architecture.}
We are now ready to outline the steps of the data-centric DPOR algorithm on a cyclic architecture $\System$,
called $\EnumExploreCyclic$.
First, we determine an all-but-two cycle set $X$ of the underlying communication graph $G_{\System}=(V_{\System}, E_{\System}, \Weight_{\System})$,
and construct the acyclic reduction $\System^X$ of $\System$ over $X$.
The set $X$ can be chosen arbitrarily, e.g. by letting $|X|=|E_{\System}|-2$ (i.e., adding in $X$ all the edges of $G_{\System}$ except for two).
Then, we execute $\EnumExplore$ on $\System^X$, with the following two modifications on the procedure $\Realize$.
\begin{compactenum}
\item\label{item:mod1} Consider the graph $G=(V,E)$ constructed in Line~\ref{line:graph_G} of $\Realize$ (Algorithm~\ref{algo:realize}).
For every pair of write events $\Write, \Write'$ protected by some of the new locks $\ell\in \ObservedLocks$,
for every read event $\Read$ such that $\Annotationpos(\Read)=\Write$, if $(\Write, \Write')\in E$ then we add an edge $(\Read, \Write)$ in $E$,
and if $(\Write',\Read)\in E$, then we add an edge $(\Write', \Write)$ in $E$.
\item\label{item:mod2} If at the end of Item~\ref{item:mod1} $G$ has a cycle, $\Realize$ returns $\bot$.
\item\label{item:mod3} In Line~\ref{line:variables} we use the edge set $E_{\System^X}\setminus X$.
Hence for every variable $x_{\Event_1, \Event_2}$ used in the 2SAT reduction, we have either $\Proc(\Event_1)=\Proc(\Event_2)$ or $(\Proc(\Event_1), \Proc(\Event_2))\in E_{\System^X}\setminus X$.
\end{compactenum}

We arrive at the following theorem.

\begin{restatable}{theorem}{enumexplorecyclic}
\label{them:enum_explore_cyclic}
Consider a concurrent architecture $\System$ of processes on an acyclic state space,
and $n=\max_{\Trace\in\TraceSpace}|\Trace|$ the maximum length of a trace of $\System$.
Let $X$ be an all-but-two cycle set of the communication graph $G_{\System}$.
The algorithm $\EnumExploreCyclic$ explores each class of $\TraceSpace/\sim_{\Obs}^{X}$ exactly once,
and requires $O\left(|\TraceSpace/\sim_{\Obs}^{X}|\cdot n^5 \right)$ time.
\end{restatable}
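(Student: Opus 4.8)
The plan is to reduce the statement to the acyclic case of Theorem~\ref{them:enum_explore}, by viewing $\EnumExploreCyclic$ as $\EnumExplore$ executed on the acyclic reduction $\System^X$ with the modified procedure $\Realize$ of Lemma~\ref{lem:realize_modified}, and then translating the observation classes discovered on $\System^X$ back into the classes of $\TraceSpace/\sim_{\Obs}^{X}$ of the original $\System$. Concretely, I would prove the two optimality properties (compactness and completeness) on $\System^X$ exactly as in Section~\ref{subsec:dcdpor}, and separately establish a correspondence between the $\sim_{\Obs}$ classes of $\System^X$ and the $\sim_{\Obs}^{X}$ classes of $\System$.

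First I would argue that both optimality lemmas transfer. The compactness argument of Lemma~\ref{lem:compactness} only inspects the recursion tree $\RecTree$ and uses that the positive annotation $\Annotationpos$ and the negative annotation $\Annotationneg$ grow monotonically along every branch; it never appeals to acyclicity of the architecture. Hence it applies verbatim and guarantees that distinct recursive calls of $\EnumExploreCyclic$ carry distinct positive annotations, so no class is explored more than once. For completeness I would replay the proof of Lemma~\ref{lem:completeness}, substituting Lemma~\ref{lem:realize_modified} for Lemma~\ref{lem:realize} to ensure that each $\Trace_u$ still realizes $\Annotationpos_u$, and retaining Lemma~\ref{lem:obs_to_enable_max} unchanged, since its inevitability argument rests purely on causal cones $\Past_{\Trace}(\cdot)$ and process determinism and is therefore architecture-independent. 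Together these show that, run on $\System^X$, the algorithm explores each realizable observation function of $\System^X$ exactly once.

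The bridge from $\System^X$ back to $\System$ is the heart of the proof and the step I expect to be the main obstacle. I would define the canonical lift of a maximal lock-free trace of $\System$ to $\System^X$ by wrapping every write to a variable $g\in Y$ in an acquire/release pair on its dedicated lock $l_g\in\ObservedLocks$, and observe that this map is injective (erasing the lock events recovers the original trace). The key claim is then that two traces $\Trace_1,\Trace_2$ of $\System$ satisfy $\Trace_1\sim_{\Obs}^{X}\Trace_2$ if and only if their lifts are $\sim_{\Obs}$ equivalent in $\System^X$. The forward direction is the delicate one: the observation function restricted to the original read events supplies the $\sim_{\Obs}$ component, while the observation function on the inserted lock-acquire events records, through the release each one observes, exactly the serialization order of the locked writes. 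I would show that this serialization records neither more nor fewer ordering constraints than the write-ordering clause of $\sim_{\Obs}^{X}$ demands, using that $X$ is an all-but-two cycle set: the shared observation function already exposes, via the conflicting read events, every write order that is not among the $X$-edge writes, so the lock-induced orders beyond those required by $\sim_{\Obs}^{X}$ are in fact forced and carry no extra information (this is the same structural property exploited in the proof sketch of Lemma~\ref{lem:realize_modified}). Establishing this equivalence precisely is where the argument does its real work.

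Finally, the complexity bound follows exactly as in Theorem~\ref{them:enum_explore}. By the exactly-once property the algorithm performs $|\TraceSpace/\sim_{\Obs}^{X}|$ explorations; within each exploration it attempts at most $n^2$ mutations, one per conflicting read--write pair on the witness trace, and each mutation calls the modified $\Realize$, which costs $O(n^3)$ by Lemma~\ref{lem:realize_modified}. Multiplying yields the claimed total running time $O\!\left(|\TraceSpace/\sim_{\Obs}^{X}|\cdot n^5\right)$.
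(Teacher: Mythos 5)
Your proposal follows essentially the same route as the paper's proof: run the acyclic algorithm on the reduction $\System^X$, transfer the compactness and completeness lemmas (with $\Realize$ replaced by its modified version from Lemma~\ref{lem:realize_modified}), and use the fact that the observation function on the inserted lock-acquire events records the serialization of the protected writes to identify the explored classes with those of $\sim_{\Obs}^{X}$. The paper distributes your explicit \emph{lift} correspondence across a two-case analysis (read events versus lock-acquire events) inside the compactness argument rather than stating it as a standalone equivalence, but the substance and the complexity accounting are the same.
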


Theorem~\ref{them:enum_explore_cyclic} establishes that for cyclic architectures, $\EnumExploreCyclic$ explores a partitioning of the trace space that is coarser than the Mazurkiewicz partitioning, and spends only polynomial time per class. 
We refer to our implementation in Section~\ref{sec:cyclic_architectures} for a description of how cyclic architectures are handled in practice.
\section{Experiments}\label{sec:experiments}

Here we report on the implementation and experimental evaluation of our data-centric DPOR algorithm. 

\subsection{Implementation Details}\label{subsec:implementation}

\noindent{\bf Implementation.}
We have implemented our data-centric DPOR in C++,
by extending the tool Nidhugg\footnote{\url{https://github.com/nidhugg/nidhugg}}.
Nidhugg is a powerful tool that utilizes the LLVM compiler infrastructure, 
and hence our treatment of programs is in the level of LLVM's intermediate representation (IR).
Concurrent architectures are supported via POSIX threads.

\noindent{\bf Handling static arrays.}
The challenge in handling arrays (and other data structures) lies in the difficulty of determining whether two global events access the same location of the array (and thus are in conflict) or not.
Indeed, this is not evident from the CFG of each process, but depends on the values of indexing variables (e.g. the value of local variable $i$ in an access to $\mathsf{table}[i]$).
DPOR methods offer increased precision, as during the exploration of the trace space, backtracking points are computed dynamically, given a trace, where the value of indexing variables is known.
In our case, the value of indexing variables is also needed when procedure $\Realize$ is invoked to construct a trace which realizes a positive annotation $\Annotationpos$. 
Observe that the values of all such variables are determined by the value function $\Value_{\Annotationpos}$, and thus in every sequential trace $\SeqTrace_i$ of the basis $(\SeqTrace_i)_i$ of $\Annotationpos$ these values are also known.
Hence, arrays are handled naturally by the dynamic flavor of the exploration.

\noindent{\bf Handling cyclic architectures.}
In order to effectively handle cyclic architectures, we followed the following process.
Wlog, we considered that the input architecture always has the most difficult topology, namely it is a clique.
First, the cyclic architecture is converted to a star architecture, by choosing some distinguished process $\Process_1$ as the root of the star,
and the remaining processes $\Process_2,\dots \Process_k$ are the leaves.
Recall that a a positive annotation yields a sequential trace for each process.
We use the Mazurkiewicz equivalence to generate all possible Mazurkiewicz-based interleavings between traces of the leaf processes,
and our observation equivalence to generate all possible observation-based interleavings between the root and every leaf process.
Hence the observation equivalence is wrt the star sub-architecture, which is acyclic, and thus our techniques from Theorem~\ref{them:enum_explore} are applicable.
We note that since the Mazurkiewicz interleavings always have to be generated among sequential traces (i.e., straight-line programs),
we are generating them optimally (i.e., obtaining exactly one trace per Mazurkiewicz class) easily, using vector clocks~\cite{Mattern89}.
See Figure~\ref{fig:clyclic_star} for an illustration.

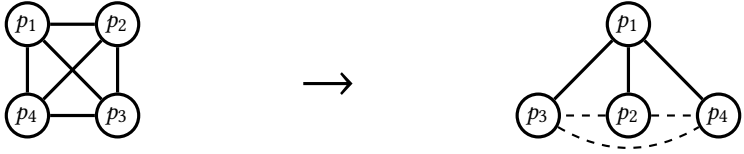
\begin{figure}
\newcommand{\distone}{1cm}
\small
\centering
\begin{tikzpicture}[->,>=stealth',shorten >=0pt,auto,node distance=\distone,
                    semithick,scale=1 ]
      
\tikzstyle{every state}=[fill=white,draw=black,text=black,font=\small, thick, inner sep=-1cm, minimum size=0.15cm]
\tikzstyle{invis}=[fill=white,draw=white,text=white,font=\small , inner sep=-0.05cm]

\newcommand{\xdisposition}{0}
\newcommand{\ydisposition}{0}
\newcommand{\ystep}{1.2}
\newcommand{\xstep}{1.2}

\node[circle, draw=black, very thick, minimum size=0.05mm, inner sep=2] (p1) at (\xdisposition + 0 * \xstep, \ydisposition + 0 * \ystep) {$\Process_1$};
\node[circle, draw=black, very thick, minimum size=0.05mm, inner sep=2]  (p2) at (\xdisposition + 1 * \xstep, \ydisposition + 0 * \ystep) {$\Process_2$};
\node[circle, draw=black, very thick, minimum size=0.05mm, inner sep=2] (p3) at (\xdisposition + 1 * \xstep, \ydisposition + -1 * \ystep) {$\Process_3$};
\node[circle, draw=black, very thick, minimum size=0.05mm, inner sep=2] (p4) at (\xdisposition + 0 * \xstep, \ydisposition + -1 * \ystep) {$\Process_4$};

\draw[-, very thick] (p1) to (p2);
\draw[-, very thick] (p1) to (p3);
\draw[-, very thick] (p1) to (p4);
\draw[-, very thick] (p2) to (p3);
\draw[-, very thick] (p2) to (p4);
\draw[-, very thick] (p3) to (p4);

\node[] at (4,-0.8) {\Huge$\rightarrow$};

\renewcommand{\xdisposition}{8}

\node[circle, draw=black, very thick, minimum size=0.05mm, inner sep=2] (q1) at (\xdisposition + 0 * \xstep, \ydisposition + 0 * \ystep) {$\Process_1$};
\node[circle, draw=black, very thick, minimum size=0.05mm, inner sep=2]  (q2) at (\xdisposition + 0 * \xstep, \ydisposition - 1 * \ystep) {$\Process_2$};
\node[circle, draw=black, very thick, minimum size=0.05mm, inner sep=2] (q3) at (\xdisposition -1 * \xstep, \ydisposition + -1 * \ystep) {$\Process_3$};
\node[circle, draw=black, very thick, minimum size=0.05mm, inner sep=2] (q4) at (\xdisposition + 1 * \xstep, \ydisposition + -1 * \ystep) {$\Process_4$};

\draw[-, very thick] (q1) to (q2);
\draw[-, very thick] (q1) to (q3);
\draw[-, very thick] (q1) to (q4);

\draw[-,  thick, dashed] (q2) to (q3);
\draw[-,  thick, dashed] (q2) to (q4);
\draw[-,  thick, bend right=30, dashed] (q3) to (q4);

\end{tikzpicture}
\caption{Converting a cyclic architecture to a star architecture which is acyclic.
On the star, solid edges correspond to observation equivalence interleavings, and dashed edges correspond to Mazurkiewicz equivalence interleavings.}
\label{fig:clyclic_star}
\end{figure}

\noindent{\bf Optimizations.}
Since our focus is on demonstrating a new, data-centric principle of DPOR, we focused on a basic implementation and avoided engineering optimizations.
We outline two straightforward algorithmic optimizations which were simple and useful.
\begin{compactenum}
\item (\emph{Burst mutations}). 
Instead of performing one mutation at a time, the algorithm performs a sequence of several mutations at once.
In particular, given a trace $\Trace$, any time we want to add a pair $(\Read, \Write)$ to the positive annotation,
we also add $(\Read', \Obs_{\Trace}(\Read'))$, where $\Read'\in \Past_{\Trace}(\Read)\cup \Past_{\Trace}(\Write)$
ranges over all read events in the causal past of $\Read$ and $\Write$ in $\Trace$.
This makes the recursion tree shallower, as now we do not need to apply any mutation $(\Read, \Write)$, where $\Write=\Obs_{\Trace}(\Read)$, individually.
\item (\emph{Cycle detection}). As a preprocessing step, before executing procedure $\Realize$ on some positive annotation $\Annotationpos$ input,
we test whether the graph $G$ (in Line~\ref{line:graph_G}) already contains a cycle.
The existence of a cycle is a proof that $\Annotationpos$ is not realizable, and requires linear instead of cubic time, as the graph is sparse.
\end{compactenum}

\subsection{Experimental Results}\label{subsec:experiments}

We now turn our attention to the experimental results.
Our comparison is with the Source-DPOR algorithm from~\cite{Abdulla14} and the tool Nidhugg that implements it~\cite{Abdulla2015}.
To our knowledge, Source-DPOR is the latest and state-of-the-art DPOR which has implemented for C programs.

\noindent{\bf Experimental setup.}
In our experiments, we have compared our data-centric DPOR, with the Mazurkiewicz-based Source-DPOR  introduced recently in~\cite{Abdulla14} as an important improvement over the traditional DPOR~\cite{Flanagan05}.
Our benchmark set consists of synthetic benchmarks, as well as benchmarks obtained from the TACAS Competition on Software Verification (SV-COMP).
Most of the benchmarks have tunable size, by specifying a loop-unroll bound, or the number of threads running in parallel.
In all cases, we compared th running time and number of traces explored by $\EnumExplore$ and Source-DPOR.
We have set a timeout of 1 hour.
All benchmarks were executed on an Ubuntu-based virtual machine, given 4GB of memory and one 2GHz CPU.

\noindent{\bf Two synthetic benchmarks.}
First we analyze the two synthetic benchmarks \emph{lastzero} and \emph{opt\_lock} found in Table~\ref{tab:experiments_lastzero} and Table~\ref{tab:experiments_optlock}, respectively.
The benchmark \emph{lastzero} was introduced in~\cite{Abdulla14} to demonstrate the superiority of Source-DPOR over the traditional DPOR from~\cite{Flanagan05}.
It consists of $n$ threads writing to an array, and $1$ thread reading from it.
We observe that our $\EnumExplore$ explores exponentially fewer traces than Source-DPOR.
In fact, the number of traces explored by our data-centric approach scales polynomially, whereas the number explored by the Mazurkiewicz-based approach grows exponentially with the number of threads.
Consequently, our $\EnumExplore$ runs much faster, and manages to scale on larger input sizes.
We note that the number of traces explored from Source-DPOR differs from the number reported in~\cite{Abdulla14}.
This is natural as the implementation of~\cite{Abdulla14} handles programs written in Erlang, a functional language with concurrency mechanisms much different from C
\footnote{We later noticed that our implementation of the benchmark differs slightly from that given in~\cite{Abdulla14}, which might also account for the difference in the reported traces}.

The benchmark \emph{\emph{opt\_lock}} mimics an optimistic locking scheme of 2 threads.
Each thread tries to update some variable, and afterwards checks if it was interrupted.
If not, it terminates, otherwise it tries again, up to a total $n$ number of attempts.
Again, we see that the number of explored traces by $\EnumExplore$ grows polynomially, whereas 
the number explored by Source-DPOR grows exponentially. 
Hence, our algorithm manages to handle much larger input sizes than the Mazurkiewicz-based Source-DPOR.
Recall that, as Theorem~\ref{them:enum_explore} states, this exponential reduction in the explored traces comes with polynomial-time guarantees per trace.

\begin{minipage}[t]{0.48\textwidth}
\begin{algorithm}[H]
\footnotesize
\SetKwInOut{Globals}{Globals}
\SetKwInOut{Locals}{Locals}
\setcounter{algocf}{-1}
\renewcommand{\thealgocf}{}
\SetAlgorithmName{\emph{lastzero}($n$)}{algorithm}
\SetAlgoNoLine
\DontPrintSemicolon
\caption{$n+1$ processes}
\Globals{
$\mathit{int}~\mathsf{array}[n+1]$\\
}
\BlankLine
\tcp{--------- Process $j=0$ ---------}
\Locals{$\mathit{int}~i$}
$i\gets n$\\
\While{$\mathsf{array}[i]\neq 0$}{
$i\gets i-1$
}
\BlankLine
\BlankLine
\BlankLine
\tcp{--------- Process $1< j\leq n$ ---------}
$\mathsf{array}[j]\gets \mathsf{array}[j-1] + 1$
\BlankLine
\BlankLine
\BlankLine
\end{algorithm}
\end{minipage}
\qquad
\begin{minipage}[t]{0.48\textwidth}
\begin{algorithm}[H]
\footnotesize
\SetKwInOut{Globals}{Globals}
\SetKwInOut{Locals}{Locals}
\setcounter{algocf}{-1}
\renewcommand{\thealgocf}{}
\SetAlgorithmName{\emph{opt\_lock}($n$)}{algorithm}
\SetAlgoNoLine
\DontPrintSemicolon
\caption{$2$ processes and $n$ attempts}
\Globals{
$\mathit{int}~\mathsf{last\_id}, x$\\
}
\BlankLine
\tcp{--------- Process $0< j<2$ ---------}
\Locals{$\mathit{int}~i$}
$i\gets 0$\\
\While{$i< n$}{
$i\gets i+1$\\
$\mathsf{last\_id}\gets j$\\
$x\gets \mathsf{get\_message}(j)$\\
\uIf{$\mathsf{last\_id}=j$}{
\Return $x$
}
}
\Return -1
\BlankLine
\BlankLine
\BlankLine
\end{algorithm}
\end{minipage}

\begin{table}
\caption{Experimental results on two synthetic benchmarks.}
\label{tab:synthetic_benchmarks}
\begin{subfigure}[t]{.49\textwidth}
\footnotesize
\setlength\tabcolsep{1pt}
\caption{Experiments on lastzero($n$), for $n+1$ threads.
'-' indicates a timeout after 1 hour.}
\label{tab:experiments_lastzero}
\begin{tabular}{|c||c c|c c|}
\hline
\textbf{Benchmark} & \multicolumn{2}{c|}{\textbf{Traces}} & \multicolumn{2}{c|}{\textbf{Time (s)}}\\
\hline
& DC-DPOR  & S-DPOR & DC-DPOR  & S-DPOR \\
\hline
\hline
lastzero(4) & 38  & 2,118 & 0.21 & 0.84\\
\hline
lastzero(5) & 113 & 53,172 & 0.34 & 19.29\\ 
\hline
lastzero(6) &  316 & 1,765,876 & 0.63 & 856 \\
\hline
lastzero(7) & 937 & \footnotesize{-} & 1.8 & \footnotesize{-}\\
\hline
lastzero(8) & 3,151 & \footnotesize{-} & 9.32 &  \footnotesize{-}\\
\hline
lastzero(9) & 12,190 & \footnotesize{-} & 47.97 & \footnotesize{-}\\
\hline
lastzero(10) & 52,841  & \footnotesize{-} & 383.12 &\footnotesize{-}\\
\hline
\end{tabular}
\end{subfigure}
\quad
\begin{subfigure}[t]{.47\textwidth}
\footnotesize
\setlength\tabcolsep{1pt}
\caption{Experiments on opt\_lock($n$), where $n$ is the number of attempts to optimistically lock.
'-' indicates a timeout after 1 hour.}
\label{tab:experiments_optlock}
\begin{tabular}{|c||c c|c c|}
\hline
\textbf{Benchmark} & \multicolumn{2}{c|}{\textbf{Traces}} & \multicolumn{2}{c|}{\textbf{Time (s)}}\\
\hline
& DC-DPOR  & S-DPOR & DC-DPOR  & S-DPOR \\
\hline
\hline
opt\_lock(12) & 141 & 785,674 & 0.35 &  252.64 \\
\hline
opt\_lock(13) & 153 & 2,056,918 & 0.36 &  703.90 \\
\hline
opt\_lock(14) & 165 & 5,385,078  & 0.43    & 1,880.12 \\
\hline
opt\_lock(15) & 177 & \footnotesize{-}  & 0.46   & \footnotesize{-} \\
\hline
opt\_lock(50) & 597 & \footnotesize{-} & 5.91 & \footnotesize{-}\\
\hline
opt\_lock(100) & 1,197 & \footnotesize{-} & 43.82 & \footnotesize{-}\\
\hline
opt\_lock(200) & 2,397 & \footnotesize{-} & 450.99 & \footnotesize{-}\\
\hline
\end{tabular}
\end{subfigure}
\end{table}

\noindent{\bf Benchmarks from SV-COMP.}
We now turn our attention to benchmarks from SV-COMP, namely 
\emph{fib\_bench}, \emph{pthread\_demo}, \emph{sigma\_false} and \emph{parker},
which are found in Table~\ref{tab:experiments_fib_bench}, Table~\ref{tab:experiments_pthread}, Table~\ref{tab:experiments_sigma} and Table~\ref{tab:experiments_parker}, respectively.
Similarly to our findings on the synthetic benchmarks, the data-centric $\EnumExplore$ manages to explore fewer traces than the Mazurkiewicz-based Source-DPOR.
In almost all cases, our algorithm run much faster, offering exponential gains in terms of time.
One exception is the benchmark \emph{parker}, where our $\EnumExplore$ is slower.
Although the number of traces explored is less than that of Source-DPOR,
the latter method managed to spend less time in discovering each trace, which led to a smaller overall time.
We note, however, that the improvement of Source-DPOR over $\EnumExplore$ appears to grow only as a small polynomial wrt the input size.
Recall that new traces are discovered by $\EnumExplore$ using the procedure $\Realize$, which can take cubic time in the worst case (Theorem~\ref{them:annotation}).
Hence, we identify optimizations to $\Realize$ as an important challenge that will contribute further to the scalability of our approach.

\begin{table}
\caption{Experimental results on four benchmarks from SV-COMP.}
\label{tab:svcomp_benchmnarks}
\begin{subfigure}[t]{.49\textwidth}
\vspace{0pt}
\begin{flushleft}
\footnotesize
\setlength\tabcolsep{1pt}
\caption{Experiments on fib\_bench($n$), where $n$ is the loop-unroll bound.}
\label{tab:experiments_fib_bench}
\begin{tabular}{|c||c c|c c|}
\hline
\textbf{Benchmark} & \multicolumn{2}{c|}{\textbf{Traces}} & \multicolumn{2}{c|}{\textbf{Time (s)}}\\
\hline
& DC-DPOR  & S-DPOR & DC-DPOR  & S-DPOR \\
\hline
\hline
fib\_bench(4) & 1,233 & 19,605 & 0.93  & 3.03 \\
\hline
fib\_bench(5) & 8,897 & 218,243 & 7.41 & 37.82 \\
\hline
fib\_bench(6) & 70,765  & 2,364,418  & 85.71  & 463.52 \\
\hline
\end{tabular}
\end{flushleft}
\end{subfigure}
\begin{subfigure}[t]{.49\textwidth}
\vspace{0pt}
\begin{flushright}
\footnotesize
\setlength\tabcolsep{1pt}
\caption{Experiments on pthread\_demo($n$), where $n$ is the loop-unroll bound.}
\label{tab:experiments_pthread}
\begin{tabular}{|c||c c|c c|}
\hline
\textbf{Benchmark} & \multicolumn{2}{c|}{\textbf{Traces}} & \multicolumn{2}{c|}{\textbf{Time (s)}}\\
\hline
& DC-DPOR  & S-DPOR & DC-DPOR  & S-DPOR \\
\hline
\hline
pthread\_demo(8) & 256 & 12,870 & 0.37 & 3.17 \\
\hline
pthread\_demo(10) & 1,024 & 184,756 & 1.23 & 49.51 \\
\hline
pthread\_demo(12) & 4,096 & 2,704,156 & 5.30 & 884.99 \\
\hline
\end{tabular}
\end{flushright}
\end{subfigure}
\\
\begin{subfigure}[t]{.49\textwidth}
\begin{flushleft}
\footnotesize
\setlength\tabcolsep{1pt}
\caption{Experiments on sigma\_false($n$), where $n$ is the loop-unroll bound.
'-' indicates a timeout after 1 hour.}
\label{tab:experiments_sigma}
\begin{tabular}{|c||c c|c c|}
\hline
\textbf{Benchmark} & \multicolumn{2}{c|}{\textbf{Traces}} & \multicolumn{2}{c|}{\textbf{Time (s)}}\\
\hline
& DC-DPOR  & S-DPOR & DC-DPOR  & S-DPOR \\
\hline
\hline
sigma\_false(6) & 16 & 10,395 & 0.22 & 2.57 \\
\hline
sigma\_false(7) & 22 & 135,135 &  0.26 & 38.41 \\
\hline
sigma\_false(8) & 29 & 2,027,025 & 0.28 & 658.27 \\
\hline
sigma\_false(9) & 37 & - & 0.38 & - \\
\hline
sigma\_false(10) & 46 & - & 0.44 & -\\
\hline
\end{tabular}
\end{flushleft}
\end{subfigure}
\quad
\begin{subfigure}[t]{.45\textwidth}
\begin{flushright}
\footnotesize
\setlength\tabcolsep{1pt}
\caption{Experiments on parker($n$), where $n$ is the loop-unroll bound.}
\label{tab:experiments_parker}
\begin{tabular}{|c||c c|c c|}
\hline
\textbf{Benchmark} & \multicolumn{2}{c|}{\textbf{Traces}} & \multicolumn{2}{c|}{\textbf{Time (s)}}\\
\hline
& DC-DPOR  & S-DPOR & DC-DPOR  & S-DPOR \\
\hline
\hline
parker(8) & 1,254 & 3,343 & 1.52 & 1.33 \\
\hline
parker(10) & 2,411 & 6,212 & 5.03 & 3.96\\
\hline
parker(12) & 4,132 & 10,361 & 8.09  & 5.62\\
\hline
parker(14) & 6,529 & 16,022  & 11.96  & 6.86\\
\hline
parker(16) & 9,714 & 23,427  & 19.89  & 10.85\\
\hline
\end{tabular}
\end{flushright}
\end{subfigure}
\end{table}

\section{Related Work}\label{sec:related}
The analysis of concurrent programs is a major challenge in program analysis and verification, 
and has been a subject of extensive study~\cite{Petri62,Cadiou73,Lipton75,Clarke86,Lal09,Farzan12,Farzan09}.
The hardness of reproducing bugs by testing, due to scheduling non-determinism,
makes model checking a very relevant approach~\cite{Godefroid05,Musuvathi07,Andrews04,Clarke00,Alglave13},
and in particular stateless model checking to combat the state-space explosion.
To combat the exponential number of interleaving explosion faced by the 
early model checking~\cite{Godefroid97},
several reduction techniques have been proposed such as POR and 
context bounding~\cite{Peled93,Musuvathi07}.
Several POR methods, based on persistent set~\cite{Clarke99,Clarke99,G96,Valmari91}
and sleep set techniques~\cite{Godefroid97},
have been explored.  
DPOR~\cite{Flanagan05} presents on-the-fly construction of persistent sets,
and several variants and improvements have been considered~\cite{Sen06,Lauterburg10,Tasharofi12,Saarikivi12,Sen06b}.
In~\cite{Abdulla14}, source sets and wakeup trees techniques were developed to make DPOR optimal, in the sense that 
the enumerative procedures explores exactly one representative from each Mazurkiewicz class.
Other important works include normal form representation of concurrent 
executions~\cite{Kahlon09} using SAT or SMT-solvers;
or using unfoldings for optimal reduction in number of interleavings~\cite{Kahkonen12,McMillan95,Sousa15}.
Techniques for transition-based POR for message passing programs have 
also been considered~\cite{G96,Godefroid95,Katz92},
and some works extend POR to relaxed memory models~\cite{Wang08,Abdulla2015}.

Another direction of DPOR is SAT/SMT-based, such as Maximal Causality Reduction (MCR)~\cite{HUANG15}, and SATCheck~\cite{Demsky15}. 
Such techniques require an NP oracle to guide each step of the search (i.e., a SAT/SMT solver), and thus suffer scalability issues.
On the other hand, they have the possibility of exploring fewer traces than the traditional DPOR methods.
Among these works, MCR~\cite{HUANG15} is closer to ours, hence we make a more extensive reference to it.

\noindent{\bf Comparison with~\cite{HUANG15}.}
Maximal Causality Reduction is a form of partial-order reduction that is based on coarsening the Mazurkiewicz equivalence.
The main principle is to try and explore traces in which read events observe different values (as opposed to different write events as in our case).
As a result, MCR can potentially create an equivalence that is coarser than the observation equivalence that we introduce here.
However, the MCR approach is very different from ours, in the following three important aspects.
\begin{compactenum}
\item MCR uses an SMT solver to explore each class of the partitioning.
In other words, the MCR relies on an NP-oracle, which has exponential worst-case complexity, even for exploring a single class of the partitioning.
In contrast, our approach spends provably polynomial time per class.
We also note that the experimental results of~\cite[Page~9]{HUANG15} identify the SMT procedure as the bottleneck of the whole approach, and ask for an efficient method for each explored trace. 
\item MCR is not optimal wrt to its partitioning. In fact, many equivalence classes of the partitioning can be explored exponentially many times.
We provide here a minimal example.
Consider the program depicted in Figure~\ref{subfig:mcr_processes}.
We have two processes $\Process_1, \Process_2$, and two global variables $x,y$.
The first processes is $\Process_1=\Write_y^1 \Read_x^1$, and the second process is $\Process_2=\Write_x^2 \Read_y^2$.
Additionally, we let $\Write_x^0$ and $\Write_y^0$ denote the two initialization write events.

\begin{figure}[!h]
\begin{subfigure}[b]{.48\textwidth}
\centering
\begin{subfigure}[t]{0.10\textwidth}
\begin{align*}
\text{Process}&~\Process_1:\\
\hline\\[-1em]
1.&~\mathsf{write}~y\\
2.&~\mathsf{read}~x\\
\end{align*}
\end{subfigure}
\quad
\begin{subfigure}[t]{0.10\textwidth}
\begin{align*}
\text{Process}&~\Process_2:\\
\hline\\[-1em]
1.&~\mathsf{write}~x\\
2.&~\mathsf{read}~y\\
\end{align*}
\end{subfigure}
\caption{}\label{subfig:mcr_processes}
\end{subfigure}
\begin{subfigure}[b]{.48\textwidth}
\centering
\begin{tikzpicture}[thick, >=latex,
pre/.style={<-,shorten >= 1pt, shorten <=1pt, thick},
post/.style={->,shorten >= 1pt, shorten <=1pt,  thick},
und/.style={very thick, draw=gray},
node1/.style={circle, minimum size=4mm, draw=black!100, line width=1pt, inner sep=0},
node2/.style={circle, minimum size=5mm, draw=black!100, fill=white!100, very thick, inner sep=0},
virt/.style={circle,draw=black!50,fill=black!20, opacity=0}]

\newcommand{\xdisposition}{0}
\newcommand{\ydisposition}{0}
\newcommand{\xstep}{1}
\newcommand{\ystep}{0.9}

\node[node1] (t0) at (\xdisposition  + 0 * \xstep, \ydisposition + 0*\ystep ){$a$};
\node[node1] (tx) at (\xdisposition  + -1 * \xstep, \ydisposition + -1*\ystep ){$b$};
\node[node1] (ty) at (\xdisposition  + 1 * \xstep, \ydisposition + -1*\ystep ){$c$};
\node[node1] (txy) at (\xdisposition  + -1 * \xstep, \ydisposition + -2*\ystep ){$d$};
\node[node1] (tyx) at (\xdisposition  + 1 * \xstep, \ydisposition + -2*\ystep ){$e$};

\draw[->, very thick] (t0) to node[left, label={[yshift=-6, xshift=-15]$\Read_x^1\to \Write_x^2$}]{} (tx);
\draw[->, very thick] (t0) to node[right, label={[yshift=-6, xshift=15]$\Read_y^2\to \Write_y^1$}]{} (ty);
\draw[->, very thick] (tx) to node[left]{$\Read_y^2\to \Write_y^1$} (txy);
\draw[->, very thick] (ty) to node[right]{$\Read_x^1\to \Write_x^2$} (tyx);

\end{tikzpicture}
\caption{}\label{subfig:mcr_recursion}
\end{subfigure}
\caption{A simple concurrent system of two processes, and the corresponding MCR exploration.}\label{fig:mcr_comparison}
\end{figure}

We now consider the MCR exploration of the above system.

\begin{compactenum}
\item Initially, the process starts in node $a$ with an empty seed interleaving.
The process will generate two seed interleavings $\{b,c\}$, forcing $\Read_x^1$ to observe the value of $\Write_x^2$, and $\Read_y^2$ to observe the value of $\Write_y^1$, respectively.
\item When in $b$, the process will create the seed interleaving $\{d\}$, forcing $\Read_y^2$ to observe the value of $\Write_y^1$.
\item When in $c$, the process will create the seed interleaving $\{e\}$, forcing $\Read_x^1$ to observe the value of $\Write_x^2$.
\end{compactenum}
Hence, all traces represented in the leaves $\{d, e\}$ of the above tree have the same observation function (and thus all reads observe the same values).
In contract, the optimality of our algorithm guarantees that the exploration will never explore both $d$ and $e$.

The example can easily be generalized to one where the MCR will explore the same class exponentially many times.
The only principle necessary to make the example work is that different branches of the recursion can accumulate the same read-to-write observations in different order, leading to the same read-to-write observations overall.
\end{compactenum}
Hence, compared to our approach, MCR suffers an exponentiation in complexity in two parts: (i)~visiting each class of the partitioning exponentially many times, and (ii)~using an NP-oracle with exponential worst-case behavior for each visit.

\section{Conclusions}\label{sec:conclusions}
We introduce the new observation equivalence on traces that refines the Mazurkiewicz equivalence and can even be exponentially more succinct. 
We develop an optimal, data-centric DPOR algorithm for acyclic architectures based on this new 
equivalence, and also extend a finer version of it to cyclic architectures.
There are several future directions based on the current work.
First, it is interesting to determine whether other, coarser equivalence classes can be  developed for cyclic architectures,
which can be used by some enumerative exploration of the trace space.
Another promising direction is phrasing our observation equivalence on other memory models and developing DPOR algorithms for such models.
Finally, it would be interesting to explore the engineering challenges in applying our approach to real-life examples,
such as using static analysis to obtain the best way for transforming cyclic architectures to acyclic.


\begin{acks}
The research was partly supported by Austrian Science Fund (FWF) Grant No P23499- N23, 
FWF NFN Grant No S11407-N23 (RiSE/SHiNE), ERC Start grant (279307: Graph Games), 
and Czech Science Foundation grant GBP202/12/G061.
\end{acks}

\bibliography{bibliography}




\end{document}